\begin{document}
\newcommand{\R}{\mathbb R}
\newcommand{\Z}{\mathbb Z}
\newcommand{\E}{\mathbb E}
\newcommand{\1}{\mathbbm 1}
\newcommand{\N }{\mathbb N}
\newcommand{\q }{\mathbb Q}
\newcommand{\p }{\mathbb P}
\newcommand{\rr }{\mathsf R}
\newcommand{\D }{\mathcal D}
\newcommand{\F }{\mathcal F}
\newcommand{\G }{\mathcal G}
\newcommand{\h }{\mathcal H}
\newcommand{\M }{\mathcal M}
\newcommand{\eps}{\varepsilon}
\newcommand{\argmax}{\operatornamewithlimits{argmax}}
\newcommand{\argmin}{\operatornamewithlimits{argmin}}
\newcommand{\esssup}{\operatornamewithlimits{esssup}}
\newcommand{\essinf}{\operatornamewithlimits{essinf}}
\newcommand{\sgn}{\operatorname{sgn}}

\newcommand{\nc}{\newcommand}
\nc{\bg}{\begin} \nc{\e}{\end} \nc{\bi}{\begin{itemize}} \nc{\ei}{\end{itemize}} \nc{\be}{\begin{enumerate}} \nc{\ee}{\end{enumerate}} \nc{\bc}{\begin{center}} 
\nc{\ec}{\end{center}} \nc{\fn}{\footnote} \nc{\bs}{\backslash} \nc{\ul}{\underline} \nc{\ol}{\overline} 
\nc{\np}{\newpage}  \nc{\fns}{\footnotesize} 
\nc{\scs}{\scriptsize} \nc{\RA}{\Rightarrow} \nc{\ra}{\rightarrow} \nc{\bfig}{\begin{figure}} \nc{\efig}{\end{figure}} \nc{\can}{\citeasnoun} 
\nc{\vp}{\vspace} \nc{\hp}{\hspace}\nc{\LRA}{\Leftrightarrow}\nc{\LA}{\Leftarrow}
\renewcommand{\tilde}{\widetilde}

 \allowdisplaybreaks

\nc{\ch}{\chapter}
\nc{\s}{\section}
\nc{\subs}{\subsection}
\nc{\subss}{\subsubsection}

\newtheorem{thm}{Theorem}[section]
\newtheorem{cor}[thm]{Corollary}
\newtheorem{lem}[thm]{Lemma}

\theoremstyle{remark}
\newtheorem{rem}[thm]{Remark}

\theoremstyle{example}
\newtheorem{ex}[thm]{Example}

\theoremstyle{ass}
\newtheorem{ass}[thm]{Assumption}

\theoremstyle{definition}
\newtheorem{df}[thm]{Definition}

\newenvironment{rcases}{
  \left.\renewcommand*\lbrace.
  \begin{cases}}
{\end{cases}\right\rbrace}

\setlength\parindent{0pt}
\noindent
\pagestyle{plain}

\title{Second order approximations for limit order books}

%For {amsart}-class:

   \author{Ulrich Horst}
   \address{Humboldt-Universit\"at zu Berlin, Germany}
   \email{horst@math.hu-berlin.de}

   \author{D\"orte Kreher}
   \address{Humboldt-Universit\"at zu Berlin, Germany}
   \email{kreher@math.hu-berlin.de}

	\begin{abstract}

In this paper we derive a second order approximation for an infinite dimensional limit order book model, in which the dynamics of the incoming order flow is allowed to depend on the current market price as well as on a volume indicator (e.g.~the volume standing at the top of the book). We study the fluctuations of the price and volume process relative to their first order approximation given in ODE-PDE form under two different scaling regimes. In the first case we suppose that price changes are really rare, yielding a constant first order approximation for the price. This leads to a measure-valued SDE driven by an infinite dimensional Brownian motion in the second order approximation of the volume process. In the second case we use a slower rescaling rate, which leads to a non-degenerate first order approximation and gives a PDE with random coefficients in the second order approximation for the volume process. Our results can be used to derive confidence intervals for models of optimal portfolio liquidation under market impact.

	\end{abstract}

  \subjclass[2010]{60F17, 91G80}
   \keywords{Functional central limit theorem, second order approximation, high frequency limit, limit order book}

\thanks{Financial support through the CRC TRR 190 is gratefully acknowledged.}

\maketitle

\renewcommand{\baselinestretch}{1.15}\normalsize
\setlength{\parskip}{5pt}

\s{Introduction}

A significant part of financial transactions is nowadays carried out through electronic limit order books. A limit order book is a record, maintained by an exchange or specialist, of unexecuted orders awaiting execution. Incoming limit orders can be placed at many different price levels, while incoming market orders are matched against standing limit orders according to a set of priority rules. The inherent complexity of limit order books renders their mathematical analysis challenging. In this paper we derive a second order approximation for an infinite dimensional limit order book model from individual order arrivals and cancelation dynamics. Depending on the choice of rescaling of the deviations of the actual order book dynamics from the first order approximation derived in \cite{HK1}, we get two second order approximations with different high-frequency dynamics. We illustrate how the different second order approximations can be used to derive confidence intervals for the liquidation values of large portfolios under different forms of market impact. 

Scaling limits for limit order markets have recently gained increasing attention in the probability and mathematical finance literature. Within a Markovian queuing model describing an order book with finitely many price levels diffusive high frequency limits for the price processes were for example obtained in \cite{AbergelJedidi2011,Cont1,Rosenbaum}. Furthermore, a diffusive behaviour for the volumes at the top of the book together with a non-diffusive price process was derived in \cite{Cont3}. In \cite{Lakner2} the authors study a one-sided measure-valued order book model, for which the scaling limit is given by a diffusive price process together with a block-shaped order book for the volumes. A more macroscopic perspective has been adopted in \cite{Marvin} and \cite{ZhengZ} where order book dynamics are described as the solution to SPDEs.

To provide microscopic foundations for PDE or SPDE models one has to consider an infinite number of queues together with a tick size converging to zero in the high frequency limit. 
Depending on the scaling assumptions either fluid limits (cf.~\cite{Gao}, \cite{HK1}, \cite{HP}) or diffusion limits (cf.~\cite{BHQ}, \cite{HK2}) for measure-valued order book models have been derived in the literature. However, to the best of our knowledge no genuine second order approximation for measure-valued limit order book models has been studied so far. Only \cite{Guo} considers a second order approximation in a limit order book setting, albeit for an individual order position and not for the whole limit order book.  

In \cite{HK1} we have studied a class of multiscale Markovian limit order book models, which were characterized by two different time scales: a fast one corresponding to the volume changes and a slow one corresponding to the price changes. Since price changes occur much less frequently than limit order placements in real world markets, this seemed to be a reasonable modelling assumption; see \cite{HP} for empirical evidence. Under certain assumptions on the scaling parameters we have shown that  the discrete limit order book dynamics converges in probability to the solution of a deterministic partial differential equation. The solution can be used to obtain endogenous shape functions for models of optimal portfolio liquidation. In such models the goal is to find optimal strategies for unwinding large numbers of shares over small time periods. They typically assume that the dynamics of standing buy (or sell) side volumes can be described in terms of {\sl exogenous} shape functions. Calibrating the model parameters to market data (see \cite{HP} for a first calibration approach) the previously obtained first order approximation allows for a derivation of {\sl endogenous} shape functions from observable order arrival and cancelation dynamics. Following the same modelling framework this paper derives two second order approximations and illustrates how they can be used to obtain confidence intervals for the liquidation value of large portfolios.

We show that it is {\it not} possible to derive {\it simultaneously} a first order approximation and a second-order approximation which are driven by {\it both}, random fluctuations resulting from price changes {\it and} random fluctuations resulting from the placement and cancelation of limit orders. The intuitive reason is that price and volume dynamics evolve on different time scales. Specifically, our main result shows that  - depending on the choice of rescaling rate - the second order approximation to the limit order book dynamics converges in law to a degenerate SDE-SPDE system. Assuming that the first order approximation of the price process is constant, we can rescale by a fast rescaling rate, under which the price fluctuations can be described by an SDE and the fluctuations of the volumes can be described by an infinite dimensional SDE in the scaling limit. If we allow for a non-degenerate first order approximation, we have to use a slower rescaling rate, under which the fluctuations of the price process converge to an Ornstein-Uhlenbeck process, coupled with a partial differential equation with random coefficients describing the fluctuations of the volume process. 

Our scaling results are intuitive. In high-frequency markets, many placements and cancelations occur inbetween price changes. Hence on the level of event time, prices are constant over short time periods. Over such periods, the fluctuations of the volume density functions are most naturally described by fluctuations resulting from order placements and cancelations. Over longer time periods, price fluctuations are most naturally described by diffusion processes and it is reasonable to assume that the fluctuations of relative (to the best bid and ask prices) volumes result primarily from price fluctuations. Within a portfolio liquidation model we show that the different scaling regimes correspond to different forms of market impact. 

This paper is structured as follows: In the next section we recall the modelling framework of and the weak law of large numbers derived in \cite{HK1}. Section \ref{2nd} contains some preliminary considerations for the derivation of a central limit theorem type of result as well as a statement of our main result on the limiting dynamics of the second order approximation for two different rescaling regimes. The proof of this result can be found in Section \ref{fast} for the fast rescaling regime and in Section \ref{slow} for the slow rescaling regime. In Section \ref{application} we illustrate how our result can be applied to portfolio liquidation problems. The paper concludes with a brief summary and discussion of our findings in Section \ref{end}.

\textsl{Notation.} For two random sequences $(a_n)_{n\in\N},(b_n)_{n\in\N}$, we write $a_n = o_{\mathbb{P}}(b_n)$ if $\lim_{n \to \infty}\frac{a_n}{b_n} = 0$ in probability and $a_n = \mathcal{O}_\p(b_n)$ if $\limsup_{n \to \infty}\left|\frac{a_n}{b_n}\right|$ is bounded in probability. Furthermore, we use the convention $0/0:=0$ throughout the paper.

\s{Setup and first order approximation}\label{setup}

The goal of this paper is to derive a second-order approximation for the sequence of Markovian limit order book models introduced in \cite{HK1}. To this end, we first recall the modelling framework and the first order approximation result for limit order book models established in that paper. For notational convenience we restrict ourselves to one-sided models. Because both sides are basically symmetric, all the results that follow can easily be extended to a two-sided model as defined in Section 5 of \cite{HK1}. Throughout, all random variables are defined on a common complete probability space $(\Omega,\F,\p)$.

\subsection{The model}

The dynamics of the buy side of the limit order book in the $n$-th model is described by a c\`adl\`ag stochastic process $S^{(n)}=\left(S^{(n)}(t)\right)_{0\leq t\leq T}$ taking values in the Hilbert space
\[E:=\R\times L^2(\mathbb{R}),\qquad \left\Vert \alpha\right\Vert_E:=\left|\alpha_1\right|+\left\Vert \alpha_2\right\Vert_{L^2}.\]
The state of the book changes due to arriving market and limit orders and cancelations. In the $n$-th model there are $T_n:=\left\lfloor T/\Delta t^{(n)}\right\rfloor$ such events taking place at times
\[t_k^{(n)}:=k\Delta t^{(n)},\quad k=1,\dots,T_n,\]
where $\Delta t^{(n)}$ denotes a scaling parameter converging to zero as $n\ra\infty$ and $t_0^{(n)}=0$. 
The state of the book after $k$ events is denoted $S^{(n)}_k:=\left(B_k^{(n)},u_k^{(n)}\right)$ and
\[S^{(n)}(t):=\left(B^{(n)}(t),u^{(n)}(t)\right):=\left(B_k^{(n)},u_k^{(n)}\right)\quad\text{for} \quad t\in\left[t_k^{(n)},t_{k+1}^{(n)}\right)\cap[0,T].\]
The real-valued process $B^{(n)}$ describes the best bid price process and the $L^2$-valued process $u^{(n)}$ describes the buy side volume density functions {\it relative} to the best bid price. 

The {\it tick size} is denoted $\Delta x^{(n)}$ and  $x^{(n)}_j:=j\Delta x^{(n)}$ for $j\in\Z,\ n\in\N$. Furthermore, for all $n\in\N$ and $x\in\R$ we define the interval $I^{(n)}(x)$ as
\[I^{(n)}(x):=\left(x_j^{(n)},x_{j+1}^{(n)}\right]\quad\text{for}\quad x_j^{(n)}<x\leq x_{j+1}^{(n)}.\]
 
For every $k=0,\dots,T_n$ the $L^2$-valued random variable $u_k^{(n)}$ is supposed to be a c\`agl\`ad step function on the grid $\left\{x_j^{(n)},\ j\in\Z\right\}$. The standing volume available at time $t_k^{(n)}$ at the relative price level $x_j^{(n)},\ j\in -\N_0$, i.e.~at the absolute price level $B_k^{(n)}+x_j^{(n)}$, is given by 
\begin{equation}\label{volume}
\int_{x_{j-1}^{(n)}}^{x_j^{(n)}}u_k^{(n)}(x)dx=\Delta x^{(n)}u^{(n)}\left(x_j^{(n)}\right).
\end{equation}

At time $t=0$ the state of the limit order book is deterministic for all $n\in\N$ and denoted by 
\[s_0^{(n)}=\left(B_0^{(n)},u^{(n)}_{0}\right) \in \mathbb{R} \times L^2(\R).\]

There are three events that change the state of our order book. The buy side limit order book changes at each time $t_k^{(n)},\ k=1,\dots,T_n$, if:
\be
\item[(A):] a market sell order of size $\Delta x^{(n)} u^{(n)}_{k-1}(0)$, i.e.~equal to the current volume at the best bid queue, arrives. In this case the best bid price decreases by one tick. Hence, the relative volume density function shifts one tick to the right.
\item[(B):] a buy limit order is placed inside the spread one tick above the current best bid price. In this case the best bid price increases by one tick and the relative volume density function shifts one tick to the left.
\item[(C):] a buy limit order placement of size $\Delta v^{(n)}\omega_k^{(n)}$ at the relative price level $\eta_k^{(n)}$ takes place. If $\omega_k^{(n)}<0$, this corresponds to a cancelation of volume.
\ee

The assumption that incoming market orders match precisely against the liquidity at the top of the book follows \cite{BHQ, HK1, HP}. Although the assumption is made primarily for mathematical convenience there is some empirical evidence supporting it. For instance, the authors of \cite{Farmer} found that in their data sample around 85\% 
of the sell market orders which lead to price changes match exactly the size of the volume standing at the best bid price. The effect of a market order that does not lead to a price change is equivalent to a cancelation of standing volume. A market order whose size exceeds the standing volume at the top of the book and that would hence move the price by more than one tick is split by the exchange into a series of consecutively executed smaller orders. The size of each such `child order', except the last, equals the liquidity at the current best bid price. Thus, by definition, a single market order cannot move the price by more than one tick. We acknowledge that this order splitting procedure would most naturally be modelled within a non-Markovian rather than Markovian framework using Hawkes processes as in \cite{Abergel2015, Bacry2014, HW, Rosenbaum_Hawkes, Zheng2014}. 

\begin{rem}
Following \cite{HK1, HP} the relative volume density functions are defined on the whole real line in order to model the arrival of spread placements. The restriction of the function $u^{(n)}(t,\cdot)$ to the interval $(-\infty, 0]$ corresponds to the actual buy side of the order book at time $t$; the restriction to the positive half line specifies the volumes placed into the spread should such events occur next. 
We refer to \cite{HP} for further details on the modelling of spread placements. 
\end{rem}

In the definition of C-events, $\Delta v^{(n)}$ is a scaling parameter that determines the magnitude of an individual limit order placement/cancelation. We will assume that $\Delta v^{(n)}$ for $n \to \infty$. The random variable $\eta_k^{(n)}$ determines the relative price level of a limit order placement/cancelation and takes values in the grid $\left\{x_j^{(n)},\ j\in\Z\right\}$. The random variable $\omega_k^{(n)}$ determines the precise size of the order and by its sign also whether the event is a placement or a cancelation. Thus, a C-event will change the available liquidity at price level $\eta_k^{(n)}$ which is described by the integral of $u_{k-1}^{(n)}$ over the interval $I^{(n)}\left(\eta_k^{(n)}\right)$, cf.~(\ref{volume}), by an amount $\Delta v^{(n)}\omega_k^{(n)}$. Since $u^{(n)}$ is a step function, $u^{(n)}(x)$ will change by $\Delta v^{(n)}\omega_k^{(n)}/\Delta x^{(n)}$ for all $x\in I^{(n)}\left(\eta_k^{(n)}\right)$. In the following we set $\eta_k^{(n)}:=\Delta x^{(n)}\left\lceil \pi_k^{(n)}/\Delta x^{(n)}\right\rceil$, where $\pi_k^{(n)}$ is a real-valued random variable. Then, $I^{(n)}\left(\pi_k^{(n)}\right)=I^{(n)}\left(\eta_k^{(n)}\right)$ and hence C-events can mathematically also be described in terms of $\left(\omega_k^{(n)},\pi_k^{(n)}\right)$ instead of $\left(\omega_k^{(n)},\eta_k^{(n)}\right)$.

Event types are determined by a field of random variables $\left(\phi_k^{(n)}\right)_{k,n\in\N}$ taking values in the set $\left\{A,B,C\right\}$. In particular, it is assumed that only one event happens at a time; this is a natural assumption, given that no two orders are executed simultaneously in real markets. In terms of these random variables we can then introduce the placement operator
\[
	M_{k}^{(n),C}(\cdot):=\1_C\left(\phi_k^{(n)}\right)\frac{\omega_k^{(n)}}{\Delta x^{(n)}} \1_{I^{(n)}\left(\pi_k^{(n)}\right)}(\cdot)
\]
that determines the change in the volume density function due to a C-event. Since the relative volume density functions are defined on the whole real line and since they shift one tick to the right (left) if a market sell order (limit buy order placement in the spread) arrives, the impact of a price change on the order book can conveniently be described in terms of the translation operators $T_-^{(n)}$ and $T_+^{(n)}$, which act on functions $f:\R\ra\R$ according to
\[T_-^{(n)}(f)(\cdot):=f\left(\cdot-\Delta x^{(n)}\right),\qquad T_+^{(n)}(f)(\cdot):=f\left(\cdot+\Delta x^{(n)}\right).\]

The dynamics of the order book described above then translate into the following stochastic difference equations. 

\begin{df}[\cite{HK1}]
For each $n\in\N$ the dynamics of the state process $S^{(n)}=\left(B^{(n)},u^{(n)}\right)$ is given by $S_0^{(n)}:=s_0^{(n)}$ and for $k=1,\dots,T_n$,
\begin{equation*}
\begin{split}
	B_{k}^{(n)} &= B^{(n)}_{k-1}+\Delta x^{(n)}\left(\1_B\left(\phi_k^{(n)}\right)-\1_A\left(\phi_k^{(n)}\right)\right) \\
	u_{k}^{(n)} &= u_{k-1}^{(n)}+\left(T_-^{(n)}-I\right)\left(u_{k-1}^{(n)}\right)\1_A\left(\phi_k^{(n)}\right)+\left(T_+^{(n)}-I\right)\left(u_{k-1}^{(n)}\right)\1_B\left(\phi_k^{(n)}\right)+\Delta v^{(n)} M_{k}^{(n),C}.
\end{split}
\end{equation*}
\end{df}

We make the following assumptions on the initial values and the driving random variables.

\begin{ass}\label{initial} 
The initial relative volume density function $u^{(n)}_{0}$ is a (non-negative) step-function on the grid $\left\{x_j^{(n)},\ j\in\Z\right\}$, which is uniformly bounded by $M$ and has compact support in $[-M,M]$ for all $n\in\N$. Moreover, 
there exists a continuously differentiable function $u_{0}\in L^2$ and $B_0\in\R$ such that
\[\left\Vert u_0^{(n)}-u_0\right\Vert_{ L^2}=\mathcal{O}\left(\Delta x^{(n)}\right)\qquad \text{and}\qquad \left|B^{(n)}_0-B_0\right|=o\left(\Delta t^{(n)}\right)^{1/2}.\]
\end{ass}
We denote $s_0:=\left(B_0,u_{0}\right)\in E$.

\begin{ass}\label{M}
There exists a constant $M>0$ such that for all $n\in\N$ and $k\leq T_n$,
\[\p\left(\left|\pi_k^{(n)}\right|>M\right)=\p\left(\left|\omega_k^{(n)}\right|>M\right)=0.\]
\end{ass}

For every $n\in\N$ and $k=0,1,\dots,T_n$ we define the $\sigma$-field $\F_k^{(n)}:=\sigma\left(S_j^{(n)},\ j\leq k\right)$. We will assume that for each $n\in\N$ the first (and later on also the second, cf.~Assumption \ref{g} below) conditional moments of the state process $S^{(n)}$ depend on the current price and a volume indicator. To define the volume indicator $Y^{(n)}$ we fix throughout a function $h\in L^2(\R)$ with support in $\R_-$ and set
\[Y_k^{(n)}:=\left\langle h, u_k^{(n)}\right\rangle,\ k=0,\dots,T_n,\ n\in\N. \]
The volume indicator can for example model the volume standing at the top of the book. In this way one can account for the empirically well documented fact that volumes at the top of the book (and volume imbalances in a two-sided model) are important determinants of the order flow.

The next assumption describes the Markovian structure of our limit order book model satisfied by the conditional first moments of order placements / cancelations and price movements. 

\begin{ass}\label{Markov}$\quad$
\begin{enumerate}
\item There are Lipschitz continuous functions $p^{(n),A},p^{(n),B}:\R\times\R\ra[0,1]$ with Lipschitz constant $L$ (not depending on $n$) and a scaling parameter $\Delta p^{(n)}$ such that for all $n\in\N$ and $k=1,\dots,T_n$,
\begin{eqnarray*}
\p\left(\left.\phi^{(n)}_k=I \ \right|\F_{k-1}^{(n)}\right)=\Delta p^{(n)} p^{(n),I}\left(B_{k-1}^{(n)},Y_{k-1}^{(n)}\right) \quad a.s.\quad\text{for}\ I=A,B.
\end{eqnarray*}
Moreover, there exist functions $p^A,p^B:\R\times\R\ra[0,1]$ such that
\[\sup_{(b,y)\in\R^2}\left|p^{(n),I}(b,y)-p^I(b,y)\right|=o\left(\Delta x^{(n)}\right)^{1/2},\quad I=A,B.\]
\item There are Lipschitz continuous functions $f^{(n)}:\R\times\R\ra L^2,\ n\in\N,$ with common Lipschitz constant $L>0$ such that for all $k=1,\dots,T_n$,
\[
	\qquad \Delta x^{(n)} f^{(n)}\left(B_{k-1}^{(n)},Y_{k-1}^{(n)};\cdot\right)=\E\left(\left. \1_C\left(\phi_k^{(n)}\right)\omega_k^{(n)} \1_{I^{(n)}\left(\pi_k^{(n)}\right)}(\cdot)\right|\F_{k-1}^{(n)}\right)\quad a.s.	\]
as well as
\[\sup_{(b,y)\in\R_+\times\R}\left\Vert f^{(n)}(b,y;\cdot)\right\Vert_\infty\leq K,\]
for some constant $K<\infty$. Moreover, there exists a function $f:\R\times\R\ra L^2$ such that
\[\sup_{(b,y)\in\R^2}\left\Vert f^{(n)}(b,y)-f(b,y)\right\Vert_{L^2}=\mathcal{O}\left(\Delta x^{(n)}\right),\]
where $f(b,y;\cdot):\R\ra[-M,M]$ is continuously differentiable in $x$ for all $(b,y)\in\R\times\R$ with derivate being uniformly bounded in absolute value by $M$. 
\end{enumerate}
\end{ass}

\subsection{The first order approximation}

Under a suitable scaling assumption $S^{(n)}$ can be approximated by a deterministic process $S=(B,u)$ solving an ODE-PDE system. The approximation requires two time scales: a fast one for the changes in volume and a slow one for the price changes. The main idea behind deriving the first order approximation is that each event should have an expected impact of order $\Delta t^{(n)}$ on the standing volume, since there are $T_n=\lfloor T/\Delta t^{(n)}\rfloor$ order book events.  A price change occurs with probability $\mathcal{O}\left(\Delta p^{(n)}\right)$ by part (1) of Assumpion \ref{Markov} and causes a shift of the volume density function by $\Delta x^{(n)}$. Hence, its expected impact is of order $\Delta p^{(n)}\Delta x^{(n)}$. This requires $\Delta p^{(n)}\Delta x^{(n)} = \mathcal{O}\left(\Delta t^{(n)}\right)$. A placement/cancelation (C-event) changes the volume function by $\mathcal{O}\left(\Delta v^{(n)}/\Delta x^{(n)}\right)$ on the interval of the submission price level. By part (2) of Assumption \ref{Markov} this has an expected impact of $\Delta v^{(n)}$ on the standing volume. This motivates the following scaling assumption.

\begin{ass}\label{scaling}
There exists $\alpha\in(0,1)$ and $\beta\geq 1-\alpha$ such that
\[\Delta t^{(n)}=\Delta v^{(n)},\qquad \Delta x^{(n)}=\left(\Delta t^{(n)}\right)^\alpha,\qquad \Delta p^{(n)}=\left(\Delta t^{(n)}\right)^\beta.\]
\end{ass}

The choice $\beta=1-\alpha$ corresponds to the critical case, which yields a non-trivial price process in the first order approximation. In the subcritical regime $\beta>1-\alpha$, price movements are very rare and hence the first order approximation yields a constant price process. To be precise we have the following weak law of large numbers, which was proven in \cite{HK1}, cf.~Theorem 2.11 and Remark 2.10 of \cite{HK1}.\footnote{In \cite{HK1} the claim was proven for $p^{(n)}\equiv p$, not depending on $n\in\N$. However, the proof in \cite{HK1} can easily be extended to more general $p^{(n)}$ as long as $p^{(n)}$ converges uniformly to some $p$.} We set: $p^{B-A}:=p^B-p^A$.

\begin{thm}\label{LLN}
Under Assumptions \ref{initial}, \ref{M}, \ref{Markov}, and \ref{scaling} there exists a deterministic process $S:[0,T]\ra E$ such that for all $\eps>0$,
\[\lim_{n\ra\infty}\p\left(\sup_{0\leq t\leq T}\left\Vert S^{(n)}(t)-S(t)\right\Vert_E>\eps\right)=0. \]
The function $S=(B,u)$ is the unique solution to the following ODE-PDE system: for all $(t,x)\in[0,T]\times\R$,
\begin{equation*}%\label{system}
\begin{aligned}
B_t&=B_0+\1_{\{\alpha=1-\beta\}}\int_0^tp^{B-A}(B_s,Y_s)ds,\\
u(t,x)&=u_0(x)+\int_0^t f(B_s,Y_s;x)ds+\1_{\{\alpha=1-\beta\}}\int_0^tp^{B-A}(B_s,Y_s)\partial_xu(s,x)ds,\\
Y_t&=\left\langle h,u(t;\cdot)\right\rangle.
\end{aligned}
\end{equation*}
\end{thm}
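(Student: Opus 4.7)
The approach is the standard semimartingale decomposition in the Hilbert space $E$: write the one-step increments of $S^{(n)}=(B^{(n)},u^{(n)})$ as the sum of a predictable drift and a mean-zero martingale difference, identify the pathwise limit of the drift with the integral form of the claimed ODE-PDE, kill the martingale part via Doob's $L^2$-maximal inequality, and close the loop with Gronwall in $\|\cdot\|_E$. Uniqueness of the limit equation is handled separately.

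Taking conditional expectations and using Assumption \ref{Markov}(1), the drift of $B^{(n)}_k-B^{(n)}_{k-1}$ is $\Delta x^{(n)}\Delta p^{(n)}p^{(n),B-A}(B^{(n)}_{k-1},Y^{(n)}_{k-1})$, whose Riemann sum rewrites as
\[
\frac{\Delta x^{(n)}\Delta p^{(n)}}{\Delta t^{(n)}}\int_0^{t_k^{(n)}}p^{(n),B-A}(B^{(n)}(s),Y^{(n)}(s))\,ds.
\]
Under Assumption \ref{scaling} the prefactor is $(\Delta t^{(n)})^{\alpha+\beta-1}$, equal to $1$ in the critical regime $\alpha=1-\beta$ and vanishing otherwise, which is precisely the origin of the indicator $\1_{\{\alpha=1-\beta\}}$ in the limiting price equation. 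For the $L^2$-component, Assumption \ref{Markov}(2) yields $\E[M^{(n),C}_k\mid\F^{(n)}_{k-1}]=f^{(n)}(B^{(n)}_{k-1},Y^{(n)}_{k-1};\cdot)$, so the $C$-contribution converges to $\int_0^t f(B_s,Y_s;\cdot)\,ds$ via the uniform $L^2$-approximation of $f^{(n)}$ to $f$. The translation part has drift $\Delta p^{(n)}\bigl(p^{(n),B}(T^{(n)}_+-I)+p^{(n),A}(T^{(n)}_--I)\bigr)u^{(n)}_{k-1}$; a discrete first-order Taylor expansion, justified by the $C^1$-regularity of $u_0$ and of $f(b,y;\cdot)$ (Assumptions \ref{initial} and \ref{Markov}(2)) propagated to $u^{(n)}$, rewrites this as $\Delta p^{(n)}\Delta x^{(n)}p^{(n),B-A}\partial_xu^{(n)}_{k-1}$ plus an $L^2$-remainder of order $\Delta p^{(n)}(\Delta x^{(n)})^2$, producing the same $(\Delta t^{(n)})^{\alpha+\beta-1}$ prefactor.

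The martingale remainder $\mathcal{N}^{(n)}$ is controlled componentwise by Doob's $L^2$-inequality. The price increments are bounded by $\Delta x^{(n)}$ and occur with conditional probability $\mathcal{O}(\Delta p^{(n)})$, giving $\E\sup_{k\leq T_n}|\mathcal{N}^{(n),B}_k|^2\lesssim T_n(\Delta x^{(n)})^2\Delta p^{(n)}=T(\Delta t^{(n)})^{2\alpha+\beta-1}\to 0$, since $2\alpha+\beta\geq\alpha+1>1$ under Assumption \ref{scaling}. For the volume part, the per-step conditional second moment from $C$-events is $\lesssim(\Delta v^{(n)})^2/\Delta x^{(n)}$ (integrating $(\omega_k^{(n)}/\Delta x^{(n)})^2$ over an interval of length $\Delta x^{(n)}$), summing to $T(\Delta t^{(n)})^{1-\alpha}\to 0$ because $\alpha<1$; the translation fluctuation is $\mathcal{O}(\Delta p^{(n)}(\Delta x^{(n)})^2)$ per step and sums to the same $(\Delta t^{(n)})^{2\alpha+\beta-1}$ rate.

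With the drift converging and the martingale vanishing, Lipschitz continuity of $p^I$ and $f$ together with $|Y^{(n)}(t)-Y_t|\leq\|h\|_{L^2}\|u^{(n)}(t)-u(t)\|_{L^2}$ produce a bound
\[
\left\Vert S^{(n)}(t)-S(t)\right\Vert_E\leq\eps_n(t)+L'\int_0^t\left\Vert S^{(n)}(s)-S(s)\right\Vert_E\,ds,
\]
with $\sup_t\eps_n(t)\to 0$ in probability, so Gronwall yields the announced uniform convergence. Uniqueness of the ODE-PDE system is proved by a linear-transport argument: for fixed $Y$, the equation for $u$ is a first-order PDE with Lipschitz source, solved by characteristics and Duhamel, and coupling through $Y=\langle h,u\rangle$ closes a final Gronwall loop. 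The main difficulty I anticipate is the $L^2$-control of the discrete translation remainder, which is not automatic for arbitrary $L^2$-densities; one must verify that the $C^1$-regularity inherited from $u_0$ and from the source $f(b,y;\cdot)$ is propagated to $u^{(n)}$ uniformly in $k$, so that the discrete gradient $(T^{(n)}_\pm-I)u^{(n)}_{k-1}/\Delta x^{(n)}$ approximates $\pm\partial_xu$ in $L^2$.
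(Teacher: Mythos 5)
First, a point of reference: the paper does not prove Theorem \ref{LLN} itself but imports it from Theorem 2.11 of \cite{HK1}, so the comparison below is with the strategy of that reference, which is also mirrored in the decomposition used in Section \ref{slow} of the present paper.

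Your overall architecture -- predictable drift plus martingale difference, Doob's $L^2$-inequality, the scaling bookkeeping $(\Delta t^{(n)})^{\alpha+\beta-1}$ producing the indicator $\1_{\{\alpha=1-\beta\}}$, and a closing Gronwall loop -- is the right one, and your exponent computations for the price drift, the price martingale and the C-event martingale are correct. The genuine gap is precisely the point you flag at the end, and it cannot be patched by ``propagating $C^1$-regularity to $u^{(n)}$'', because that regularity is \emph{not} propagated. Each C-event adds $\Delta v^{(n)}\omega_k^{(n)}(\Delta x^{(n)})^{-1}\1_{I^{(n)}(\pi_k^{(n)})}$ to $u^{(n)}$, whose discrete derivative $(T_+^{(n)}-I)(\cdot)/\Delta x^{(n)}$ has $L^2$-norm of order $\Delta v^{(n)}(\Delta x^{(n)})^{-3/2}$; summing the (mutually orthogonal) martingale contributions over $T/\Delta t^{(n)}$ steps, the fluctuation of the discrete derivative of $u^{(n)}_k$ is of order $(\Delta t^{(n)})^{(1-3\alpha)/2}$ in mean square, which diverges for $\alpha>1/3$. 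Hence $(T_\pm^{(n)}-I)u^{(n)}_{k-1}/\Delta x^{(n)}$ does not converge to $\pm\partial_x u$ in $L^2$, your Taylor expansion of the translation drift fails, and so does the claimed $\mathcal{O}(\Delta p^{(n)}(\Delta x^{(n)})^2)$ bound on the translation part of the martingale. Bounding the discrete derivative instead through $\sup_k\|u^{(n)}_k-u(t_k^{(n)})\|_{L^2}$ is circular -- the paper makes exactly this remark when motivating the non-semimartingale decomposition at the start of Section \ref{slow}.

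The missing idea is the change to absolute coordinates: set $v^{(n)}_k(x):=u^{(n)}_k(x-B^{(n)}_k)$, so that the A/B translation operators disappear entirely and $v^{(n)}_k=v^{(n)}_0+\sum_{j\leq k}\Delta v^{(n)}M^{(n)}_j(\cdot-B^{(n)}_{j-1})$. Comparing with $v(t,x)=u(t,x-B_t)$, which solves $v(t,x)=v(0,x)+\int_0^tf(B_s,Y_s;x-B_s)ds$, requires $x$-regularity only of $u_0$ and of $f(b,y;\cdot)$ -- both assumed (Assumptions \ref{initial} and \ref{Markov}) -- together with control of $|B^{(n)}-B|$; no discrete derivative of $u^{(n)}$ ever appears, and your Doob/Gronwall scheme then closes. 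Note also that the weak-formulation alternative of Lemma \ref{diffu} (testing against $H^3$ functions) would only yield convergence in $H^{-3}$, not the $L^2$-statement of the theorem, so the coordinate change is genuinely needed here.
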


\s{Towards a second order approximation}\label{2nd}

The goal of this paper is derive a second order approximation for the state process $S^{(n)}=\left(B^{(n)},u^{(n)}\right)$, similar to a central limit theorem. The main difficulty we face in deriving such a result is the fact that the price process and the volume process live on {\it different} time scales. Indeed, the (expected) number of placements and cancelations is of order $T/\Delta t^{(n)}$, while the (expected) number of price changes is of order $T\Delta p^{(n)}/\Delta t^{(n)}$. However, since the relative volume density process depends on the price process (directly through the shifts and indirectly through the conditional probability distribution of new placements and cancelations), we have to use the {\it same} scaling parameter for rescaling both deviation processes, $B^{(n)}-B$ as well as $u^{(n)}-u$. 

In what follows we write $\delta x_k$ for the increment $x_k - x_{k-1}$ for a discrete time stochastic process $(x_k)_{k \in \N}$.

\subs{Preliminaries}

For now we simply introduce a new scaling parameter $\Delta^{(n)}$ and link it later to the other scaling parameters of the model. We define the rescaled discrete fluctuation processes
\[Z^{(n),B}_k:=\frac{B_k^{(n)}-B\left(t_k^{(n)}\right)}{\left(\Delta^{(n)}\right)^{1/2}},\qquad 
Z^{(n),u}_k(\cdot):=\frac{u_k^{(n)}(\cdot)-u\left(t_k^{(n)},\cdot\right)}{\left(\Delta^{(n)}\right)^{1/2}},
\qquad k=0,\dots T_n,\]
as well as the fluctuations of the volume indicator
\[Z^{(n),Y}_k:=\frac{Y_k^{(n)}-Y\left(t_k^{(n)}\right)}{\left(\Delta^{(n)}\right)^{1/2}}=\frac{\left\langle h,u_k^{(n)}-u\left(t_k^{(n)}\right)\right\rangle}{\left(\Delta^{(n)}\right)^{1/2}},\qquad k=0,\dots T_n.\]

Moreover, we define for any $t\in[0,T]$,
\[Z^{(n),B}(t):=\frac{B^{(n)}(t)-B_t}{\left(\Delta^{(n)}\right)^{1/2}},\qquad Z^{(n),u}(t,\cdot):=\frac{u^{(n)}(t,\cdot)-u\left(t,\cdot\right)}{\left(\Delta^{(n)}\right)^{1/2}},\qquad Z^{(n),Y}(t):=\left\langle h, Z^{(n),u}(t)\right\rangle.\]

We have the following useful lemma. 

\begin{lem}\label{discretization}
If $\Delta t^{(n)}=o\left(\Delta^{(n)}\right)^{1/2}$, then 
\[\sup_{k\leq T_n}\sup_{t\in\left[t_k^{(n)},t^{(n)}_{k+1}\right)}\left| Z^{(n),B}(t)-Z^{(n),B}_k\right|\ra0, \qquad
\sup_{k\leq T_n}\sup_{t\in\left[t_k^{(n)},t^{(n)}_{k+1}\right)}\left\Vert Z^{(n),u}(t)-Z^{(n),u}_k\right\Vert_{L^2}\ra0.\]
\end{lem}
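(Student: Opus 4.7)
The plan is to exploit the fact that $B^{(n)}$ and $u^{(n)}$ are piecewise constant on each grid interval $[t_k^{(n)}, t_{k+1}^{(n)})$, so that the difference between the continuous-time and discrete fluctuation processes reduces to a discretisation error of the deterministic limit $(B,u)$, rescaled by $(\Delta^{(n)})^{1/2}$. Concretely, for any $t\in[t_k^{(n)},t_{k+1}^{(n)})$ one has $B^{(n)}(t)=B_k^{(n)}$ and $u^{(n)}(t,\cdot)=u_k^{(n)}(\cdot)$, so
\[
Z^{(n),B}(t)-Z^{(n),B}_k \;=\; \frac{B(t_k^{(n)})-B_t}{(\Delta^{(n)})^{1/2}}, \qquad Z^{(n),u}(t,\cdot)-Z^{(n),u}_k(\cdot) \;=\; \frac{u(t_k^{(n)},\cdot)-u(t,\cdot)}{(\Delta^{(n)})^{1/2}}.
\]
Thus everything reduces to controlling the modulus of continuity of the limit $(B,u)$ from Theorem \ref{LLN}.

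The second step is to read off Lipschitz-in-time bounds for $B$ and $u$ directly from the ODE-PDE system. For $B$, since $p^{B-A}$ takes values in $[-1,1]$, we immediately get $|B_t-B_s|\leq |t-s|$, hence $|B(t_k^{(n)})-B_t|\leq \Delta t^{(n)}$ for $t\in[t_k^{(n)},t_{k+1}^{(n)})$. For $u$ I would use
\[
u(t,\cdot)-u(t_k^{(n)},\cdot) \;=\; \int_{t_k^{(n)}}^t f(B_s,Y_s;\cdot)\,ds \;+\; \1_{\{\alpha=1-\beta\}}\int_{t_k^{(n)}}^t p^{B-A}(B_s,Y_s)\,\partial_x u(s,\cdot)\,ds
\]
and take $L^2$-norms. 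The integrand $f(B_s,Y_s;\cdot)$ is bounded in $L^2$ uniformly in $s$ (by Assumption \ref{Markov}(2), $f(b,y;\cdot)$ is bounded by $M$ and, since $u_0$ has compact support and the flow preserves controlled support, one can argue $\|f(B_s,Y_s)\|_{L^2}\leq C$); the second term is controlled using that $|p^{B-A}|\leq 1$ together with an $L^2$-bound on $\partial_x u(s,\cdot)$, which follows from the regularity of the transport-type PDE combined with the smoothness of $u_0$ and the uniform bound on $\partial_x f$ assumed in Assumption \ref{Markov}(2). This yields $\|u(t)-u(t_k^{(n)})\|_{L^2}\leq C\Delta t^{(n)}$.

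Combining the two Lipschitz bounds with the hypothesis $\Delta t^{(n)}=o((\Delta^{(n)})^{1/2})$ gives
\[
\sup_{k\leq T_n}\sup_{t\in[t_k^{(n)},t_{k+1}^{(n)})}\Bigl(|Z^{(n),B}(t)-Z^{(n),B}_k|+\|Z^{(n),u}(t)-Z^{(n),u}_k\|_{L^2}\Bigr)\;\leq\;\frac{C\,\Delta t^{(n)}}{(\Delta^{(n)})^{1/2}}\;\longrightarrow\;0,
\]
which is exactly the claim.

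The only delicate point I anticipate is the $L^2$-regularity of $\partial_x u(s,\cdot)$, since Theorem \ref{LLN} only delivers $u(t,\cdot)\in L^2$. I would handle this by differentiating the PDE formally in $x$, noting that $\partial_x u$ satisfies the same transport equation with source $\partial_x f(B_s,Y_s;\cdot)$ (which is uniformly bounded by $M$ and inherits compact support from the initial profile), and using Gronwall to propagate an $L^2$-bound in $x$ from $\partial_x u_0$, which is well-defined and in $L^2$ thanks to the continuous differentiability and compact support assumptions in Assumption \ref{initial}. In the subcritical regime $\alpha\neq 1-\beta$ the transport term is absent and this step is trivial.
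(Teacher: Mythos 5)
Your proposal is correct and follows essentially the same route as the paper: both reduce the difference to the increment of the deterministic limit $(B,u)$ over an interval of length at most $\Delta t^{(n)}$ (using piecewise constancy of $B^{(n)}, u^{(n)}$), bound that increment by $C\Delta t^{(n)}$ via the ODE-PDE representation and the uniform bounds on $p^{B-A}$, $f$ and $\partial_x u$, and divide by $(\Delta^{(n)})^{1/2}$. Your extra discussion of the $L^2$-regularity of $\partial_x u$ is a point the paper's proof leaves implicit, so it is a welcome addition rather than a deviation.
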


\begin{proof}
First note that for every $t\in\left[t_k^{(n)},t^{(n)}_{k+1}\right)$ we have $Z^{(n),u}_k=Z^{(n),u}\left(\lfloor t/\Delta t^{(n)}\rfloor\right)$ as well as \mbox{$Z^{(n),B}_k=Z^{(n),B}\left(\lfloor t/\Delta t^{(n)}\rfloor\right)$.} Now the uniform boundedness conditions in Assumption \ref{Markov} together with Theorem \ref{LLN} imply
\begin{eqnarray*}
&&\left\Vert Z^{(n),u}(t)-Z^{(n),u}\left(\lfloor t/\Delta t^{(n)}\rfloor\right)\right\Vert_{L^2}\\
&&\qquad\leq\left(\Delta^{(n)}\right)^{-1/2}\int_{\Delta t^{(n)}\lfloor t/\Delta t^{(n)}\rfloor}^t \left(\left\Vert f\left(B_
s,Y_s;\cdot\right)\right\Vert_{L^2}+\left\Vert \partial_x u(s,\cdot)\right\Vert_{L^2} \left|p^B(B_s,Y_s)-p^A(B_s,Y_s)\right|\right)ds\\
&&\qquad\leq C\Delta t^{(n)}\left(\Delta^{(n)}\right)^{-1/2}.
\end{eqnarray*}
Similarly,
\begin{eqnarray*}
\left| Z^{(n),B}(t)-Z^{(n),B}\left(\lfloor t/\Delta t^{(n)}\rfloor\right)\right|_{L^2}&\leq&
\left(\Delta^{(n)}\right)^{-1/2}\int_{\Delta t^{(n)}\lfloor t/\Delta t^{(n)}\rfloor}^t \left| p^{B}(B_s,Y_s)-p^{A}(B_s,Y_s)\right|ds\\
&\leq& \Delta t^{(n)}\left(\Delta^{(n)}\right)^{-1/2}.
\end{eqnarray*}
\end{proof}

In the following we denote for $m\in\N$ by $H^m$ the Sobolev space of order $m$ equipped with the usual Sobolev norm. Moreover, $H^0:=L^2$ and $H^{-m}$ denotes the dual of $H^m$. Then
\[\mathcal{E}': =\bigcup_mH^{-m}\supset\dots\supset H^{-1}\supset L^2\supset H^1\supset\dots\supset\bigcap_mH^m =:\mathcal{E}\subset C_\infty(\R).\]
The convergence concept we will use for the second order approximation is weak convergence in the Skorokhod space $D\left([0, T]; \R\times H^{-3}\right)$ equipped with the usual Skorokhod metric. 

In order to derive the second order approximation for the volumes, the following technical lemma will be needed. It states that Theorem \ref{LLN} also implies the convergence of the `discrete derivatives' of $u^{(n)}(t,\cdot)$ to $\partial_x u(t,\cdot)$ in a weak sense.

\begin{lem}\label{diffu}
If $\varphi\in H^3$, then 
\[\sup_{t\leq T}\left|\left\langle \frac{1}{\Delta x^{(n)}}\left(T_+^{(n)}-I\right)u^{(n)}(t),\varphi\right\rangle-\langle \partial_x u(t),\varphi\rangle\right|=o_\p\left(\left\Vert \varphi\right\Vert_{H^2}\right)\]
and
\[\sup_{t\leq T}\left|\left\langle \frac{1}{\Delta x^{(n)}}\left(T_+^{(n)}-I\right)u^{(n)}(t),\frac{1}{\Delta x^{(n)}}\left(I-T_+^{(n)}\right)\varphi\right\rangle+\left\langle \partial_x u(t),\varphi'\right\rangle\right|=o_\p\left(\left\Vert \varphi\right\Vert_{H^3}\right).\]
\end{lem}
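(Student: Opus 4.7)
The key idea is a discrete integration by parts: since the translation operator $T_+^{(n)}$ is adjoint (with respect to the $L^2$ inner product) to $T_-^{(n)}$ under a simple change of variables, we have the identity
\[
	\left\langle \tfrac{1}{\Delta x^{(n)}}(T_+^{(n)}-I)u^{(n)}(t), \varphi\right\rangle = \left\langle u^{(n)}(t), \tfrac{1}{\Delta x^{(n)}}(T_-^{(n)}-I)\varphi\right\rangle.
\]
Writing $\tfrac{1}{\Delta x^{(n)}}(T_-^{(n)}-I)\varphi(x) = -\tfrac{1}{\Delta x^{(n)}}\int_0^{\Delta x^{(n)}}\varphi'(x-s)\,ds$ and adding and subtracting $-\varphi'(x)$, a double integration gives the Taylor remainder bound
\[
	\left\Vert \tfrac{1}{\Delta x^{(n)}}(T_-^{(n)}-I)\varphi + \varphi'\right\Vert_{L^2}\leq \tfrac{\Delta x^{(n)}}{2}\Vert \varphi''\Vert_{L^2}\leq \tfrac{\Delta x^{(n)}}{2}\Vert\varphi\Vert_{H^2}.
\]
Combined with the uniform $L^2$-boundedness of $u^{(n)}(t)$ (inherited from Assumption \ref{initial} and \ref{Markov} via Theorem \ref{LLN}), this shows
\[
	\left\langle \tfrac{1}{\Delta x^{(n)}}(T_+^{(n)}-I)u^{(n)}(t), \varphi\right\rangle = -\langle u^{(n)}(t), \varphi'\rangle + \mathcal{O}_\p(\Delta x^{(n)})\Vert \varphi\Vert_{H^2}.
\]

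\textbf{Closing the first claim.} Since $u(t,\cdot)$ is continuously differentiable with $\partial_x u(t,\cdot)\in L^2$ (and compactly supported, inherited from $u_0$ via the PDE in Theorem \ref{LLN}), the classical integration by parts gives $\langle \partial_x u(t),\varphi\rangle = -\langle u(t),\varphi'\rangle$. Hence
\[
	\left\langle \tfrac{1}{\Delta x^{(n)}}(T_+^{(n)}-I)u^{(n)}(t), \varphi\right\rangle - \langle \partial_x u(t),\varphi\rangle = \langle u(t)-u^{(n)}(t),\varphi'\rangle + \mathcal{O}_\p(\Delta x^{(n)})\Vert\varphi\Vert_{H^2}.
\]
By Cauchy--Schwarz the first term is bounded by $\sup_{t\leq T}\Vert u^{(n)}(t)-u(t)\Vert_{L^2}\cdot\Vert\varphi\Vert_{H^1}$, and Theorem \ref{LLN} makes the prefactor $o_\p(1)$, uniformly in $t$. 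This yields the first assertion with the claimed rate $o_\p(\Vert \varphi\Vert_{H^2})$.

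\textbf{Second claim by bootstrapping.} Set $\psi^{(n)} := \tfrac{1}{\Delta x^{(n)}}(I-T_+^{(n)})\varphi$. Since $\psi^{(n)}(x)=-\tfrac{1}{\Delta x^{(n)}}\int_0^{\Delta x^{(n)}}\varphi'(x+s)\,ds$, the shift inequality $\Vert\psi^{(n)(k)}\Vert_{L^2}\leq \Vert\varphi^{(k+1)}\Vert_{L^2}$ gives $\Vert \psi^{(n)}\Vert_{H^2}\leq \Vert\varphi\Vert_{H^3}$, so the first claim applied to $\psi^{(n)}$ produces
\[
	\left\langle \tfrac{1}{\Delta x^{(n)}}(T_+^{(n)}-I)u^{(n)}(t), \psi^{(n)}\right\rangle = \langle \partial_x u(t),\psi^{(n)}\rangle + o_\p(\Vert\varphi\Vert_{H^3}).
\]
The same Taylor argument as before, applied one order higher, gives $\Vert \psi^{(n)}+\varphi'\Vert_{L^2}\leq \tfrac{\Delta x^{(n)}}{2}\Vert\varphi\Vert_{H^2}$, so $\langle \partial_x u(t),\psi^{(n)}\rangle = -\langle \partial_x u(t),\varphi'\rangle + \mathcal{O}(\Delta x^{(n)})\Vert\varphi\Vert_{H^2}$ uniformly in $t$, using the uniform $L^2$-bound on $\partial_x u(t,\cdot)$ from Theorem \ref{LLN}. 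Combining yields the second assertion.

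\textbf{Main obstacle.} None of the individual estimates is deep; the only delicate point is bookkeeping the Sobolev orders so that the deterministic remainder (of size $\Delta x^{(n)}\Vert\varphi\Vert_{H^2}$ or $\Delta x^{(n)}\Vert\varphi\Vert_{H^3}$) and the stochastic remainder (arising through $\Vert u^{(n)}-u\Vert_{L^2}$) together deliver the advertised $o_\p(\Vert\varphi\Vert_{H^m})$ uniformly in $t\leq T$. The uniformity in $t$ relies crucially on the uniform-in-$t$ convergence supplied by Theorem \ref{LLN} and on the uniform boundedness assumptions in Assumption \ref{Markov} (which transfer to uniform bounds on $u$ and $\partial_x u$).
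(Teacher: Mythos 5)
Your proof is correct and follows essentially the same route as the paper: discrete summation by parts to move the difference operator onto the test function, a Taylor-remainder bound of order $\Delta x^{(n)}\left\Vert \varphi''\right\Vert_{L^2}$, and Theorem \ref{LLN} for the uniform-in-$t$ convergence of $u^{(n)}$ to $u$ together with the uniform $L^2$-bounds. The only cosmetic difference is in the second claim, where you bootstrap by applying the first claim to $\psi^{(n)}=\frac{1}{\Delta x^{(n)}}\left(I-T_+^{(n)}\right)\varphi$ (legitimately, since the $o_\p$-constant in the first claim does not depend on the test function and $\left\Vert\psi^{(n)}\right\Vert_{H^2}\leq\left\Vert\varphi\right\Vert_{H^3}$), whereas the paper expands the second-order difference quotient $\frac{1}{\left(\Delta x^{(n)}\right)^2}\left(I-T_-^{(n)}\right)\left(I-T_+^{(n)}\right)\varphi$ directly --- the underlying estimates are identical.
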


\begin{proof} 
First note that, since $L^1_{loc}\supset L^2$ and $C^2\supset H^3$, $\varphi''$ is almost everywhere equal to the absolutely continuous function $x\mapsto\int_0^x\varphi'''(y)dy,\ x\in\R$, where $\varphi'''$ denotes the weak derivative. We have\small
\begin{eqnarray*}
\left\langle \frac{1}{\Delta x^{(n)}}\left(T_+^{(n)}-I\right)u^{(n)}(t)-\partial_x u(t),\varphi\right\rangle&=&\left\langle u^{(n)}(t),\frac{1}{\Delta x^{(n)}}\left(T_-^{(n)}-I\right)\varphi\right\rangle+\left\langle u(t),\varphi'\right\rangle\\
&=&\left\langle u(t)-u^{(n)}(t),\varphi'\right\rangle-\left\langle u^{(n)}(t),\frac{1}{\Delta x^{(n)}}\left(I-T_-^{(n)}\right)\varphi-\varphi'\right\rangle
\end{eqnarray*}\normalsize
and by Theorem \ref{LLN}
\begin{eqnarray*}
\sup_{t\leq T}\left|\left\langle u(t)-u^{(n)}(t),\varphi'\right\rangle\right|\leq \left\Vert \varphi'\right\Vert_{L^2}\sup_{t\leq T}\left\Vert u(t)-u^{(n)}(t)\right\Vert_{L^2}\ra0.
\end{eqnarray*}
Moreover,
\begin{eqnarray*}
\sup_{t\leq T}\left|\left\langle u^{(n)}(t),\frac{1}{\Delta x^{(n)}}\left(I-T_-^{(n)}\right)\varphi-\varphi'\right\rangle\right|
&\leq& \sup_{t\leq T}\frac{1}{\Delta x^{(n)}}\int^0_{-\Delta x^{(n)}}
\left|\left\langle u^{(n)}(t),\varphi'(\cdot+y)-\varphi'(\cdot)\right\rangle\right|dy\\
&\leq& \sup_{t\leq T}\frac{1}{\Delta x^{(n)}}\int_0^{\Delta x^{(n)}}\int_0^y
\left|\left\langle u^{(n)}(t),\varphi''(\cdot+z)\right\rangle\right|dzdy\\
&\leq&\sup_{t\leq T}\left\Vert u^{(n)}(t)\right\Vert_{L^2}\frac{\Delta x^{(n)}}{2}\left\Vert \varphi''\right\Vert_{L^2}\stackrel{\p}{\longrightarrow}0,
\end{eqnarray*}
because the $u^{(n)},\ n\in\N,$ are uniformly bounded in probability. This proves the first claim. Further,\small
\begin{eqnarray*}
&&\left\langle \frac{1}{\Delta x^{(n)}}\left(T_+^{(n)}-I\right)u^{(n)}(t),\frac{1}{\Delta x^{(n)}}\left(I-T_+^{(n)}\right)\varphi\right\rangle
=\left\langle u^{(n)}(t),\frac{1}{\left(\Delta x^{(n)}\right)^2}\left(I-T_-^{(n)}\right)\left(I-T_+^{(n)}\right)\varphi\right\rangle\\
&=&\left\langle u^{(n)}(t),\frac{1}{\left(\Delta x^{(n)}\right)^2}\left(I-T_-^{(n)}\right)\left(I-T_+^{(n)}\right)\varphi+\varphi''\right\rangle-\left\langle u^{(n)}(t)-u(t),\varphi''\right\rangle+\left\langle u'(t),\varphi'\right\rangle
\end{eqnarray*}\normalsize
and \small
\begin{eqnarray*}
\left\Vert \frac{1}{\left(\Delta x^{(n)}\right)^2}\left(I-T_-^{(n)}\right)\left(I-T_+^{(n)}\right)\varphi+\varphi''\right\Vert_{L^2}&\leq&
\frac{1}{\Delta x^{(n)}}\int_0^{\Delta x^{(n)}}\left\Vert\frac{1}{\Delta x^{(n)}}\left(I-T_-^{(n)}\right)\varphi'(\cdot+y)-\varphi''(\cdot)\right\Vert_{L^2}dy\\
&\leq&\frac{1}{\left(\Delta x^{(n)}\right)^2}
\int_0^{\Delta x^{(n)}}\int_{-\Delta x^{(n)}}^0\left\Vert \varphi''(\cdot+y+z)-\varphi''(\cdot)\right\Vert_{L^2}dzdy\\
&=&\frac{1}{\left(\Delta x^{(n)}\right)^2}
\int_0^{\Delta x^{(n)}}\int_{-\Delta x^{(n)}}^0\left\Vert \int_0^{y+z}\varphi'''(\cdot+u)du\right\Vert_{L^2}dzdy\\
&\leq&\Delta x^{(n)}\left\Vert \varphi'''\right\Vert_{L^2}\ra0.
\end{eqnarray*}\normalsize
Thus, the first and the second term converge to zero by the same arguments as above.
\end{proof}

The assumption that $\varphi\in H^3$, which was made in Lemma \ref{diffu}, cannot be relaxed further for the statement of the lemma to be valid. Therefore, we can only expect to prove convergence of the second order approximations in the space $D([0,T];\R\times H^{-3})$. 

For the rest of the paper we will make the following assumption.

\begin{ass}\label{h}
There exists $h\in H^3$ such that $Y^{(n)}:=\langle u^{(n)},h\rangle$ for all $n\in\N$. 
\end{ass}

Since fluctuations in the price and volume indicator directly affect the dynamics of $S$ through the functions $p^A,p^B,$ and $f$, we will also need the following differentiability assumptions.

\begin{ass}\label{pxx} The functions $p^A,p^B$ are twice continuously differentiable in both arguments and for $I=A,B$,
\begin{equation*}
\quad\sup_{b,y}\left\{\left| p^I_b(b,y)\right|+\left| p^I_y(b,y)\right|+\left| p^I_{bb}(b,y)\right|+\left| p^I_{by}(b,y)\right|+\left| p^I_{yy}(b,y)\right|\right\}<\infty.
\end{equation*}
\end{ass}

\begin{ass}\label{fxx}
The function $f$ is twice continuously differentiable in its first two arguments and 
\begin{equation*}
\quad\sup_{b,y}\left\{\left\Vert f(b,y)\right\Vert_{L^2}+\left\Vert f_b(b,y)\right\Vert_{L^2}+\left\Vert f_y(b,y)\right\Vert_{L^2}+\left\Vert f_{bb}(b,y)\right\Vert_{L^2}+\left\Vert f_{by}(b,y)\right\Vert_{L^2}+\left\Vert f_{yy}(b,y)\right\Vert_{L^2}\right\}<\infty.
\end{equation*}
\end{ass}

Before stating the main result of this paper, we give a simple example, for which all assumptions made so far are fulfilled. The example demonstrates that our model is capable of capturing empirically well-documented facts like the dependence of price movements on volumes at the top of the book and the fact that the order flow activity is highest at price levels close to the spread. However, since ours is a Markovian model, it is not capable of capturing self-exiting order flow as it is for example done in \cite{Abergel2015,HW}. 

\begin{ex}\label{example}
We choose $h(x)=-(\lambda x)^3\exp(\lambda x)\1_{(-\infty,0]}(x),\ x\in\R,$ for some $\lambda>0$. Then $h\in H^3$ and due to the exponential tail, $h$ only gives little weight to orders deeper in the book. The larger we choose $\lambda$, the smaller is the range of price levels taken into account by the volume indicator. Thus, $Y^{(n)}=\langle u^{(n)},h\rangle$ approximates the volume standing at the top of the book. 
We take $\phi_k^{(n)},\omega_k^{(n)},\pi_k^{(n)}$ conditionally independent and suppose that
\[\p\left(\left.\omega_k^{(n)}=+1\right|\F_{k-1}^{(n)}\right)=1-\p\left(\left.\omega_k^{(n)}=-1\right|\F_{k-1}^{(n)}\right)=\exp\left(-\left(Y^{(n)}_{k-1}\right)^3\right)\wedge 1,\]
i.e.~high volumes at the top of the book reduce the number of order placements and increase the number of order cancelations. Denoting by $\Phi$ the cumulative normal distribution function, we let
\[\p\left(\left.\phi_k^{(n)}=A\right|\F_{k-1}^{(n)}\right)=\Delta p^{(n)}\frac{B^{(n)}_{k-1}}{1+B^{(n)}_{k-1}}\left(1-\Phi\left(Y^{(n)}_{k-1}\right)\right)=\Delta p^{(n)}-\p\left(\left.\phi_k^{(n)}=B\right|\F_{k-1}^{(n)}\right),\]
i.e.~high volumes at the top of the bid side make a price increase more probable than a price decrease. Moreover, with this choice of conditional event probabilities we make sure that prices stay non-negative. Finally, we suppose that
\[\p\left(\left.\pi_k^{(n)}\in dx\right|\F_{k-1}^{(n)}\right)=C(x-10)^2(x+10)^2\1_{[-10,10]}(x)dx\]
for a normalizing constant $C>0$, i.e.~limit orders (cancelations) are more likely to be placed (occur) close to the best bid price. With the above specifications we have\small
\begin{eqnarray*}
p^{(n),A}(b,y)&=&\frac{b}{1+b}(1-\Phi(y)),\qquad p^{(n),B}(b,y)=1-\frac{b}{1+b}(1-\Phi(y)),\\ 
f^{(n)}(b,y;x)&=&\left(1-\Delta p^{(n)}\right)\left(2e^{-(y\vee0)^3}-1\right)\frac{C}{\Delta x^{(n)}}\sum_{j\in\Z}\1_{\left(x_j^{(n)},x_{j+1}^{(n)}\right]}(x)\int_{x_j^{(n)}}^{x^{(n)}_{j+1}}(z-10)^2(z+10)^2\1_{[-10,10]}(z)dz,
\end{eqnarray*}\normalsize
i.e.~all assumptions required for the first order approximation are satisfied and we have in the limit, if $\alpha=1-\beta$, for all $(t,x)\in[0,T]\times\R$,
\begin{eqnarray*}
dB_t&=&\left(1-\frac{2B_t(1-\Phi(Y_t))}{1+B_t}\right)dt, \qquad Y_t=\langle u(t),h\rangle,\\
u_t(t,x)&=&\left(1-\frac{2B_t(1-\Phi(Y_t))}{1+B_t}\right)u_x(t,x)+\left(2e^{-(Y_t\vee0)^3}-1\right)C(x-10)^2(x+10)^2\1_{[-10,10]}(x).
\end{eqnarray*}
\end{ex}

\subs{Main result}

The two natural candidates for rescaling are either $\sqrt{\Delta t^{(n)}}$ (corresponding to the fast time scale of volume changes) or $\sqrt{\Delta x^{(n)}}$ (corresponding to the slow time scale of price changes). In the next two sections we will present two types of second order approximations corresponding to these two scaling parameters. The first case models a weak dependence of order dynamics on prices, meaning that price movements disppear in the first order approximation, but can be seen in the second order approximation. In the second case, we allow for a more general dependence structure, but have to restrict ourselves to a renormalization by a much slower time scale. The following theorem is the main result of this paper:

\begin{thm}\label{preview}
Let Assumptions \ref{initial}, \ref{M}, \ref{Markov}, \ref{scaling}, \ref{h}, and \ref{fxx} be satisfied and set $\sigma_B:=\left(p^A+p^B\right)^{1/2}$. 
\be
\item[(a)] If $\Delta^{(n)}=\Delta t^{(n)}$, $\alpha>\frac{1}{2}$, and $\beta=2(1-\alpha)$, then under the additional Assumptions \ref{papb} and \ref{g} (see below) there exists a function $\mu$ such that $Z^{(n)}=\left(Z^{(n),B},Z^{(n),u}\right)$ converges weakly in $D([0,T];\R\times H^{-3})$ to $(Z^B,Z^u)$ being the unique solution, starting from $Z_0^B=0$ and $Z^u(0,\cdot)=0$, to the infinite dimensional SDE
\begin{equation}\label{SPDE1}
\begin{split}
dZ^B(t)&=\mu(Y_t)dt+\sigma_B(B_0,Y_t)dW^B_t\\
dZ^u(t)&=f_b(B_0,Y_t) Z^B(t)dt+f_y(B_0,Y_t) \langle Z^u(t),h\rangle dt+\partial_x u(t) dZ^B(t)+dM_t,
\end{split}
\end{equation}
where $W^B$ and $M$ are independent, $W^B$ is a standard Brownian motion and $M$ is an $L^2$-valued Gaussian martingale with covariance depending on $(B,Y)$ (cf.~Theorem \ref{main} below).

\vskip6pt

\item[(b)] If $\Delta^{(n)}=\Delta x^{(n)}$, $\alpha<\frac{1}{2}$, and $\beta=1-\alpha$, then under the additional Assumption
\ref{pxx} we have weak convergence of $Z^{(n)}=\left(Z^{(n),B},Z^{(n),u}\right)$ in $D\left([0,T];\R\times H^{-3}\right)$ to $\left(Z^B,Z^u\right)$ being the unique weak\footnote{By `weak' solution we mean `weak' in the PDE sense, i.e.~we consider $Z^u$ as a distribution valued process, being an element of $H^{-3}$. The exact solution concept will be explicitly defined in Theorem \ref{main2} below. } solution, starting from $Z^B_0=0$ and $Z^u(0,\cdot)=0$, to the system
\begin{equation}\label{SPDE2}
\begin{split}
dZ^B(t)&=p_b(B_t,Y_t)Z^B(t)dt+p_y(B_t,Y_t)\langle Z^u(t),h\rangle dt+\sigma_B(B_t,Y_t)dW_t,\\
\qquad dZ^u(t)&=f_b(B_t,Y_t) Z^B(t)dt +f_y(B_t,Y_t) \langle Z^u(t),h\rangle dt+\partial_x u(t)dZ^B(t)+ \partial_x Z^u(t)dB_t,
\end{split}
\end{equation}
where $W$ is a standard Brownian motion.
\ee
\end{thm}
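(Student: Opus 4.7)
The plan is to treat parts (a) and (b) within a common martingale-central-limit-theorem scheme: derive a semimartingale decomposition of the discrete process $Z^{(n)}_k$, identify the limiting drift and martingale characteristics by Taylor expansion and direct computation, establish tightness of $(Z^{(n),B},Z^{(n),u})$ in $D([0,T];\R\times H^{-3})$, and close the argument by uniqueness of the limiting SDE/SPDE. The first step is to subtract from the discrete dynamics of $(B^{(n)}_k,u^{(n)}_k)$ the corresponding increments of $(B,u)$ (thought of as a Riemann sum against the ODE-PDE system of Theorem \ref{LLN}) between times $t^{(n)}_{k-1}$ and $t^{(n)}_k$, divide by $\sqrt{\Delta^{(n)}}$, and decompose the result as $\delta Z^{(n)}_k=D^{(n)}_k+\delta N^{(n)}_k$ with $D^{(n)}_k:=\E[\delta Z^{(n)}_k\mid\F^{(n)}_{k-1}]$.

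For the drift, I would Taylor-expand $p^{(n),A},p^{(n),B}$ and $f^{(n)}$ in $(b,y)$ around $(B(t),Y(t))$ using Assumptions \ref{pxx} and \ref{fxx}. In part (b) the first-order expansion directly produces the linear coefficients $p_b Z^{B}+p_y\langle h,Z^{u}\rangle$ and $f_b Z^{B}+f_y\langle h,Z^{u}\rangle$. In part (a), since $B_t\equiv B_0$ and $B^{(n)}-B$ is of order $\sqrt{\Delta t^{(n)}}$, one must go to second order: Assumption \ref{papb} is needed precisely to annihilate the first-order drift contribution of $p^{B-A}$ at $B_0$, so that the surviving drift $\mu(Y_t)$ comes from the second-order term, expressed as the quadratic variation of $Z^{(n),B}$ weighted by $p^{B-A}_{bb}$ (and lower-order cross terms). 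The shift-operator contributions $(T_\pm^{(n)}-I)u^{(n)}/\Delta x^{(n)}$ are converted into the transport terms $\partial_x u\,dZ^{B}$ (in both parts) and the additional stochastic transport $\partial_x Z^{u}\,dB$ (in part (b)) via Lemma \ref{diffu}, tested against $\varphi\in H^3$, which is what forces convergence in $H^{-3}$ rather than $L^2$.

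For the martingale part, I would compute the conditional covariances of $\delta N^{(n)}_k$: the $B$-component contributes $(\Delta x^{(n)})^2\Delta p^{(n)}(p^A+p^B)/\Delta^{(n)}$, matching $\sigma_B^2(B,Y)$ in both regimes once the correct $\Delta^{(n)}$ is plugged in; the C-event contributions yield an $L^2$-valued covariance depending on the conditional second moments of $(\omega^{(n)}_k,\pi^{(n)}_k)$, which is where Assumption \ref{g} enters in part (a). Independence of $W^{B}$ and $M$ in part (a) follows from the mutual exclusiveness of the events $\{\phi^{(n)}_k\in\{A,B\}\}$ and $\{\phi^{(n)}_k=C\}$. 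A discrete martingale CLT (à la Jacod-Shiryaev) identifies the limit. Tightness in $D([0,T];\R\times H^{-3})$ is proved componentwise: for the real-valued $Z^{(n),B}$ by Aldous's criterion using the bounds on $D^{(n)}_k$ and $\delta N^{(n)}_k$, and for $Z^{(n),u}$ via Mitoma's theorem, which reduces tightness to that of the scalar processes $\langle Z^{(n),u}(\cdot),\varphi\rangle$ for $\varphi$ in a countable dense subset of $H^3$.

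Uniqueness of the limit is straightforward because (\ref{SPDE1}) and (\ref{SPDE2}) are linear in $(Z^{B},Z^{u})$ with deterministic Lipschitz coefficients (driven by the deterministic $(B,Y)$), so the limit of every convergent subsequence coincides and convergence in law follows; Lemma \ref{discretization} bridges the step-function process $Z^{(n)}_{\lfloor t/\Delta t^{(n)}\rfloor}$ and the càdlàg process $Z^{(n)}(t)$. I expect the principal obstacle to be in part (a): extracting the non-trivial drift $\mu$ requires a careful second-order Taylor expansion in the degenerate regime, since the leading order cancels by Assumption \ref{papb} and the survival of a sub-leading $\sqrt{\Delta t^{(n)}}$-contribution must be tracked through the price variance. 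A secondary difficulty in part (b) is giving rigorous meaning to the stochastic transport $\partial_x Z^{u}\,dB$ as a distribution-valued equation in $H^{-3}$, which must be handled by duality (moving the derivative onto test functions) according to the weak-solution concept to be specified in Theorem \ref{main2}.
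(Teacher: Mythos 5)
Your overall architecture (drift/martingale decomposition, martingale CLT, Mitoma plus Aldous for tightness in $H^{-3}$, uniqueness of the linear limit system) matches the paper's, but two of your key mechanisms are not the ones that actually work, and one of them cannot work as stated.

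First, in part (a) your source for the drift $\mu$ is wrong. You propose that Assumption \ref{papb} kills the first-order term of $p^{B-A}$ and that $\mu$ then emerges from a second-order Taylor expansion, as a $p^{B-A}_{bb}$-weighted quadratic variation of $Z^{(n),B}$. Under the stated hypotheses this expansion is not even available: part (a) does \emph{not} assume \ref{pxx}, so no second derivatives of $p^A,p^B$ exist, and in any case such an It\^o-type correction would enter at order $\Delta t^{(n)}\cdot\left(\Delta t^{(n)}\right)^{1-2\alpha+\beta}$, which does not produce an $O(1)$ drift. The actual mechanism is cruder and is the whole point of Assumption \ref{papb}: one posits that $p^{(n)}:=\left(\Delta t^{(n)}\right)^{1/2-\alpha}\left(p^{(n),A}-p^{(n),B}\right)$ converges uniformly to a Lipschitz function $p$, so that $\E\left(\delta Z^{(n),B}_k\mid\F^{(n)}_{k-1}\right)=\Delta t^{(n)}\,p^{(n)}\left(B^{(n)}_{k-1},Y^{(n)}_{k-1}\right)$ exactly, and the Riemann sum converges to $\int_0^\cdot p(B_0,Y_s)\,ds$ with $\mu(y)=p(B_0,y)$. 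No expansion in $b$ is performed; the rescaled difference of the pre-limit intensities \emph{is} the drift. (Assumption \ref{papb} forces $p^A=p^B$ in the limit, which is why $B\equiv B_0$; it does not merely ``annihilate a first-order contribution''.)

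Second, in part (b) your route to the stochastic transport term $\partial_x Z^u\,dB_t$ is circular. Converting $(T_\pm^{(n)}-I)u^{(n)}/\Delta x^{(n)}$ via Lemma \ref{diffu} only controls the discrete derivative of $u^{(n)}$ (converging to $\partial_x u$); to produce $\partial_x Z^u\,dB$ from a semimartingale decomposition of $Z^{(n),u}$ you would need to control the discrete spatial derivative of $Z^{(n),u}=\left(u^{(n)}-u\right)/\sqrt{\Delta x^{(n)}}$ itself, i.e.\ a rate of convergence of $u^{(n)}$ to $u$ that is precisely what the theorem is trying to establish. The paper explicitly avoids the semimartingale form for $Z^{(n),u}$ in this regime: it passes to absolute coordinates $v^{(n)}_k(x)=u^{(n)}_k\left(x-B^{(n)}_k\right)$, splits $Z^{(n),u}_k=X^{(n),0}_k+X^{(n),1}_k+X^{(n),2}_k$, shows the shifted martingale part $X^{(n),1}$ vanishes (Lemma \ref{x1}, via the $L^2$-martingale maximal inequality and $\alpha<1/2$), derives a closed integral representation of $\left\langle Z^{(n),u}_k,\varphi\right\rangle$ in terms of $Z^{(n),B}$ and $Z^{(n),Y}$ only, obtains an a priori Gronwall bound for tightness, and recovers $\partial_x Z^u\,dB_t$ only \emph{a posteriori} via the It\^o--Wentzell formula when proving equivalence with the weak form (\ref{zbu}). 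A related gap in part (a): identifying the limit requires closing the feedback through $Z^{(n),Y}=\left\langle Z^{(n),u},h\right\rangle$, which the paper does through a discrete Ornstein--Uhlenbeck/variation-of-constants argument; linearity of the limit equation alone does not give you convergence of this component.
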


\begin{rem}
Assumption \ref{fxx} only requires the smooth differentiability of $f$ in $(b,y)$. It can be seen from the proof of Theorem \ref{tightness}, especially equation (\ref{ZBY}) that under the additional assumption that $(b,y,x)\mapsto f(b,y;x)$ is twice continuously differentiable in all three arguments with uniformly bounded derivatives, $Z^u$ will be pathwise differentiable in $x$ and (\ref{SPDE2}) will have a strong solution.
\end{rem}

\begin{ex}
(Continuation of Example \ref{example})
Part (b) of Theorem \ref{preview} applies to Example \ref{example}. In this case, the second order approximation satisfies
\begin{eqnarray*}
dZ^B(t)&=&\frac{2(1-\Phi(Y_t))}{1+B_t}Z^B(t)dt+\frac{2B_t\Phi'(Y_t)}{1+B_t}\langle Z^u(t),h\rangle dt+dW_t,\\
dZ^u(t,x)&=&\partial_xu(t,x)dZ^B(t)+\partial_xZ^u(t,x)dB_t\\
&&-\1_{[0,\infty)}(Y_t)6Y_t^2e^{-Y_t^3}C(x-10)^2(x+10)^2\1_{[-10,10]}(x)\langle Z^u(t),h\rangle dt.
\end{eqnarray*}
\end{ex}

We will give an example illustrating part (a) of Theorem \ref{preview} in Section \ref{fast}.

\s{Application: risk management for portfolio liquidation}\label{application}

The first order approximation, Theorem \ref{LLN}, can be used to obtain endogenous order book shape functions for models of optimal portfolio liquidation under market impact, cf.~Section 4 in \cite{HP}. In this section we illustrate how the second order approximations can potentially be used to construct confidence intervals for the resulting liquidation values for the two benchmark cases of  strictly permanent and strictly non-permanent price impact.

\subs{Non-permanent price impact}\label{nonpermanent}

In the case of non-permanent price impact, it is supposed that the order book recovers infinitely fast from liquidity shocks. This assumption is appropriate for very liquid stocks. Liquid stocks typically are characterized by high trading rates, small spreads and low price volatility over shorter periods of time. Hence, we may assume that the first order approximation of the price process is constant, i.e.~$B_t\equiv B_0$ for all $t\in[0,T]$, in which case the order book shape function is given by
\[u(t,x)=u_0(x)+\int_0^tf(B_0,Y_s;x)ds\quad\forall\ (t,x)\in[0,T]\times\R.\]

\begin{rem}
The assumption of a constant price process and deterministic shape function is consistent with many liquidation models where trading costs are benchmarked against some fundamental price process that follows a martingale, but where the stochasticity eventually drops out of the optimization problem; cf.~\cite{Alfonsi} and references therein.  
\end{rem}

Let us now consider a large trader who needs to liquidate a single stock portfolio of $X>0$ shares until time $T$ and only trades at times $t_i^{(n)},\ i=0,\dots,T_n$. If $\theta_i^{(n)}$ denotes the number of shares liquidated at time $t_i^{(n)}$, then a first order approximation of the total income from liquidation under strictly non-permanent price impact is given by
\[V\left(\theta^{(n)}\right):=\sum_{i=0}^{T_n}\int_{-c_i^{(n)}}^0\left(B_0+x\right)u\left(t_i^{(n)},x\right)dx=B_0X+\sum_{i=0}^{T_n}\int_{-c_i^{(n)}}^0xu\left(t_i^{(n)},x\right)dx,\]
where 
the $c_i^{(n)},\ i=0,\dots,n,$ are defined by
\begin{equation}\label{c}
\theta_i^{(n)}=\int_{-c_i^{(n)}}^0u\left(t_i^{(n)},x\right)dx\quad \text{ and }\quad X=\sum_{i=0}^{T_n}\theta_i^{(n)}.
\end{equation}

Let $\theta^{(n),*}$ be an optimal strategy for the above liquidation problem. Then $\theta^{(n),*}$ is a deterministic function. 
Since the first order approximation of the price process is supposed to be constant, part (a) of Theorem \ref{preview} gives for large $n$ the approximation
\[B^{(n)}(t)\approx B_0+\sqrt{\Delta t^{(n)}}Z^B(t),\quad u^{(n)}(t,x)\approx u(t,x)+\sqrt{\Delta t^{(n)}}Z^u(t,x),\quad (t,x)\in[0,T]\times\R.\]
The actual liquidation value corresponding to the strategy $\theta^{(n),*}$ can thus be approximated by
\[\tilde{V}\left(\theta^{(n),*}\right):=B_0X+\sum_{i=0}^{T_n}\int_{-d^{(n),*}_i}^0\left(\sqrt{\Delta t^{(n)}}Z^B\left(t_i^{(n)}\right)+x\right)\left(u\left(t_i^{(n)},x\right)+\sqrt{\Delta t^{(n)}}Z^u\left(t_i^{(n)},x\right)\right)dx,\]
where the $d^{(n),*}_i,\ i=0,\dots,T_n,$ are defined via 
\begin{equation} \label{theta*}
	\theta^{(n),*}_i=\int_{-d^{(n),*}_i}^0u\left(t_i^{(n)},x\right)+\sqrt{\Delta t^{(n)}}Z^u\left(t_i^{(n)},x\right)dx.
\end{equation}
In the absence of permanent market impact it is reasonable to assume that the optimal liquidation strategy is such that $\sup_{i=1,\dots,T_n}\theta^{(n),*}_i=o(1)$. 
In this case, it follows from (\ref{c}) and (\ref{theta*}) 
that
\begin{equation} \label{approx1}
\tilde{V}\left(\theta^{(n),*}\right)\approx
V\left(\theta^{(n),*}\right)+\sqrt{\Delta t^{(n)}}\sum_{i=1}^{T_n}\left(Z^B\left(t_i^{(n)}\right)\theta^{(n),*}_i+\int_{-c^{(n),*}_i}^0x Z^u\left(t_i^{(n)},x\right)dx\right),
\end{equation}
where the $c^{(n),*}_i,\ i=0,\dots,T_n,$ were defined in (\ref{c}).
Since the dynamics of $(Z^B,Z^u)$ are in principal known from part (a) of Theorem \ref{preview}  the above representation allows to construct confidence intervals for the liquidation value.

\subs{Permanent price impact}

Let us now consider the opposite case, where price impact is persistent and the price as well as the volume function jump at each trading time of the large trader and then continue to evolve according to the Markovian dynamics under consideration. In this case the first order approximation gives an idea of how resilient the order book actually is, i.e.~how it reacts to state changes. 

If the large trader follows a deterministic discrete liquidation strategy $\theta^{(n)}=\left(\theta_i^{(n)}\right)$ trading at times $t_i^{(n)},\ i=0,\dots,T_n$, then the first order approximation is influenced by the trader's action and the new dynamics are given by the c\`agl\`ad process $(B^\theta, u^\theta)$ defined as 
\begin{align*}
 u^\theta(0,x)&=u_0(x),\\
u^\theta\left(t^{(n)}_i+,x\right)&=u^\theta\left(t_i^{(n)},x-c_i^{(n)}\right)\quad\text{with}\quad\theta_i^{(n)}=\int_{-c_i^{(n)}}^0u^\theta\left(t_i^{(n)},x\right)dx,\quad i=0,\dots,T_n,\\
u^\theta(t,x)&=u^\theta\left(t_i^{(n)}+,x\right)+\int_{t_i^{(n)}}^tp^{B-A}\left(B^\theta_s,Y^\theta_s\right)u_x^\theta(s,x)+f\left(B^\theta_s,Y^\theta_s;x\right)ds,\quad t\in\left(t_i^{(n)},t_{i+1}^{(n)}\right],\\
Y_t^\theta&=\left\langle u^\theta(t),h\right\rangle,\quad t\in[0,T], 
\end{align*}
and similarly
\begin{align*}
B^\theta(0)&=B_0,\qquad  B^\theta\left(t_i^{(n)}+\right)=B^\theta\left(t_i^{(n)}\right)-c_i^{(n)},\quad i=0,\dots,T_n,\\
B^\theta(t)&=B^\theta\left(t_i^{(n)}+\right)+\int_{t_i^{(n)}}^tp^{B-A}\left(B^\theta_s,Y^\theta_s\right)ds,\quad t\in\left(t_i^{(n)},t_{i+1}^{(n)}\right].
\end{align*}
Hence, the first order approximation of the liquidation value of any such liquidation strategy equals 
\[V\left(\theta^{(n)}\right):=\sum_{i=0}^{T_n}\int_{-c_i^{(n)}}^0\left(B^\theta\left(t_i^{(n)}\right)+x\right)u^\theta\left(t_i^{(n)},x\right)dx.\]
Given an optimal strategy $\theta^{(n),*}$, one would like to use the second order approximation from part (b) of Theorem \ref{preview} to construct confidence intervals for the liquidation value $V\left(\theta^{(n),*}\right)$ in a similar way as it was done in Subsection \ref{nonpermanent}. Although the exact construction is more involved in this setting, we expect the following approximation to be reasonable if the initial position is split into sufficiently many small orders $\theta^{(n),*}_i,\ i=1,\dots,T_n$ (and possibly a block trade at time $t_0=0$):
\begin{equation} \label{approx2}
\tilde{V}\left(\theta^{(n),*}\right)\approx
V\left(\theta^{(n),*}\right)+\sqrt{\Delta x^{(n)}}\sum_{i=1}^{T_n}\left(Z^{B,\theta^{*}}\left(t_i^{(n)}\right)\theta^{(n),*}_i+\int_{-c_i^{(n),*}}^0x Z^{u,\theta^{*}}\left(t_i^{(n)},x\right)dx\right),
\end{equation}
where $Z^{B,\theta^*}$ and $Z^{u,\theta^*}$ are defined in a similar fashion as $B^\theta$ and $u^\theta$ above, with their piecewise diffusive dynamics being described in part (b) of Theorem \ref{preview}. 

Comparing the two approximations (\ref{approx1}) and (\ref{approx2}) we obtain the very intuitive result that the fluctuations in the liquidation value are much smaller in the absence of permanent price impact.

\s{Renormalization on a fast time scale}\label{fast}

In this section we will derive a second order approximation to the discrete limit order book model under fast rescaling. In order to avoid that the price fluctuations explode in this case we have to assume that price movements are very rare. 

\begin{ass} \label{scaling1} 
\[\alpha\in\left(\frac{1}{2},1\right),\qquad \beta=2(1-\alpha),\qquad \Delta^{(n)}=\Delta t^{(n)}.\]
\end{ass}

In this case the first order approximation from Theorem \ref{LLN} takes the special form
\begin{eqnarray*}
B_t= B_0,\qquad u(t,x)=u_0(x)+\int_0^tf(B_s,Y_s;x)ds,\qquad Y_t=\langle h,u(t)\rangle\qquad\forall\ (t,x)\in[0,T]\times\R.
\end{eqnarray*}
Especially, the price process $B=(B_t)_{t\in[0,T]}$ is constant under Assumption \ref{scaling1}.

\begin{rem}
Assumption \ref{scaling1} consists of three parts: First, the fast rescaling rate $\Delta^{(n)}=\Delta t^{(n)}$ will ensure that the order placement and cancelation activity will lead to a diffusive behaviour of the volume fluctuations in the limit. Second, in order to control the second moments of the price fluctuations we need that $\beta\geq2(1-\alpha)$. We are here looking at the critical case $\beta=2(1-\alpha)$. If $\beta$ was strictly larger than $2(1-\alpha)$, then in the scaling limit there would be no noise term and hence the price fluctuations would be constant. Together with the third condition, $\alpha>1/2$, which controls the higher moments of the price fluctuations, this will yield the diffusive behaviour of the price fluctuations in the limit. 
\end{rem}

\subs{Fluctuations of the price process}

We define for all $n\in\N$ the function
\[p^{(n)}(b,y):=\left(\Delta t^{(n)}\right)^{1/2-\alpha}\left(p^{(n),A}(b,y)-p^{(n),B}(b,y)\right).\]

\begin{ass}\label{papb}
There is a Lipschitz continuous function $p:\R\times\R\ra\R$ such that
\[\sup_{(b,y)\in\R_+\times\R}\left| p^{(n)}(b,y)-p(b,y)\right|\ra0.\]
\end{ass} 

\begin{rem}
The difference between $p^{(n),A}$ and $p^{(n),B}$ has to scale in $\left(\Delta t^{(n)}\right)^\gamma$ with $\gamma\geq \alpha-1/2$ in order to avoid explosions of the drift part of the price fluctuations. As above we are here considering the critical case corresponding to $\gamma=\alpha-1/2$. Note however that Assumption \ref{papb} implicitely also includes the subcritical case $\gamma>\alpha-1/2$, in which case $p\equiv0$ in the results that follow below. Moreover, Assumption \ref{papb} implies that $p^A=p^B$. 
\end{rem}

Since $B^{(n)}$ converges to the constant process $B_t=B_0,\ t\in[0,T]$, we define the drift and volatility of the price fluctuation process in the limit as
\[\mu(y):=p(B_0,y),\qquad \sigma(y):=\left(p^A(B_0,y)+p^B(B_0,y)\right)^{1/2},\qquad y\in\R.\]

\begin{thm}\label{price}
Under Assumptions \ref{initial}, \ref{M}, \ref{Markov}, \ref{scaling}, \ref{scaling1} and \ref{papb} we have $Z^{(n),B}\RA Z^B$ in $D([0,T];\R)$, where $Z^B_0:=0$ and $Z^B$ evolves as
\begin{equation*}
dZ^B(t)=\mu(Y_t)dt+\sigma(Y_t)dW^B_t,\qquad t\in[0,T],
\end{equation*}
with a standard Brownian motion $W^B$.
\end{thm}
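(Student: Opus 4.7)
The natural route is a martingale functional central limit theorem applied to the Doob decomposition of $Z^{(n),B}$. Write
\[
Z^{(n),B}_k = A^{(n),B}_k + M^{(n),B}_k, \qquad A^{(n),B}_k := \sum_{j=1}^k \E\!\left[\delta Z^{(n),B}_j\,\Big|\,\F^{(n)}_{j-1}\right],
\]
so that $M^{(n),B}$ is a square-integrable $\F^{(n)}$-martingale. Using Assumption \ref{Markov}(1) and the scaling relations $\Delta x^{(n)}=(\Delta t^{(n)})^\alpha$, $\Delta p^{(n)}=(\Delta t^{(n)})^{2(1-\alpha)}$, together with the defining relation $(\Delta t^{(n)})^{1/2-\alpha}(p^{(n),A}-p^{(n),B})=p^{(n)}$, a direct calculation gives the per-event estimates
\[
\E\!\left[\delta Z^{(n),B}_k\,\Big|\,\F^{(n)}_{k-1}\right] = -\Delta t^{(n)}\,p^{(n)}\!\left(B^{(n)}_{k-1},Y^{(n)}_{k-1}\right),
\]
\[
\E\!\left[\left(\delta Z^{(n),B}_k\right)^2\,\Big|\,\F^{(n)}_{k-1}\right] = \Delta t^{(n)}\,\left(p^{(n),A}+p^{(n),B}\right)\!\left(B^{(n)}_{k-1},Y^{(n)}_{k-1}\right).
\]

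The first step is to verify the convergence of the predictable drift. Interpolating to continuous time using Lemma \ref{discretization} (whose hypothesis $\Delta t^{(n)}=o(\Delta t^{(n)})^{1/2}$ is trivially satisfied here), one rewrites $A^{(n),B}(t)$ as a Riemann sum approximating $-\int_0^t p^{(n)}(B^{(n)}_s,Y^{(n)}_s)\,ds$ with error $o(1)$. Combining the uniform convergence $p^{(n)}\to p$ from Assumption \ref{papb} with the uniform (in probability) convergence $(B^{(n)},Y^{(n)})\to(B_0,Y)$ from Theorem \ref{LLN} and the Lipschitz continuity of $p$ yields
\[
\sup_{t\le T}\left|A^{(n),B}(t)-\int_0^t \mu(Y_s)\,ds\right| \xrightarrow{\p} 0.
\]
Entirely analogously, the predictable quadratic variation satisfies
\[
\sup_{t\le T}\left|\langle M^{(n),B}\rangle_t-\int_0^t\sigma^2(Y_s)\,ds\right|\xrightarrow{\p}0,
\]
since $p^{(n),A}+p^{(n),B}\to p^A+p^B$ uniformly by Assumption \ref{Markov}(1), and $\sigma^2(y)=(p^A+p^B)(B_0,y)$.

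To obtain functional convergence of the martingale part to a continuous Gaussian martingale, it remains to verify a Lindeberg-type condition on the jumps of $M^{(n),B}$. This is exactly the point where the assumption $\alpha>1/2$ is used: the jumps of $Z^{(n),B}$ have deterministic magnitude $\Delta x^{(n)}/\sqrt{\Delta t^{(n)}}=(\Delta t^{(n)})^{\alpha-1/2}\to 0$, so the compensated jumps of $M^{(n),B}$ are uniformly $o(1)$, whence the Lindeberg condition holds trivially for every $\eps>0$ and large $n$. Applying the martingale functional central limit theorem (e.g.\ Ethier--Kurtz, Theorem 7.1.4), $M^{(n),B}$ converges weakly in $D([0,T];\R)$ to a continuous Gaussian martingale $M^B$ with quadratic variation $\int_0^t\sigma^2(Y_s)\,ds$. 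Since $Y$ is deterministic in this regime, the time-change / Dambis--Dubins--Schwarz theorem represents $M^B$ as $\int_0^t\sigma(Y_s)\,dW^B_s$ for a standard Brownian motion $W^B$.

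Combining both pieces via the continuous mapping theorem, $Z^{(n),B}=A^{(n),B}+M^{(n),B}\Rightarrow Z^B$ in $D([0,T];\R)$, where $Z^B(t)=\int_0^t\mu(Y_s)\,ds+\int_0^t\sigma(Y_s)\,dW^B_s$, as claimed. The main technical point is really the interplay between the two scaling exponents: one needs $\beta=2(1-\alpha)$ to make the quadratic variation neither degenerate nor explode, while the condition $\alpha>1/2$ is exactly what suppresses the individual jump sizes after rescaling and hence yields a continuous limit; I expect that the bookkeeping for the discrete-to-continuous-time interpolation and the uniform control of $p^{(n)}(B^{(n)},Y^{(n)})$ against $p(B_0,Y)$ will be the most delicate but ultimately routine step.
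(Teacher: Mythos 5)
Your proposal is correct and follows essentially the same route as the paper: Doob decomposition of $Z^{(n),B}$, convergence of the predictable drift to $\int_0^\cdot\mu(Y_s)\,ds$ via Assumption \ref{papb} and Theorem \ref{LLN}, convergence of the predictable quadratic variation to $\int_0^\cdot\sigma^2(Y_s)\,ds$, a Lindeberg condition hinging on $\alpha>1/2$, and a martingale FCLT. The only (cosmetic) differences are that you verify Lindeberg via the deterministic jump bound $(\Delta t^{(n)})^{\alpha-1/2}\to0$ rather than the paper's fourth-conditional-moment estimate, and you cite Ethier--Kurtz instead of Jacod--Shiryaev; note also that your displayed drift $-\Delta t^{(n)}p^{(n)}$ and the claimed limit $+\int_0^t\mu(Y_s)\,ds$ carry opposite signs (a convention slip already present in the paper's own computation).
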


\begin{proof}
First note that we have $Z^{(n),B}(t)=Z^{(n),B}_{\lfloor t/\Delta t^{(n)}\rfloor}$ for any $t\in[0,T]$, since $B$ is constant.
We write 
\[Z^{(n),B}(t)=Z_0^{(n),B}+\sum_{k=1}^{\lfloor t/\Delta t^{(n)}\rfloor}\delta W^{(n),B}_k+\sum_{k=1}^{\lfloor t/\Delta t^{(n)}\rfloor}\E\left(\left.\delta Z_k^{(n),B}\right|\F_{k-1}^{(n)}\right),\quad t\in[0,T],\] 
with
\[\delta W^{(n),B}_k:=\delta Z_k^{(n),B}-\E\left(\left.\delta Z_k^{(n),B}\right|\F_{k-1}^{(n)}\right).\]
First we show the convergence of the sum of conditional expectations of the increments. By definition
\[\delta Z_k^{(n),B}=\left(\Delta t^{(n)}\right)^{-1/2}\delta B_k^{(n)}=\left(\Delta t^{(n)}\right)^{-1/2}\Delta x^{(n)} \left(\1_B\left(\phi_k^{(n)}\right)-\1_A\left(\phi_k^{(n)}\right)\right)\]
and by Assumptions \ref{Markov}, \ref{scaling}, and \ref{scaling1},
\begin{eqnarray*}
\E\left(\left.\delta Z_k^{(n),B}\right|\F_{k-1}^{(n)}\right)&=&\frac{\Delta x^{(n)}}{\left(\Delta t^{(n)}\right)^{1/2}}\Delta p^{(n)}\left(p^{(n),B}\left(B_{k-1}^{(n)},Y_{k-1}^{(n)}\right)-p^{(n),A}\left(B_{k-1}^{(n)},Y_{k-1}^{(n)}\right)\right)\\
&=&\left(\Delta t^{(n)}\right)^{3/2-\alpha}\left(\Delta t^{(n)}\right)^{\alpha-1/2}p^{(n)}\left(B_{k-1}^{(n)},Y_{k-1}^{(n)}\right).
\end{eqnarray*}
Applying Theorem \ref{LLN}, Assumption \ref{papb}, and the triangle inequality one has
\[\sup_{t\leq T}\left|p^{(n)}\left(B^{(n)}(t),Y^{(n)}(t)\right)-p\left(B(t),Y(t)\right)\right|\stackrel{\p}{\longrightarrow}0,\]
which implies that indeed
\[\sum_{k=1}^{\lfloor \cdot/\Delta t^{(n)}\rfloor}\E\left(\left.\delta Z_k^{(n),B}\right|\F_{k-1}^{(n)}\right)\stackrel{\p}{\longrightarrow}\int_0^\cdot p(B_0,Y_t)dt\quad\text{in}\quad D([0,T];\R).\]
Next we show the convergence of the sum of martingale differences. Since $\beta=2(1-\alpha)$ by Assumption \ref{scaling1} we have
\begin{eqnarray*}
\E\left(\left.\left[\delta Z_k^{(n),B}\right]^2\right|\F_{k-1}^{(n)}\right)&=&\frac{\left(\Delta x^{(n)}\right)^2}{\Delta t^{(n)}}\Delta p^{(n)}\left(p^{(n),B}\left(B_{k-1}^{(n)},Y_{k-1}^{(n)}\right)+p^{(n),A}\left(B_{k-1}^{(n)},Y_{k-1}^{(n)}\right)\right)\\
&=&\Delta t^{(n)}\left(p^{(n),B}\left(B_{k-1}^{(n)},Y_{k-1}^{(n)}\right)+p^{(n),A}\left(B_{k-1}^{(n)},Y_{k-1}^{(n)}\right)\right).
\end{eqnarray*}
Now by Assumption \ref{Markov} and Theorem \ref{LLN},
\[\sup_{t\leq T}\left|p^{(n),B}\left(B^{(n)}(t),Y^{(n)}(t)\right)+p^{(n),A}\left(B^{(n)}(t),Y^{(n)}(t)\right)-p^A(B_0,Y_t)-p^B(B_0,Y_t)\right|\stackrel{\p}{\longrightarrow}0.\]
Hence for all $t\in[0,T]$,
\begin{align*}
\sum_{k=1}^{\lfloor t/\Delta t^{(n)}\rfloor}\E\left(\left.\left(\delta W^{(n),B}_k\right)^2\right|\F_{k-1}^{(n)}\right)=&\sum_{k=1}^{\lfloor t/\Delta t^{(n)}\rfloor}\E\left(\left.\left[\delta Z_k^{(n),B}\right]^2\right|\F_{k-1}^{(n)}\right)-\left(\E\left(\left.\delta Z_k^{(n),B}\right|\F_{k-1}^{(n)}\right)\right)^2\\
&
\stackrel{\p}{\longrightarrow} \int_0^tp^A(B_0,Y_u)+p^B(B_0,Y_u)du=\int_0^t\sigma^2(Y_u)du.
\end{align*}
Moreover, 
\begin{eqnarray*}
\E\left(\left.\left(\delta Z_k^{(n),B}\right)^4\right|\F_{k-1}^{(n)}\right)&=&\frac{\left(\Delta x^{(n)}\right)^4}{\left(\Delta t^{(n)}\right)^2}\Delta p^{(n)}\left(p^{(n),B}\left(B_{k-1}^{(n)},Y_{k-1}^{(n)}\right)+p^{(n),A}\left(B_{k-1}^{(n)},Y_{k-1}^{(n)}\right)\right)\\
&=&\left(\Delta t^{(n)}\right)^{2\alpha}\left(p^{(n),B}\left(B_{k-1}^{(n)},Y_{k-1}^{(n)}\right)+p^{(n),A}\left(B_{k-1}^{(n)},Y_{k-1}^{(n)}\right)\right)=o\left(\Delta t^{(n)}\right),
\end{eqnarray*}
since $\alpha>1/2$ by Assumption \ref{scaling1}. Thus, the conditional Lindeberg condition is satisfied because
\begin{eqnarray*}
\sum_{k=1}^{T_n}\E\left(\left.\left(\delta W_k^{(n),B}\right)^4\right|\F_{k-1}^{(n)}\right)&\leq&16\sum_{k=1}^{T_n}\E\left(\left.\left(\delta Z_k^{(n),B}\right)^4\right|\F_{k-1}^{(n)}\right)=o(1).
\end{eqnarray*}
Therefore, Theorem 3.33 in \cite{JS} implies the weak convergence of $\sum_{k=1}^{\cdot/\Delta t^{(n)}}\delta W_k^{(n),B},\ n\in\N$, to a Gaussian martingale with covariance function $(s,t)\mapsto\int_0^{s\wedge t}\sigma^2(Y_u)du$. Thus,
\[\sum_{k=1}^{\cdot/\Delta t^{(n)}}\delta W_k^{(n),B}\RA \int_0^\cdot \sigma(Y_t)dW^B_t,\]
where $W^B$ is a standard Brownian motion. Finally, Assumption \ref{initial} implies that $Z_0^{(n),B}\ra 0$. 
\end{proof}

\subs{Fluctuations of the volume function}

To get convergence of the volume fluctuations we need to control the second moment of order placements respectively cancelations.

\begin{ass}\label{g}
There exist measurable functions $g^{(n)}:\R\times\R\ra L^1(\R),\ n\in\N,$ 
such that for all $k=1,\dots,T_n$,
\[\Delta x^{(n)} g^{(n)}\left(B_{k-1}^{(n)},Y_{k-1}^{(n)};\cdot\right)=\E\left(\left.\1_C\left(\phi_k^{(n)}\right)\left(\omega_k^{(n)}\right)^2\1_{I^{(n)}\left(\pi_k^{(n)}\right)}(\cdot)\right|\F_{k-1}^{(n)}\right)\quad a.s.\]
Moreover, there exist $C<\infty$ and a Lipschitz continuous function $g:\R\times\R\ra L^1(\R)$ such that
\[\sup_{b,y}\left\Vert g^{(n)}(b,y)\right\Vert_\infty\leq C\quad\forall\ n\in\N\qquad\text{and}\qquad\sup_{b,y}\int_\R \left|g^{(n)}(b,y;x)-g(b,y;x)\right|dx\ra0.\]
\end{ass}

For any $\varphi\in L^2(\R)$ we define the function
\[\sigma_\varphi(y):=\left(\left(\left\langle g(B_0,y),\varphi^2\right\rangle-\left\langle f(B_0,y),\varphi\right\rangle^2\right)\vee 0\right)^{1/2},\quad y\in\R.\]
We recall that by definition for all $k=1,\dots,T_n$,
\[\delta Z_k^{(n),u}=\left(\Delta t^{(n)}\right)^{-1/2}\left(\delta u_k^{(n)}-\int_{t_{k-1}^{(n)}}^{t_k^{(n)}} f(B_0,Y_u;\cdot)du\right)\]
and
\begin{eqnarray*}
\E\left(\left.\delta u_k^{(n)}\right|\F_{k-1}^{(n)}\right)&=&\E\left(\left.\1_A\left(\phi_k^{(n)}\right)\left(T^{(n)}_--I\right)\left(u_{k-1}^{(n)}\right)+\1_B\left(\phi_k^{(n)}\right)\left(T^{(n)}_+-I\right)\left(u_{k-1}^{(n)}\right)+\Delta v^{(n)}M_k^{(n)}\right|\F_{k-1}^{(n)}\right)\\
&=&\Delta p^{(n)}p^{(n),A}\left(B_{k-1}^{(n)},Y_{k-1}^{(n)}\right)\left(T^{(n)}_--I\right)\left(u_{k-1}^{(n)}\right)\\
&&+\Delta p^{(n)}p^{(n),B}\left(B_{k-1}^{(n)},Y_{k-1}^{(n)}\right)\left(T^{(n)}_+-I\right)\left(u_{k-1}^{(n)}\right)+\Delta t^{(n)} f^{(n)}\left(B_{k-1}^{(n)},Y_{k-1}^{(n)}\right).
\end{eqnarray*}
In order to prove the convergence of $Z^{(n),u},\ n\in\N$, we make the following decomposition:
\[Z^{(n),u}_{k}=U^{(n)}_k+\sum_{j=1}^k\delta W^{(n)}_j+\sum_{j=1}^k\delta A^{(n)}_{j},\quad k=1,\dots,T_n,\] 
where
\begin{eqnarray*}
U^{(n)}_k&:=&Z_0^{(n),u}+\sum_{j=1}^{k}\left(\Delta t^{(n)}\right)^{-1/2}\left(\E\left(\left.\delta u_j^{(n)}\right|\F_{j-1}^{(n)}\right)-\Delta t^{(n)}f\left(B_{j-1}^{(n)},Y^{(n)}_{j-1}\right)\right)\\
\delta W^{(n)}_k&:=&\delta Z_k^{(n),u}-\E\left(\left.\delta Z_k^{(n),u}\right|\F_{k-1}^{(n)}\right)=\left(\Delta t^{(n)}\right)^{-1/2}
\left(\delta u_k^{(n)}-\E\left(\left.\delta u_k^{(n)}\right|\F_{k-1}^{(n)}\right)\right),\\
\delta A^{(n)}_{k}&:=&\left(\Delta t^{(n)}\right)^{-1/2}\left(\Delta t^{(n)}f\left(B_{k-1}^{(n)},Y^{(n)}_{k-1}\right)-\int_{t^{(n)}_{k-1}}^{t^{(n)}_k}f\left(B_0,Y(u)\right)du\right).
\end{eqnarray*}
By Lemma \ref{discretization} we have
\[\sup_{t\in[0,T]}\left\Vert Z^{(n),u}_{\lfloor t/\Delta t^{(n)}\rfloor}-Z^{(n),u}(t)\right\Vert_{L^2}\stackrel{\p}{\longrightarrow}0.\]
Therefore, it suffices to prove the convergence of the discrete processess $\left(Z^{(n),u}_k\right)_{k\leq T_n},\ n\in\N$, considered as piecewise constant processes on $[0,T]$, which will be done in several steps.

\begin{lem}\label{u}
Under Assumptions \ref{initial}, \ref{M}, \ref{Markov}, \ref{scaling}, \ref{scaling1}, and \ref{papb} we have for any $K>0$, 
\[\sup_{t\in[0,T]}\left|\left\langle U^{(n)}_{\lfloor t/\Delta t^{(n)}\rfloor}-\int_0^t\mu\left(Y_s\right) \partial_x u(s) ds,\varphi\right\rangle\right|=o_\p\left(\left\Vert \varphi\right\Vert_{H^3}\right).\]
Especially, $U^{(n)}$ converges in $D\left([0,T];H^{-3}\right)$ to $U:=\int_0^\cdot\mu\left(Y_s\right)\partial_x u(s) ds$.
\end{lem}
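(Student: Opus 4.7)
The plan is to pair $U^{(n)}_{\lfloor t/\Delta t^{(n)}\rfloor}$ with a test function $\varphi\in H^3$ and decompose the scaled increment
\[(\Delta t^{(n)})^{-1/2}\bigl(\E[\delta u_j^{(n)}\mid\F_{j-1}^{(n)}]-\Delta t^{(n)} f(B_{j-1}^{(n)},Y_{j-1}^{(n)})\bigr)\]
into three pieces coming from Assumption \ref{Markov}(2): the A-shift contribution proportional to $\Delta p^{(n)} p^{(n),A}(T_-^{(n)}-I)u_{j-1}^{(n)}$, the analogous B-shift contribution involving $T_+^{(n)}$, and the $f$-approximation error $(\Delta t^{(n)})^{1/2}(f^{(n)}-f)(B_{j-1}^{(n)},Y_{j-1}^{(n)})$. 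The goal is to show that, summed over $j\leq\lfloor t/\Delta t^{(n)}\rfloor$ and tested against $\varphi$, the two shift pieces combine into a Riemann approximation of $\int_0^t\mu(Y_s)\langle\partial_x u(s),\varphi\rangle\,ds$, while the $f$-error and the initial piece $\langle Z_0^{(n),u},\varphi\rangle$ are asymptotically negligible.

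The $f$-error piece is bounded using Assumption \ref{Markov}(2) by $T(\Delta t^{(n)})^{\alpha-1/2}\Vert\varphi\Vert_{L^2}\to0$ since $\alpha>1/2$, and the initial piece is handled analogously via Assumption \ref{initial}. For the shift contributions I would use the dual identity $\langle(T_\pm^{(n)}-I)u^{(n)},\varphi\rangle=\langle u^{(n)},(T_\mp^{(n)}-I)\varphi\rangle$ to move the discrete difference from $u^{(n)}$ onto $\varphi$, and then Taylor-expand to second order, obtaining
\[\bigl\langle p^{(n),A}(T_-^{(n)}-I)u_{j-1}^{(n)}+p^{(n),B}(T_+^{(n)}-I)u_{j-1}^{(n)},\varphi\bigr\rangle=\Delta x^{(n)}\bigl(p^{(n),A}-p^{(n),B}\bigr)\langle u_{j-1}^{(n)},\varphi'\rangle+O\bigl((\Delta x^{(n)})^2\Vert\varphi\Vert_{H^2}\bigr)\]
uniformly in $j$, with an $O$-constant depending only on $\Vert u^{(n)}\Vert_{L^2}$, which is bounded in probability by Theorem \ref{LLN}. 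Multiplying by $\Delta p^{(n)}$, dividing by $(\Delta t^{(n)})^{1/2}$, and inserting $p^{(n),A}-p^{(n),B}=(\Delta t^{(n)})^{\alpha-1/2}p^{(n)}$ from the definition of $p^{(n)}$ gives
\[\Delta t^{(n)}\,p^{(n)}(B_{j-1}^{(n)},Y_{j-1}^{(n)})\,\langle u_{j-1}^{(n)},\varphi'\rangle+O\bigl((\Delta t^{(n)})^{3/2}\Vert\varphi\Vert_{H^2}\bigr);\]
the exponents collapse to $1$ and $3/2$ precisely because $\beta=2(1-\alpha)$, so the summed error over $T/\Delta t^{(n)}$ terms is $O((\Delta t^{(n)})^{1/2}\Vert\varphi\Vert_{H^2})\to0$.

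What remains is uniform-in-$t$ Riemann-sum convergence of
\[\sum_{j=1}^{\lfloor t/\Delta t^{(n)}\rfloor}\Delta t^{(n)}\,p^{(n)}(B_{j-1}^{(n)},Y_{j-1}^{(n)})\langle u_{j-1}^{(n)},\varphi'\rangle\longrightarrow\int_0^t p(B_0,Y_s)\langle u(s),\varphi'\rangle\,ds,\]
which follows from Assumption \ref{papb}, the $L^2$-in-probability convergence $\sup_t\Vert u^{(n)}(t)-u(t)\Vert_{L^2}\to0$ from Theorem \ref{LLN}, and the continuity of $s\mapsto(B_0,Y_s,u(s))$ from the ODE-PDE representation. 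A single integration by parts turns $\langle u(s),\varphi'\rangle$ into $-\langle\partial_x u(s),\varphi\rangle$, producing the integrand $\mu(Y_s)\langle\partial_x u(s),\varphi\rangle$ required by the lemma (modulo the sign convention adopted for $\mu$). The main obstacle is the backward shift $T_-^{(n)}$: Lemma \ref{diffu} as stated only treats the forward difference of $u^{(n)}$, so it must be transferred onto the test function via duality, with the resulting second-order Taylor remainder controlled by $\Vert\varphi''\Vert_{L^2}$ — this is the reason $H^3$-regularity of $\varphi$ appears in the error bound. Once the pointwise (in $\varphi$) uniform-in-$t$ estimate is established, convergence of $U^{(n)}$ to $U$ in $D([0,T];H^{-3})$ follows from the characterisation of $H^{-3}$-valued convergence through pairings with $H^3$ test functions together with the continuity of $t\mapsto U(t)$.
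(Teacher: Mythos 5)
Your argument is correct and follows essentially the same route as the paper: the same three-way decomposition of $U^{(n)}$ (shift terms, $f^{(n)}-f$ error, initial term), the same extraction of a main term of order $\Delta t^{(n)}\,p^{(n)}$ with an $O\bigl((\Delta t^{(n)})^{3/2}\Vert\varphi\Vert_{H^2}\bigr)$ second-order remainder, and the same passage to the limit via Theorem \ref{LLN} and Assumption \ref{papb}; the only cosmetic difference is that you inline the duality-plus-Taylor argument (which is precisely the proof of Lemma \ref{diffu}) and keep the derivative on $\varphi$ until a final integration by parts, whereas the paper transfers it to $u$ earlier by invoking Lemma \ref{diffu}. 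On your sign caveat: the minus sign you obtain is the one dictated by the paper's literal definition $p^{(n)}=(\Delta t^{(n)})^{1/2-\alpha}\bigl(p^{(n),A}-p^{(n),B}\bigr)$, while the paper's own computations (here and in Theorem \ref{price}) implicitly use $(\Delta t^{(n)})^{1/2-\alpha}\bigl(p^{(n),B}-p^{(n),A}\bigr)$, so the discrepancy is a typo in the definition of $p^{(n)}$ rather than a gap in your proof.
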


\begin{proof}
First, we see that Assumptions \ref{initial}, \ref{scaling}, and \ref{scaling1} imply that $Z_0^{(n),u}\ra 0$. Furthermore, from Assumptions \ref{scaling}, \ref{scaling1} and \ref{papb},\small
\begin{align*}
&\left(\Delta t^{(n)}\right)^{-1/2}\Delta p^{(n)}\left\langle \varphi,p^{(n),A}\left(B_{k-1}^{(n)},Y_{k-1}^{(n)}\right)\left(T^{(n)}_--I\right)\left(u_{k-1}^{(n)}\right)+p^{(n),B}\left(B_{k-1}^{(n)},Y_{k-1}^{(n)}\right)\left(T^{(n)}_+-I\right)\left(u_{k-1}^{(n)}\right)\right\rangle\\
&=\Delta t^{(n)}p^{(n)}\left(B_{k-1}^{(n)},Y_{k-1}^{(n)}\right)\left\langle \varphi,\frac{1}{\Delta x^{(n)}}\left(T^{(n)}_+-I\right)\left(u_{k-1}^{(n)}\right)\right\rangle\\
&\qquad+\left(\Delta t^{(n)}\right)^{3/2} p^{(n),A}\left(B_{k-1}^{(n)},Y_{k-1}^{(n)}\right)\left\langle \frac{1}{\Delta x^{(n)}}\left(I-T_+^{(n)}\right)(\varphi),\frac{1}{\Delta x^{(n)}}\left(T^{(n)}_+-I\right)\left(u_{k-1}^{(n)}\right)\right\rangle,
\end{align*}\normalsize
and by Lemma \ref{diffu} uniformly in $t\in[0,T]$,
\[\left\langle \varphi,\frac{1}{\Delta x^{(n)}}\left(T^{(n)}_+-I\right)\left(u^{(n)}(t)\right)\right\rangle=\langle \partial_x u(t),\varphi\rangle+o_\p\left(\left\Vert \varphi\right\Vert_{H^2}\right)\]
as well as
\[\left\langle \frac{1}{\Delta x^{(n)}}\left(I-T_+^{(n)}\right)(\varphi),\frac{1}{\Delta x^{(n)}}\left(T^{(n)}_+-I\right)\left(u^{(n)}(t)\right)\right\rangle=\left\langle \varphi'',u(t)\right\rangle+o_\p\left(\left\Vert \varphi\right\Vert_{H^3}\right).\]
Also by Assumptions \ref{Markov}, \ref{scaling}, and \ref{scaling1},\small
\begin{eqnarray*}
\sup_{k\leq T_n}\left|\left\langle \varphi,f^{(n)}\left(B^{(n)}_{k-1},Y_{k-1}^{(n)}\right)-f\left(B_{k-1}^{(n)},Y_{k-1}^{(n)}\right)\right\rangle\right|\leq \left\Vert \varphi\right\Vert_{L^2}\sup_{b,y}\left\Vert f^{(n)}\left(b,y\right)-f\left(b,y\right)\right\Vert_{L^2}=o\left(\Delta t^{(n)}\right)^{1/2}\left\Vert \varphi\right\Vert_{L^2}.
\end{eqnarray*}\normalsize
Therefore,
\[\left\langle U^{(n)}_k,\varphi\right\rangle=\Delta t^{(n)}\sum_{j=1}^{k}p^{(n)}\left(B_{j-1}^{(n)},Y_{j-1}^{(n)}\right)\left\langle \frac{1}{\Delta x^{(n)}}\left(T^{(n)}_+-I\right)\left(u_{j-1}^{(n)}\right),\varphi\right\rangle+o_\p\left(\left\Vert \varphi\right\Vert_{H^3}\right),\]
where the error term is uniform in $k\leq T_n$. Thus, the claim follows from Lemma \ref{diffu}, Assumption \ref{papb}, and Theorem \ref{LLN}.
\end{proof}

Next we analyse the martingale part of $Z^{(n),u}$. For this we fix a basis $(e_i)_{i\in \N}$ of $L^2(\R)$ and set $\sigma_i(\cdot):=\sigma_{e_i}(\cdot),\ i\in\N$.

\begin{lem}\label{wi}
Under Assumptions \ref{initial}, \ref{M}, \ref{Markov}, \ref{scaling}, \ref{scaling1}, \ref{papb}, and \ref{g} we have in $D\left([0,T];\R\times H^{-3}\right)$ the convergence \small
\[\left(\sum_{k=1}^{\lfloor\cdot/\Delta t^{(n)}\rfloor}\delta W_k^{(n),B},\sum_{k=1}^{\lfloor\cdot/\Delta t^{(n)}\rfloor}\delta W_k^{(n)}\right)\RA\left(\int_0^\cdot\sigma_B(Y_t)dW^B_t,\int_0^\cdot \sigma_B(Y_t) \partial_x u(t) dW^B_t+\sum_ie_i(x)\int_0^\cdot \sigma_i\left(Y_t\right)dW^i_t\right),\]\normalsize
where $W^B$ is a standard Brownian motion independent of all the $W^i,\ i\in\N,$ and for all $i,j\in\N$ the covariation of the standard Brownian motions $W^i$ and $W^j$ is given by
\begin{equation}\label{covariance}
\langle W^i,W^j\rangle_t=\int_0^t\frac{\langle g(B_0,Y_u), e_ie_j\rangle-\langle f(B_0,Y_u),e_i\rangle\langle f(B_0,Y_u),e_j\rangle}{\sigma_i(Y_u)\sigma_j(Y_u)}du,\quad t\in[0,T].
\end{equation}
\end{lem}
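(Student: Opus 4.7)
The strategy is to recognize $\bigl(\delta W_k^{(n),B}, \delta W_k^{(n)}\bigr)_{k\leq T_n}$ as an $\R\times L^2$-valued martingale difference array with respect to $\bigl(\F_k^{(n)}\bigr)$, and then apply a functional martingale CLT. I will test the $L^2$-coordinate against smooth functions $\varphi\in H^3$, establish finite-dimensional convergence of the scalar martingales $\bigl(\sum_{k\leq\lfloor\cdot/\Delta t^{(n)}\rfloor}\delta W_k^{(n),B}, \sum_{k\leq\lfloor\cdot/\Delta t^{(n)}\rfloor}\langle\delta W_k^{(n)},\varphi_i\rangle\bigr)_{i=1,\dots,m}$ via Theorem 3.33 in \cite{JS}, and then use Mitoma's theorem to lift the result to tightness in $D([0,T];\R\times H^{-3})$. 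The limiting process must be identified through the joint quadratic covariation matrix.

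The core computation is to identify the three conditional covariation limits. The first, $\sum_k\E\bigl(|\delta W_k^{(n),B}|^2|\F_{k-1}^{(n)}\bigr)\to\int_0^\cdot\sigma_B^2(Y_u)du$, is already established in the proof of Theorem \ref{price}. For the cross-covariation, I split $\delta u_k^{(n)}$ into its three event contributions: on $\{\phi_k^{(n)}=A\}$ we have $\delta Z_k^{(n),B}=-\Delta x^{(n)}/\sqrt{\Delta t^{(n)}}$ paired with $\delta u_k^{(n)}=(T_-^{(n)}-I)u_{k-1}^{(n)}$, on $\{\phi_k^{(n)}=B\}$ we get $+\Delta x^{(n)}/\sqrt{\Delta t^{(n)}}$ paired with $(T_+^{(n)}-I)u_{k-1}^{(n)}$, and on $\{\phi_k^{(n)}=C\}$ the price is unchanged so there is no contribution. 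After centering and applying Lemma \ref{diffu} together with Assumption \ref{papb}, this gives
\begin{equation*}
\sum_{k\leq\lfloor\cdot/\Delta t^{(n)}\rfloor}\E\bigl(\delta W_k^{(n),B}\langle\delta W_k^{(n)},\varphi\rangle\mid\F_{k-1}^{(n)}\bigr)\stackrel{\p}{\longrightarrow}\int_0^\cdot\sigma_B^2(Y_u)\langle\partial_xu(u),\varphi\rangle\,du.
\end{equation*}
For the volume-volume covariation, the $A/B$ contributions again use the shift structure and produce $\sigma_B^2(Y_u)\langle\partial_xu(u),\varphi\rangle\langle\partial_xu(u),\psi\rangle$, whereas the $C$-contribution, using Assumption \ref{g} for the second conditional moment of the placement kernel together with the first moment from Assumption \ref{Markov}(ii), yields exactly $\langle g(B_0,Y_u),\varphi\psi\rangle-\langle f(B_0,Y_u),\varphi\rangle\langle f(B_0,Y_u),\psi\rangle$; the cross-term between $A/B$ and $C$ events vanishes since they are disjoint.

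To close the CLT I verify a conditional Lindeberg condition via a fourth-moment bound: $A$/$B$ events contribute $\mathcal{O}\bigl((\Delta x^{(n)})^4/(\Delta t^{(n)})^2\bigr)\Delta p^{(n)} = o(\Delta t^{(n)})$ per step (using $\alpha>1/2$ and $\beta=2(1-\alpha)$), while $C$-events contribute $\mathcal{O}\bigl((\Delta v^{(n)}/\Delta x^{(n)})^4\cdot\Delta x^{(n)}\bigr) = o(\Delta t^{(n)})$ when tested against $\varphi\in H^3\hookrightarrow C_\infty$, by boundedness of $\omega_k^{(n)}$ from Assumption \ref{M}. Summing over $k\leq T_n$ gives an $o(1)$ Lyapunov bound. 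By Theorem 3.33 in \cite{JS}, the joint scalar martingales converge to a continuous Gaussian martingale with the covariation matrix identified above; this is precisely the bracket of $\bigl(\int\sigma_B(Y)dW^B, \int\sigma_B(Y)\langle\partial_xu,\varphi\rangle dW^B+\sum_i\langle e_i,\varphi\rangle\int\sigma_i(Y)dW^i\bigr)$ with the covariance (\ref{covariance}), from which independence of $W^B$ and the $W^i$ follows by vanishing of the corresponding bracket. The main obstacle is the bookkeeping for tightness in $H^{-3}$: I expect to apply Mitoma's criterion, which reduces tightness in $D([0,T];H^{-3})$ to tightness of all scalar projections $\langle\cdot,\varphi\rangle$ for $\varphi\in H^3$, where the $H^3$-smoothness is dictated precisely by Lemma \ref{diffu}. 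Uniqueness in law of the limit follows since the covariation structure determines the continuous Gaussian martingale.
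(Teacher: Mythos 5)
Your plan follows essentially the same route as the paper: treat $\bigl(\delta W_k^{(n),B},\langle\delta W_k^{(n)},\varphi\rangle\bigr)$ as a martingale difference array, compute the three conditional covariations by splitting $\delta u_k^{(n)}$ over the $A$/$B$/$C$ events (with Lemma \ref{diffu} turning the shift terms into $\langle\partial_x u,\varphi\rangle$ and Assumption \ref{g} producing the $\langle g,\varphi\psi\rangle-\langle f,\varphi\rangle\langle f,\psi\rangle$ part), verify Lindeberg via fourth moments, invoke Theorem 3.33 of \cite{JS}, and pass to the infinite-dimensional statement. The covariation limits you state agree with the paper's.

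There is one step where the plan as written would not quite close. Mitoma's theorem (Theorem 6.13 in \cite{Walsh}) only upgrades tightness of the scalar projections $\langle\cdot,\varphi\rangle$, $\varphi\in\mathcal{E}$, to tightness (and convergence) in $D([0,T];\mathcal{E}')$; it does not by itself place the limit, or the convergence, in $D([0,T];H^{-3})$. The paper needs an additional argument (Corollary 6.16 in \cite{Walsh}): for every $m,\eps>0$ one must find $\delta>0$ such that $\p\bigl(\sup_{k\leq T_n}\bigl|\sum_{j\leq k}\delta W_j^{(n),\varphi}\bigr|>m\bigr)\leq\eps$ uniformly in $n$ for all $\varphi$ with $\Vert\varphi\Vert_{H^3}<\delta$; this is obtained via Doob's inequality together with the bounds $\E\langle M_k^{(n)},\varphi\rangle^2\leq C\Vert\varphi\Vert_{L^2}^2$ and Lemma \ref{diffu}. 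You should add this estimate. Two smaller points: your per-step fourth-moment bookkeeping for $C$-events, $\mathcal{O}\bigl((\Delta v^{(n)}/\Delta x^{(n)})^4\Delta x^{(n)}\bigr)$, is not the right expression (the correct bound is $(\Delta t^{(n)})^{-2}(\Delta v^{(n)})^4M^4\Vert\varphi\Vert_\infty^4=\mathcal{O}(\Delta t^{(n)})^2$, since $\frac{1}{\Delta x^{(n)}}\int_{I^{(n)}(\pi)}\varphi$ is simply bounded by $\Vert\varphi\Vert_\infty$ with no $\Delta x^{(n)}$ gain), though your conclusion $o(\Delta t^{(n)})$ is correct; and to identify the limit in the stated form you still need to check that $\sigma_\varphi^2=\langle g,\varphi^2\rangle-\langle f,\varphi\rangle^2\geq0$ (a conditional-variance argument) and that $\int\sigma_\varphi(Y)dW^\varphi$ has the same law as $\sum_i\langle\varphi,e_i\rangle\int\sigma_i(Y)dW^i$ with covariations (\ref{covariance}), which the paper does by matching quadratic variations of Gaussian martingales.
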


\begin{proof}
For $\varphi\in H^3$ we have
\begin{align*}
&\left(\Delta t^{(n)}\right)^{-2}\E\left(\left.\left\langle \varphi,\delta u_k^{(n)}\right\rangle^2\right|\F_{k-1}^{(n)}\right)=\left\langle \varphi, \frac{1}{\Delta x^{(n)}}\left(T_-^{(n)}-I\right)\left(u_{k-1}^{(n)}\right)\right\rangle^2p^{(n),A}\left(B_{k-1}^{(n)},Y_{k-1}^{(n)}\right)\\
&\qquad\qquad\qquad+\left\langle \varphi, \frac{1}{\Delta x^{(n)}}\left(T_+^{(n)}-I\right)\left(u_{k-1}^{(n)}\right)\right\rangle^2p^{(n),B}\left(B_{k-1}^{(n)},Y_{k-1}^{(n)}\right)+\E\left(\left.\left\langle \varphi,M_k^{(n)}\right\rangle^2\right|\F_{k-1}^{(n)}\right).
\end{align*}
We first deal with the last summand. By definition \small
\begin{eqnarray*}
\E\left(\left.\left\langle \varphi,M_k^{(n)}\right\rangle^2\right|\F_{k-1}^{(n)}\right)&=&\E\left(\left.\1_C\left(\phi_k^{(n)}\right)\left(\omega^{(n)}_k\right)^2\left(\frac{1}{\Delta x^{(n)}}\int_{I^{(n)}\left(\pi_k^{(n)}\right)}\varphi(x)dx\right)^2\right|\F^{(n)}_{k-1}\right)\\
&=&\sum_j\E\left(\left.\1_C\left(\phi_k^{(n)}\right)\left(\omega^{(n)}_k\right)^2\1_{I^{(n)}\left(\pi_k^{(n)}\right)}\left(x_j^{(n)}\right)\right|\F^{(n)}_{k-1}\right)\left(\frac{1}{\Delta x^{(n)}}\int_{I^{(n)}\left(x_j^{(n)}\right)}\varphi(x)dx\right)^2\\
&=&\sum_j \Delta x^{(n)}g^{(n)}\left(B_{k-1}^{(n)},Y_{k-1}^{(n)};x_j^{(n)}\right)\left(\frac{1}{\Delta x^{(n)}}\int_{I^{(n)}\left(x_j^{(n)}\right)}\varphi(y)dy\right)^2\\
&=&\int_\R g^{(n)}\left(B_{k-1}^{(n)},Y_{k-1}^{(n)};x\right)\left(\frac{1}{\Delta x^{(n)}}\int_{I^{(n)}(x)}\varphi(y)dy\right)^2dx
\end{eqnarray*}\normalsize
and by Assumption \ref{g}
\[\sup_{k\leq T_n}\int_\R \left(g^{(n)}\left(B_{k-1}^{(n)},Y_{k-1}^{(n)};x\right)-g\left(B_{k-1}^{(n)},Y_{k-1}^{(n)};x\right)\right)\left(\frac{1}{\Delta x^{(n)}}\int_{I^{(n)}(x)}\varphi(y)dy\right)^2dx\ra0\quad \text{a.s.}\]
Next, the Lipschitz contininuity of $g$ implies together with Theorem \ref{LLN} that
\[\sup_{t\in[0,T]}\int_\R \left(g\left(B^{(n)}(t),Y^{(n)}(t);x\right)-g\left(B_t,Y_t;x\right)\right)\left(\frac{1}{\Delta x^{(n)}}\int_{I^{(n)}(x)}\varphi(y)dy\right)^2dx\ra0.\]
Since $\varphi\in H^3$ is Lipschitz continuous and bounded, and since
\[\frac{1}{\Delta x^{(n)}}\int_{I^{(n)}(x)}\varphi(y)dy=\varphi(z_x)\quad\text{for some }z_x\in I^{(n)}(x),\]
we have
\[\left|\varphi^2(z_x)-\varphi^2(x)\right|=|\varphi(z_x)-\varphi(x)|\cdot|\varphi(z_x)+\varphi(x)|\leq C\Delta x^{(n)}\ra0\quad\text{uniformly in }x\in\R.\]
Hence, using the uniform boundedness of $g$ we conclude that
\[\sup_{t\in[0,T]}\int_\R g\left(B_0,Y_t;x\right)\left[\left(\frac{1}{\Delta x^{(n)}}\int_{I^{(n)}(x)}\varphi(y)dy\right)^2-\varphi^2(x)\right]dx\ra0\]
and therefore altogether
\[\sup_{t\in[0,T]}\left|\sum_{k=1}^{\lfloor t/\Delta t^{(n)}\rfloor}\E\left(\left.\left\langle \varphi,M_k^{(n)}\right\rangle^2\right|\F_{k-1}^{(n)}\right)-\int_0^t\left\langle g(B_0,Y_s),\varphi^2\right\rangle ds\right|\stackrel{\p}{\longrightarrow}0.\]
Moreover from Lemma \ref{diffu}, Theorem \ref{LLN}, and Assumption \ref{Markov} we deduce that uniformly in $t\in[0,T]$,
\begin{eqnarray*}
&&\left\langle \varphi, \frac{1}{\Delta x^{(n)}}\left(T_-^{(n)}-I\right)\left(u^{(n)}(t)\right)\right\rangle^2p^{(n),A}\left(B^{(n)}(t),Y^{(n)}(t)\right)\\
&&\qquad+\left\langle \varphi, \frac{1}{\Delta x^{(n)}}\left(T_+^{(n)}-I\right)\left(u^{(n)}(t)\right)\right\rangle^2p^{(n),B}\left(B^{(n)}(t),Y^{(n)}(t)\right)
\stackrel{\p}{\longrightarrow} \sigma^2(Y_t)\langle \varphi, \partial_x u\rangle^2.
\end{eqnarray*}
Finally, we have from the proof of Lemma \ref{u} the estimate 
\[\sup_{k\leq T_n}\left|\E\left(\left.\left\langle \varphi,\delta u_k^{(n)}\right\rangle\right|\F_{k-1}^{(n)}\right)-\Delta t^{(n)}\left\langle \varphi,f^{(n)}\left(B^{(n)}_{k-1},Y^{(n)}_{k-1}\right)\right\rangle\right|=\mathcal{O}_\p\left(\Delta t^{(n)}\right)^{3/2}\]
and from Theorem \ref{LLN} and Assumption \ref{Markov} we know that
\[\sup_{t\in[0,T]}\left|\left\langle \varphi,f^{(n)}\left(B^{(n)}(t),Y^{(n)}(t)\right)\right\rangle-\left\langle \varphi,f\left(B_0,Y_t\right)\right\rangle\right|=o_\p(1).\]
Furthermore, the inequality\small
\begin{align*}
&\left\langle g^{(n)}\left(B^{(n)}(t),Y^{(n)}(t)\right),\varphi^2\right\rangle-\left\langle f^{(n)}\left(B^{(n)}(t),Y^{(n)}(t)\right),\varphi\right\rangle^2=\\
&\E\left(\left.\1_C\left(\phi_k^{(n)}\right)\left(\omega_k^{(n)}\right)^2\frac{1}{\Delta x^{(n)}}\left\langle \1_{I^{(n)}\left(\pi_k^{(n)}\right)},\varphi^2\right\rangle\right|\F^{(n)}_{k-1}\right)-\left(\E\left(\left.\1_C\left(\phi_k^{(n)}\right)\omega_k^{(n)}\frac{1}{\Delta x^{(n)}}\left\langle \1_{I^{(n)}\left(\pi_k^{(n)}\right)},\varphi\right\rangle\right|\F^{(n)}_{k-1}\right)\right)^2\\
&\geq\E\left(\left.\1_C\left(\phi_k^{(n)}\right)\left(\omega_k^{(n)}\right)^2\left(\frac{1}{\Delta x^{(n)}}\left\langle \1_{I^{(n)}\left(\pi_k^{(n)}\right)},\varphi\right\rangle\right)^2\right|\F^{(n)}_{k-1}\right)-\left(\E\left(\left.\1_C\left(\phi_k^{(n)}\right)\omega_k^{(n)}\frac{1}{\Delta x^{(n)}}\left\langle \1_{I^{(n)}\left(\pi_k^{(n)}\right)},\varphi\right\rangle\right|\F^{(n)}_{k-1}\right)\right)^2\\
&=\text{Var}\left(\left.\1_C\left(\phi_k^{(n)}\right)\omega_k^{(n)}\frac{1}{\Delta x^{(n)}}\left\langle \1_{I^{(n)}\left(\pi_k^{(n)}\right)},\varphi\right\rangle\right|\F^{(n)}_{k-1}\right)\geq0
\end{align*}\normalsize
guarantees that also in the limit for all $\varphi\in H^3$,
\[\left\langle g(B_0,Y_t),\varphi^2\right\rangle-\left\langle f(B_0,Y_t),\varphi\right\rangle^2\geq0\quad\forall\ t\in[0,T].\]
Therefore for all $t\in[0,T]$,\small
\begin{align*}
&\sum_{k=1}^{\lfloor t/\Delta t^{(n)}\rfloor}\E\left(\left.\left\langle \delta W^{(n)}_k,\varphi\right\rangle^2\right|\F_{k-1}^{(n)}\right)=\Delta t^{(n)}\sum_{k=1}^{\lfloor t/\Delta t^{(n)}\rfloor}\frac{\E\left(\left.\left\langle \varphi,\delta u_k^{(n)}\right\rangle^2\right|\F_{k-1}^{(n)}\right)-\left(\E\left(\left.\left\langle \varphi,\delta u_k^{(n)}\right\rangle\right|\F_{k-1}^{(n)}\right)\right)^2}{\left(\Delta t^{(n)}\right)^2}\\
&\stackrel{\p}{\longrightarrow}\int_0^t\left[\sigma^2(Y_s)\langle \varphi, \partial_x u(s)\rangle^2+\left\langle g(B_0,Y_s),\varphi^2\right\rangle-\left\langle f(B_0,Y_s),\varphi\right\rangle^2\right] ds=
\int_0^t\left[\sigma^2(Y_s)\langle \varphi, \partial_x u(s)\rangle^2+\sigma^2_\varphi(Y_s)\right] ds.
\end{align*}\normalsize
To determine the covariation with $W^B$ we compute\small
\begin{align*}
&\E\left(\left.\delta W_k^{(n),B}\left\langle\delta W_k^{(n)},\varphi\right\rangle\right|\F_{k-1}^{(n)}\right)=
-p^{(n)}\left(B_{k-1}^{(n)},Y_{k-1}^{(n)}\right)\left(\Delta t^{(n)}\right)^{3/2}\left\langle \varphi,f^{(n)}\left(B^{(n)}_{k-1},Y_{k-1}^{(n)}\right)\right\rangle\\
&+\left(\Delta t^{(n)}-\left(\Delta t^{(n)}\right)^{1/2}\Delta x^{(n)}\Delta p^{(n)}p^{(n)}\left(B_{k-1}^{(n)},Y_{k-1}^{(n)}\right)\right)p^{(n),B}\left(B_{k-1}^{(n)},Y_{k-1}^{(n)}\right)\left\langle \varphi,\frac{1}{\Delta x^{(n)}}\left(T_+^{(n)}-I\right)\left(u_{k-1}^{(n)}\right)\right\rangle\\
&-\left(\Delta t^{(n)}+\left(\Delta t^{(n)}\right)^{1/2}\Delta x^{(n)}\Delta p^{(n)}p^{(n)}\left(B_{k-1}^{(n)},Y_{k-1}^{(n)}\right)\right)p^{(n),A}\left(B_{k-1}^{(n)},Y_{k-1}^{(n)}\right)\left\langle \varphi,\frac{1}{\Delta x^{(n)}}\left(T_-^{(n)}-I\right)\left(u_{k-1}^{(n)}\right)\right\rangle.
\end{align*}\normalsize
Since $\Delta x^{(n)}\Delta p^{(n)}=o\left(\Delta t^{(n)}\right)$ and since all of the above terms are uniformly convergent in probability, we have for all $t\in[0,T]$,
\[\sum_{k=1}^{\lfloor t/\Delta t^{(n)}\rfloor}\E\left(\left.\delta W_k^{(n),B}\left\langle \delta W_k^{(n)},\varphi\right\rangle\right|\F_{k-1}^{(n)}\right)\stackrel{\p}{\longrightarrow}\int_0^t\sigma^2\left(Y_s\right)\left\langle \varphi, \partial_x u(s)\right\rangle 
 ds.\]
Next, we check the two-dimensional conditional Lindeberg condition. Since we have already controlled the sum of fourth conditional moments of increments of $W^{(n),B}$ in the proof of Theorem \ref{price}, it suffices to control the sum of fourth conditional moments of $\delta W^{(n),\varphi}_k:=\left\langle \delta W^{(n)}_k,\varphi\right\rangle,\ k\leq T_n,$ as well. Indeed, making use of Lemma \ref{diffu} we have the following estimate, uniformly in $k=1,\dots,T_n$:
\begin{align*}
&\E\left(\left.\left(\delta W_k^{(n),\varphi}\right)^4\right|\F_{k-1}^{(n)}\right)\leq2\left(\Delta t^{(n)}\right)^{-2}
\left[\left(\Delta v^{(n)}\right)^4\E\left(\left.\left\langle M_k^{(n)},\varphi \right\rangle^4\right|\F^{(n)}_{k-1}\right)\right.\\
&\left.+\ \p\left(\left.\phi_k^{(n)}=A\right|\F^{(n)}_{k-1}\right)\left\langle \left(T_-^{(n)}-I\right)\left(u^{(n)}_{k-1}\right),\varphi \right\rangle^4+\p\left(\left.\phi_k^{(n)}=B\right|\F^{(n)}_{k-1}\right)\left\langle \left(T_+^{(n)}-I\right)\left(u^{(n)}_{k-1}\right),\varphi \right\rangle^4\right]\\
&\leq 2\left[\left(\Delta t^{(n)}\right)^2M^4\left\Vert \varphi\right\Vert^4_\infty+
\left(\Delta x^{(n)}\right)^2p^{(n),A}\left(B_{k-1}^{(n)},Y_{k-1}^{(n)}\right)\left\langle \frac{1}{\Delta x^{(n)}}\left(T_-^{(n)}-I\right)\left(u^{(n)}_{k-1}\right),\varphi \right\rangle^4\right.\\
&\qquad\qquad\qquad\left.+\ \left(\Delta x^{(n)}\right)^2p^{(n),B}\left(B_{k-1}^{(n)},Y_{k-1}^{(n)}\right)\left\langle \frac{1}{\Delta x^{(n)}}\left(T_+^{(n)}-I\right)\left(u^{(n)}_{k-1}\right),\varphi \right\rangle^4\right]=o_\p\left(\Delta t^{(n)}\right).
\end{align*}
This proves that
\[\sum_{k=1}^{T_n}\E\left(\left.\left(\delta W_k^{(n),\varphi}\right)^4\right|\F_{k-1}^{(n)}\right)=o_\p(1)\]
and hence the functional convergence theorem for martingale difference arrays (cf.~Theorem 3.33 in \cite{JS}) implies that 
\[\left(\sum_{k=1}^{\lfloor\cdot/\Delta t^{(n)}\rfloor}\delta W_k^{(n),B},\sum_{k=1}^{\lfloor\cdot/\Delta t^{(n)}\rfloor}\delta W_k^{(n),\varphi}\right)\RA\left(\int_0^\cdot\sigma(Y_t)dW^B_t,\int_0^\cdot \sigma(Y_t)\langle \partial_x u(t),\varphi\rangle dW^B_t+\int_0^\cdot \sigma_\varphi\left(Y_t\right)dW^\varphi_t\right)\]
in $D([0,T];\R\times\R)$ for independent standard Brownian motions $W^\varphi$ and $W^B$. Since $\varphi\in H^3\supset\mathcal{E}$ was arbitrary, this proves convergence of
$\left(\sum_{k=1}^{\lfloor\cdot/\Delta t^{(n)}\rfloor}\delta W_k^{(n),B},\sum_{k=1}^{\lfloor\cdot/\Delta t^{(n)}\rfloor}\delta W_k^{(n)}\right)$ in $D([0,T];\R\times\mathcal{E}')$ by Mitoma's theorem (cf.~Theorem 6.13 in \cite{Walsh}). To identify the limit correctly, we will show that for any $\varphi\in H^3\subset L^2(\R)$ with decomposition $\varphi=\sum\langle\varphi,e_i\rangle e_i$ we have the distributional equality
\[\int \sigma_\varphi(Y_t)dW^\varphi_t=\sum_i\langle \varphi,e_i\rangle \int \sigma_i(Y_t)dW_t^i,\]
where for all $i,j\in\N$ the covariation between $W^i$ and $W^j$ is given by (\ref{covariance}). 
Clearly, both processes are Gaussian local martingales. Hence, it is sufficient to prove that they have the same quadratic variation. Indeed, we have
\begin{eqnarray*}
\left\langle \sum_i\langle \varphi,e_i\rangle \int \sigma_i(Y_t)dW_t^i\right\rangle&=&\sum_{i,j}\int\sigma_i(Y_t)\langle \varphi,e_i\rangle\sigma_j(Y_t)\langle \varphi,e_j\rangle d\langle W^i,W^j\rangle_t\\
&=&\sum_{i,j}\int\langle \varphi,e_i\rangle\langle \varphi,e_j\rangle\left(\langle g(B_0,Y_t),e_ie_j\rangle-\langle f(B_0,Y_t),e_i\rangle\langle f(B_0,Y_t),e_j\rangle\right) dt\\
&=&\int \left\langle g(B_0,Y_t),\varphi^2\right\rangle-\left\langle f(B_0,Y_t),\varphi\right\rangle^2 dt\\
&=& \int \sigma^2_\varphi(Y_t)dt=\left\langle\int \sigma_\varphi(Y_t)dW^\varphi(t)\right\rangle.
\end{eqnarray*}
This proves convergence in $D([0,T];\R\times\mathcal{E}')$ to the claimed limiting process. Finally, we have to show that this convergence even holds in $D([0,T];\R\times H^{-3})$. For this let $m>0$ and $\eps>0$ be given. Then by Doob's inequality,
\begin{eqnarray*}
\p\left(\sup_{k\leq T_n}\left|\sum_{j=1}^k\delta W^{(n),\varphi}_j\right|>m\right)&\leq& m^{-2}\cdot\E\left(\sum_{j=1}^{T_n}\delta W^{(n),\varphi}_j\right)^2\leq m^{-2}\cdot\E\left(\sum_{j=1}^{T_n}\frac{1}{\Delta t^{(n)}}\left\langle \delta u^{(n)}_j,\varphi\right\rangle^2\right)\\
&\leq& \frac{\Delta t^{(n)}}{m^2}\cdot\E\sum_{j=1}^{T_n}\left(\left\langle M^{(n)}_j,\varphi\right\rangle^2+2\left\langle\frac{1}{\Delta x^{(n)}}\left(T^{(n)}_+-I\right)\left(u^{(n)}_{j-1}\right),\varphi\right\rangle^2\right).
\end{eqnarray*}
First we note that by Assumption \ref{g} for all $k\leq T_n$,
\begin{eqnarray*}
\E\left\langle \varphi,M_k^{(n)}\right\rangle^2&=&\E\left(\int_\R g^{(n)}\left(B_{k-1}^{(n)},Y_{k-1}^{(n)};x\right)\left(\frac{1}{\Delta x^{(n)}}\int_{I^{(n)}(x)}\varphi(y)dy\right)^2dx\right)\\
&\leq& \frac{1}{\Delta x^{(n)}}\E\left(\int_\R\int_\R \1_{I^{(n)}(x)}(y)g^{(n)}\left(B_{k-1}^{(n)},Y_{k-1}^{(n)};x\right)\varphi^2(y)dydx\right)\\
&\leq&\sup_{b,y}\left\Vert g^{(n)}(b,y)\right\Vert_\infty\left\Vert \varphi\right\Vert_{L^2}^2\leq C\left\Vert \varphi\right\Vert_{L^2}^2.
\end{eqnarray*}
Second, by Lemma \ref{diffu} we have
\[\Delta t^{(n)}\sum_{j=1}^{T_n}\left\langle\frac{1}{\Delta x^{(n)}}\left(T^{(n)}_+-I\right)\left(u^{(n)}_{j-1}\right),\varphi\right\rangle=\int_0^T\left\langle \partial_x u(t),\varphi\right\rangle dt + o_\p\left(\left\Vert \varphi\right\Vert_{H^3}\right).\]
Therefore, we can find $\delta>0$ such that for all $\varphi$ satisfying $\left\Vert\varphi\right\Vert_{H^3}<\delta$ we have
\[\p\left(\sup_{k\leq T_n}\left|\sum_{j=1}^k\delta W^{(n),\varphi}_j\right|>m\right)\leq \eps.\]
By Corollary 6.16 in \cite{Walsh} this proves that the convergence takes place in $D(0,T];\R\times H^{-3})$ as claimed.
\end{proof}

\begin{lem}\label{discreteouprocess}
Under Assumptions \ref{initial}, \ref{M}, \ref{Markov}, \ref{scaling}, \ref{fxx}, \ref{scaling1}, \ref{papb}, and \ref{g} there exist $L^2(\R)$-valued stochastic processes $C^{(n)}=\left(C^{(n)}_u\right)_{u\in[0,T]}$ and $D^{(n)}=\left(D^{(n)}_k\right)_{k\leq T_n}$ such that  $C^{(n)}_u$ converges to zero in probability in $L^2(\R)$, uniformly in $u\in[0,T]$, and such that for all $l\leq T_n$,
\[Z^{(n),u}_l=D^{(n)}_l+\sum_{k=1}^lZ^{(n),Y}_{k-1}\int_{t_{k-1}^{(n)}}^{t_k^{(n)}}\left(f_y\left(B_0,Y_u\right)+C^{(n)}_u\right) du\]
and
\[D^{(n)}_{\lfloor \cdot/\Delta t^{(n)}\rfloor}\RA U+\int_0^\cdot\sigma(Y_t) \partial_x u(t) dW^B_t+\sum_ie_i\int_0^\cdot\sigma_i(Y_t)dW^i_t+\int_0^\cdot f_b(B_0,Y_t) Z^B(t)dt=:D\]
in $D([0,T];H^{-3})$.
\end{lem}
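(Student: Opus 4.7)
The plan is to start from the decomposition $Z^{(n),u}_k = U^{(n)}_k + \sum_{j=1}^k \delta W^{(n)}_j + \sum_{j=1}^k \delta A^{(n)}_j$ displayed just before the statement, and to rewrite the drift increments $\delta A^{(n)}_k$ via a first-order Taylor (mean-value) expansion of $f$ in $L^2(\R)$. Under Assumption \ref{fxx} one has
\[
f(B^{(n)}_{k-1}, Y^{(n)}_{k-1};\cdot) - f(B_0, Y_u;\cdot) = \bar f_b^{(n),k,u}(\cdot)(B^{(n)}_{k-1}-B_0) + \bar f_y^{(n),k,u}(\cdot)(Y^{(n)}_{k-1}-Y_u),
\]
with $\bar f_b^{(n),k,u}:=\int_0^1 f_b(B_0+t(B^{(n)}_{k-1}-B_0), Y_u+t(Y^{(n)}_{k-1}-Y_u))\,dt$ and analogously for $\bar f_y^{(n),k,u}$. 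Substituting $B^{(n)}_{k-1}-B_0=\sqrt{\Delta t^{(n)}}Z^{(n),B}_{k-1}$ and $Y^{(n)}_{k-1}-Y_u=\sqrt{\Delta t^{(n)}}Z^{(n),Y}_{k-1}+(Y(t^{(n)}_{k-1})-Y_u)$ into $\delta A^{(n)}_k$ and dividing by $\sqrt{\Delta t^{(n)}}$ splits it into a $Z^{(n),B}_{k-1}$-drift, a $Z^{(n),Y}_{k-1}$-drift, and a Riemann-approximation residual involving $Y(t^{(n)}_{k-1})-Y_u$.

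This dictates the definitions $C^{(n)}_u:=\bar f_y^{(n),k,u}-f_y(B_0,Y_u)$ for $u\in[t^{(n)}_{k-1},t^{(n)}_k)$, and
\[
D^{(n)}_l:=U^{(n)}_l+\sum_{j=1}^l\delta W^{(n)}_j+\sum_{k=1}^l Z^{(n),B}_{k-1}\int_{t^{(n)}_{k-1}}^{t^{(n)}_k}\bar f_b^{(n),k,u}\,du+\sum_{k=1}^l\int_{t^{(n)}_{k-1}}^{t^{(n)}_k}\bar f_y^{(n),k,u}\,\frac{Y(t^{(n)}_{k-1})-Y_u}{\sqrt{\Delta t^{(n)}}}\,du.
\]
For the $C^{(n)}$ claim, Theorem \ref{LLN} (together with Lipschitz continuity of $t\mapsto Y_t$ inherited from the PDE for $u$) gives $\sup_k|B^{(n)}_{k-1}-B_0|\to0$ and $\sup_{k,u\in[t^{(n)}_{k-1},t^{(n)}_k)}|Y^{(n)}_{k-1}-Y_u|\to0$ in probability, and continuity of $(b,y)\mapsto f_y(b,y)\in L^2$ (Assumption \ref{fxx}) then yields $\sup_u\|C^{(n)}_u\|_{L^2}\to0$ in probability. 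For the convergence of $D^{(n)}$: the first two summands converge by Lemmas \ref{u} and \ref{wi}; in the third, replacing $\bar f_b^{(n),k,u}$ by $f_b(B_0,Y_u)$ costs an $o_\p(1)$ term (tightness of $Z^{(n),B}$ from Theorem \ref{price} combined with the same uniform $L^2$-convergence used for $C^{(n)}$), after which the Riemann sum $\sum_k Z^{(n),B}_{k-1}\int_{t^{(n)}_{k-1}}^{t^{(n)}_k}f_b(B_0,Y_u)\,du$ converges jointly with $Z^{(n),B}\RA Z^B$ (Theorem \ref{price}) to $\int_0^\cdot f_b(B_0,Y_t)Z^B(t)\,dt$ by the continuous mapping theorem applied to the Riemann-integration functional $X\mapsto\int_0^\cdot f_b(B_0,Y_u)X(u)\,du$ on $D([0,T];\R)$. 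The last summand is bounded in $L^2$-norm uniformly on $[0,T]$ by $T\cdot\sup_{b,y}\|f_y(b,y)\|_{L^2}\cdot\mathrm{Lip}(Y)\cdot\sqrt{\Delta t^{(n)}}\to0$.

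The main technical obstacle is assembling the four contributions into a single joint convergence in $D([0,T];H^{-3})$ on the same filtered probability space. Lemma \ref{wi} already realises the martingale pair $(W^B,(W^i)_{i\in\N})$ as the scaling limit jointly with the real process $\sum\delta W^{(n),B}_j$, and the proof of Theorem \ref{price} couples $Z^{(n),B}\RA Z^B$ with that same Brownian motion $W^B$; hence the Riemann-integration step above couples $\int_0^\cdot f_b(B_0,Y_t)Z^B(t)\,dt$ to the correct $W^B$ in the limit. The transfer of the resulting joint convergence from $\mathcal{E}'$ to $H^{-3}$ is then achieved exactly as at the end of Lemma \ref{wi}, via Mitoma's theorem combined with the Doob-type maximal bound used there to upgrade tightness to the $H^{-3}$-topology.
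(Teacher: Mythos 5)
Your proposal is correct and follows essentially the same route as the paper: the same decomposition $Z^{(n),u}=U^{(n)}+\sum\delta W^{(n)}+\sum\delta A^{(n)}$, a Taylor expansion of $f$ in $(b,y)$ around $(B_0,Y_u)$ to isolate the $Z^{(n),Y}_{k-1}$-coefficient $f_y(B_0,Y_u)+o_\p(1)$, and identification of the limit of $D^{(n)}$ via Lemmas \ref{u} and \ref{wi} and Theorem \ref{price}. The only (immaterial) differences are that the paper uses a second-order Taylor remainder with the bounded second derivatives from Assumption \ref{fxx} where you use the first-order integral mean-value form plus Lipschitz continuity of $f_b,f_y$, and that the paper absorbs your explicit $\bigl(Y(t^{(n)}_{k-1})-Y_u\bigr)/\sqrt{\Delta t^{(n)}}$ residual by invoking Lemma \ref{discretization}.
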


\begin{proof}
For $n\in\N$ and $u\in[0,T]$, we define
\[F_u^{(n)}(s):=f\left(B_u+s\left(B_u^{(n)}-B_u\right),Y_u+s\left(Y_u^{(n)}-Y_u\right)\right),\quad s\in[0,1].\]
Then $F^{(n)}_u$ is almost surely twice differentiable in $s$ and
\begin{eqnarray*}
\frac{\partial}{\partial s}F_u^{(n)}(s)&=&\left(\Delta t^{(n)}\right)^{1/2} f_b\left(B_u+s\left(B^{(n)}(u)-B_u\right),Y_u+s\left(Y^{(n)}(u)-Y_u\right)\right) Z^{(n),B}(u)\\
&&+\left(\Delta t^{(n)}\right)^{1/2} f_y\left(B_u+s\left(B^{(n)}(u)-B_u\right),Y_u+s\left(Y^{(n)}(u)-Y_u\right)\right) Z^{(n),Y}(u).
\end{eqnarray*}
Moreover, computing the second derivative in a similar way we see that by Assumption \ref{fxx} and Theorem \ref{LLN} there exist some $L^2(\R)$-valued random variables $C^1_n(s,u),\ C_n^2(s,u)$ converging to zero in probability as $n\ra\infty$ (uniformly in $s,u$) such that
\[
\frac{\partial^2}{\partial s^2}F_u^{(n)}(s)=\left(\Delta t^{(n)}\right)^{1/2}\left(C^1_n(s,u)Z^{(n),B}(u) + C^2_n(s,u)Z^{(n),Y}(u)\right).
\]
Moreover,
\begin{eqnarray*}
\delta A_{k}^{(n)}&=&\left(\Delta t^{(n)}\right)^{-1/2}\int_{t_{k-1}^{(n)}}^{t_k^{(n)}}f\left(B^{(n)}(u),Y^{(n)}(u)\right)-f(B_0,Y_u) du\\
&=&\left(\Delta t^{(n)}\right)^{-1/2}\int_{t_{k-1}^{(n)}}^{t_k^{(n)}}\left(F^{(n)}_u(1)-F^{(n)}_u(0)\right) du
\end{eqnarray*}
and applying the fundamental theorem of calculus to $F^{(n)}_u$ gives
\begin{eqnarray*}
F_u^{(n)}(1)-F^{(n)}_u(0)=\int_0^1\frac{\partial}{\partial s}F_u^{(n)}(s)ds=\frac{\partial}{\partial s}F_u^{(n)}(0)+\int_0^1(1-s)\frac{\partial^2}{\partial s^2}F_u^{(n)}(s)ds.
\end{eqnarray*}
Therefore, we may write
\begin{align*}
\sum_{k=1}^{l}\delta A_{k}^{(n)}
&=\sum_{k=1}^{l}\int_{t_{k-1}^{(n)}}^{t_k^{(n)}}\left( f_b(B_u,Y_u)+C^{1}_n(u)\right)Z^{(n),B}(u)+\left( f_y(B_u,Y_u) +C^{2}_n(u)\right)Z^{(n),Y}(u) du\\
&=\int_0^{t_l^{(n)}}\left( f_b(B_u,Y_u)+C^{1}_n(u)\right)Z^{(n),B}(u)du+\sum_{k=1}^lZ_{k-1}^{(n),Y}\int_{t_{k-1}^{(n)}}^{t_k^{(n)}}\left( f_y(B_u,Y_u) +C^{2}_n(u)\right) du,
\end{align*}
where the $C^i_n(u),\ i=1,2$, converge to zero in probability in $L^2(\R)$, uniformly in $u\in[0,T]$, and the second equality follows from Lemma \ref{discretization}.
Concerning the first part of the above expression, Theorem \ref{price} implies that
\begin{eqnarray*}
\int_0^{\Delta t^{(n)}\lfloor t/\Delta t^{(n)}\rfloor}\left( f_b(B_u,Y_u)+C^{1}_n(u)\right)Z^{(n),B}(u)du
\RA\int_0^{\cdot}Z^{B}(u) f_b\left(B_0,Y_u\right) du.
\end{eqnarray*}
 in $D([0,T];L^2)$ and hence also in $D\left([0,T];H^{-3}\right)$. Together with Lemmata \ref{u} and \ref{wi} this yields the claim.
\end{proof}

We are now ready to prove the first main theorem of this paper, describing the convergence of the joint fluctuations of the price and volume process of the limit order book model around its first order approximation under fast rescaling (part (a) of Theorem \ref{preview}).

\begin{thm}\label{main}
Under Assumptions \ref{initial}, \ref{M}, \ref{Markov}, \ref{scaling}, \ref{h}, \ref{fxx}, \ref{scaling1}, \ref{papb}, and \ref{g}, we have weak convergence of $Z^{(n)}=\left(Z^{(n),B},Z^{(n),u}\right)$ to $(Z^B,Z^u)$ in $D\left([0,T];\R\times H^{-3}\right)$, and $(Z^B,Z^u)$ is the unique solution to
\begin{equation}\label{ISDE}
\begin{split}
Z^B(t)&=\int_0^t\mu(Y_s)ds+\int_0^t\sigma(Y_s)dW_s^B,\\
Z^u(t)&=\int_0^t\left[\mu(Y_s) \partial_x u(s)+f_b(B_0,Y_s) Z^B(s)+f_y(B_0,Y_s) \langle Z^u(s),h\rangle\right]ds\\
&\qquad\qquad +\int_0^t\sigma(Y_s) \partial_x u(s) dW^B_s+\sum_ie_i\int_0^t\sigma_i(Y_s)dW^i_s,\qquad\qquad t\in[0,T],
\end{split}
\end{equation}
where the $W^i,\ i\in\N,$ are standard Brownian motions independent of $W^B$ and for all $i,j\in\N$ the covariation of $W^i$ and $W^j$ is given in (\ref{covariance}).
\end{thm}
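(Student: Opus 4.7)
The plan is to combine the decomposition from Lemma \ref{discreteouprocess} with a Gronwall-type stability argument. From that lemma we have
\[
Z^{(n),u}_l = D^{(n)}_l + \sum_{k=1}^l Z^{(n),Y}_{k-1}\int_{t_{k-1}^{(n)}}^{t_k^{(n)}}\bigl(f_y(B_0,Y_u)+C^{(n)}_u\bigr)\,du,
\]
where $D^{(n)} \Rightarrow D$ in $D([0,T];H^{-3})$ with $D$ matching the right-hand side of (\ref{ISDE}) minus the $f_y$-term. Testing against $h \in H^3$ gives a closed one-dimensional discrete equation for $Z^{(n),Y}_l=\langle Z^{(n),u}_l,h\rangle$:
\[
Z^{(n),Y}_l = \langle D^{(n)}_l,h\rangle + \sum_{k=1}^l Z^{(n),Y}_{k-1}\int_{t_{k-1}^{(n)}}^{t_k^{(n)}}\bigl\langle f_y(B_0,Y_u)+C^{(n)}_u,h\bigr\rangle\,du.
\]
A discrete Gronwall argument then yields $\sup_{l\le T_n}|Z^{(n),Y}_l|\le K\sup_{l\le T_n}|\langle D^{(n)}_l,h\rangle|$ for some deterministic $K<\infty$ (using that $h\in H^3$ and that $f_y$ is bounded by Assumption \ref{fxx}), which transfers the tightness of $\langle D^{(n)},h\rangle$ to tightness of $Z^{(n),Y}$.

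Next I would establish tightness of $Z^{(n),u}$ in $D([0,T];H^{-3})$ via Mitoma's theorem. By the decomposition, it suffices to prove tightness of $\langle Z^{(n),u},\varphi\rangle$ in $D([0,T];\R)$ for every $\varphi\in H^3$; this follows from the tightness of $\langle D^{(n)},\varphi\rangle$ (Lemmata \ref{u} and \ref{wi}) together with the uniform bound on $Z^{(n),Y}$ just obtained, which controls the convolution term. Once tightness is in hand, along any weakly converging subsequence one passes to the limit termwise in the decomposition: $D^{(n)}\Rightarrow D$ by Lemma \ref{discreteouprocess}, $C^{(n)}\to 0$ uniformly, and the product $Z^{(n),Y}_{k-1}\int f_y\,du$ converges jointly with $D^{(n)}$ by the continuous mapping theorem applied to the linear solution map of the Volterra equation. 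Any subsequential limit $(Z^B,Z^u)$ thus satisfies (\ref{ISDE}).

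Uniqueness of the limiting system is then straightforward because it is linear. By Theorem \ref{price}, $Z^B$ is uniquely determined. Given $Z^B$, testing the $Z^u$-equation against $h$ produces the one-dimensional linear SDE
\[
d\langle Z^u(t),h\rangle = \bigl[\langle\mu(Y_t)\partial_x u(t)+f_b(B_0,Y_t)Z^B(t),h\rangle + \langle f_y(B_0,Y_t),h\rangle\langle Z^u(t),h\rangle\bigr]dt + dN_t,
\]
where $N$ is the $h$-projection of the martingale noise; this admits a unique solution via the variation-of-constants formula. Substituting back into (\ref{ISDE}) identifies $Z^u(t)$ uniquely as an $H^{-3}$-valued process, and since the limit is unique, the full sequence converges.

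The main obstacle is controlling the implicit coupling through $\langle Z^{(n),u},h\rangle$ at the discrete level: a naive application of Gronwall only bounds the norm of $Z^{(n),Y}$, not its modulus of continuity, so the key estimate to verify carefully is that the discrete Gronwall bound transfers tightness (not merely stochastic boundedness) from $\langle D^{(n)},h\rangle$ to $Z^{(n),Y}$. I would handle this by writing $Z^{(n),Y}$ explicitly via the discrete variation-of-constants formula and observing that the resulting kernel is uniformly bounded and equicontinuous in the time variable, so Aldous's criterion applied to $\langle D^{(n)},h\rangle$ carries over.
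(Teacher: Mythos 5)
Your proposal is correct and follows essentially the same route as the paper: both start from the decomposition in Lemma \ref{discreteouprocess}, observe that testing against $h$ closes the system into a discrete Ornstein--Uhlenbeck recursion for $Z^{(n),Y}$, and resolve the key difficulty (passing from stochastic boundedness to actual convergence of $Z^{(n),Y}$) via the discrete variation-of-constants formula, exactly the device the paper uses with its processes $F^{(n)}_k$ and $X^{(n)}_k$ on the event $\{F^{(n)}_i>-1\}$. The only cosmetic difference is that you organize the identification step as tightness plus uniqueness of subsequential limits, whereas the paper proves convergence of $Z^{(n),Y}$ directly from the explicit solution formula; both are valid and rest on the same C-tightness of $\langle D^{(n)},h\rangle$.
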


\begin{proof}
From Lemma \ref{discreteouprocess} we have the following representation for $Z^{(n),Y}=\langle Z^{(n),u},h\rangle$: for all $l\leq T_n$,
\[Z^{(n),Y}_l=\left\langle D^{(n)}_l,h\right\rangle+\sum_{k=1}^lZ^{(n),Y}_{k-1}\int_{t_{k-1}^{(n)}}^{t_k^{(n)}}\left\langle h,f_y\left(B_0,Y_u\right)+C^{(n)}_u\right\rangle du.\]
Hence, $\left(Z^{(n),Y}_k\right)_{k\leq T_n}$ has the structure of a discrete Ornstein-Uhlenbeck process. Following the usual solution method, we consider the discrete time processes
\begin{eqnarray*}
F_k^{(n)}&:=&\int_{t_{k-1}^{(n)}}^{t_k^{(n)}}\left\langle h,f_y\left(B_0,Y_u\right)+C^{(n)}_u\right\rangle  du,\\
X_k^{(n)}&:=&\begin{cases}Z^{(n),Y}_k\prod_{i=1}^k\left(1+F_i^{(n)}\right)^{-1}&: F_i^{(n)}> -1\  \forall\ i=1,\dots,k\\0&:\text{else}\end{cases},\qquad k\leq T_n.
\end{eqnarray*}
For each $n\in\N$ we define $\Omega_{n}:=\{F_i^{(n)}> -1\ \forall\ i=1,\dots,T_n\}$. Then $\p\left(\Omega_n\right)\ra1$ by Assumption \ref{fxx} and Lemma \ref{discreteouprocess}. On $\Omega_n$ we have for all $k\leq T_n$,
\begin{eqnarray*}
\prod_{i=1}^k\left(1+F_i^{(n)}\right)^{-1}=\exp\left(\sum_{i=1}^k\log\left(1-\frac{F_i^{(n)}}{1+F_i^{(n)}}\right)\right)=
\exp\left(-\sum_{i=1}^k\frac{F_i^{(n)}}{1+F_i^{(n)}}+Error_i^{(n)}\right)
\end{eqnarray*}
with 
\[Error_i^{(n)}=\mathcal{O}_\p\left(\frac{F_i^{(n)}}{1+F_i^{(n)}}\right)^2=\mathcal{O}_\p\left(\Delta t^{(n)}\right)^2.\]
Hence, 
\[\prod_{i=1}^{\lfloor \cdot/\Delta t^{(n)}\rfloor}\left(\left(1+F_i^{(n)}\right)^{-1}\1_{\left\{F_i^{(n}>-1\right\}}\right)\ra\exp\left(-\int_0^\cdot\langle h,f_y(B_0,Y_s)\rangle ds\right)\quad\text{in } D([0,T];\R)\]
and by similar reasoning also
\[\prod_{i=1}^{\lfloor \cdot/\Delta t^{(n)}\rfloor}\left(\left(1+F_i^{(n)}\right)\1_{\left\{F_i^{(n}>-1\right\}}\right)\ra\exp\left(\int_0^\cdot\langle h,f_y(B_0,Y_s)\rangle ds\right)\quad\text{in } D([0,T];\R).\]
Moreover, on $\Omega_n$ we have for all $k\leq T_n$,
\[\delta X_k^{(n)}=\prod_{i=1}^k\left(1+F_i^{(n)}\right)^{-1}\delta Z^{(n),Y}_k- Z_{k-1}^{(n),Y}\prod_{i=1}^{k-1}\left(1+F_i^{(n)}\right)^{-1} \frac{F_k^{(n)}}{1+F_k^{(n)}}=\prod_{i=1}^k\left(1+F_i^{(n)}\right)^{-1}\left\langle \delta D^{(n)}_k,h\right\rangle.\]
Because $\left(\langle D^{(n)},h\rangle\right)_{n\in\N}$ is C-tight by Lemma \ref{discreteouprocess}, we have joint convergence of $\left\langle D^{(n)},h\right\rangle$ and the product $\prod\left(\left(1+F_i^{(n)}\right)^{-1}\1_{\left\{F_i^{(n)}>-1\right\}}\right)$ and hence convergence of the discrete integral, i.e.
\begin{eqnarray*}
X^{(n)}(\cdot)=\sum_{i=1}^{\lfloor\cdot/\Delta t^{(n)}\rfloor}\prod_{i=1}^k\left(\left(1+F_i^{(n)}\right)^{-1}\1_{\left\{F_i^{(n)}>-1\right\}}\right)\left\langle\delta D^{(n)}_k,h\right\rangle\qquad\qquad\qquad\qquad\qquad\\
\qquad\qquad\qquad\RA \int_0^\cdot \exp\left(-\int_0^s\langle h,f_y(B_0,Y_u)\rangle du\right)d\langle D_s,h\rangle=:X(\cdot)\quad\text{in }D([0,T];\R).
\end{eqnarray*}
Especially, this shows that $\left(X^{(n)}\right)$ is also C-tight. Since $Z_k^{(n),Y}=X^{(n)}_k\prod_{i=1}^k\left(1+F_i^{(n)}\right)$ on $\Omega_n$, this implies that
\[Z^{(n),Y}_{\lfloor\cdot/\Delta t^{(n)}\rfloor}\RA X_\cdot\cdot\exp\left(\int_0^\cdot\langle h,f_y(B_0,Y_s)\rangle ds\right)=:Z^Y_\cdot\quad\text{in }D([0,T];\R),\]
 i.e.~$Z^Y$ follows the dynamics
\begin{equation}\label{OUZY}
dZ^Y(t)=d\left\langle D_t,h\right\rangle+Z^Y(t)\langle h, f_y(B_0,Y_t)\rangle dt,\quad t\in[0,T].
\end{equation}
Now the claim follows from the representation of $Z^{(n),u}$ in Lemma \ref{discreteouprocess} and the just proven convergence of $Z^{(n),Y},\ n\in\N$. Especially, (\ref{OUZY}) implies that the limiting process $(Z^B,Z^u)$ is the unique strong solution to (\ref{ISDE}). 
\end{proof}

\begin{rem}
Our second order approximation to the discrete limit order book dynamics in this section is derived under the assumption that the first order approximation of the price process is constant. This naturally rules out any fluctuations in the relative volume density function coming from fluctuations in the location variable $x$ (and not in $t$) and thus any $\partial_x Z^u$-term in the limiting dynamics.
\end{rem}

We end this section with an example illustrating Theorem \ref{main}.

\begin{ex}
We define $h$ in the same way as in Example \ref{example} and take $\phi_k^{(n)},\omega_k^{(n)},\pi_k^{(n)}$ conditionally independent. Moreover, we let the conditional distribution of $\omega_k^{(n)}$ and $\pi_k^{(n)}$ be as in Example \ref{example} and set
\begin{eqnarray*}
\p\left(\left.\phi_k^{(n)}=A\right|\F_{k-1}^{(n)}\right)&=&\frac{\Delta p^{(n)}B_{k-1}^{(n)}\left(1-\Phi\left(Y^{(n)}_{k-1}\right)\right)}{1+B^{(n)}_{k-1}},\\
\p\left(\left.\phi_k^{(n)}=B\right|\F_{k-1}^{(n)}\right)&=&\left(1+\left(\Delta t^{(n)}\right)^{1/2-\alpha}\right)\frac{\Delta p^{(n)}B_{k-1}^{(n)}\left(1-\Phi\left(Y^{(n)}_{k-1}\right)\right)}{1+B^{(n)}_{k-1}}.
\end{eqnarray*}
Then Assumption \ref{papb} is satisfied with $p^{(n)}(b,y)=\frac{b}{1+b}$ and Assumption \ref{g} is satisfied with
\[g^{(n)}(b,y;x)=\left(1-\Delta p^{(n)}\right)\frac{C}{\Delta x^{(n)}}\sum_{j\in\Z}\1_{\left(x_j^{(n)},x_{j+1}^{(n)}\right]}(x)\int_{x_j^{(n)}}^{x^{(n)}_{j+1}}(z-10)^2(z+10)^2\1_{[-10,10]}(z)dz.\]
Therefore, in this example the first order approximation is given by 
\begin{eqnarray*}
B_t=B_0,\quad u(t,x)=u_0(x)+C(x-10)^2(x+10)^2\1_{[-10,10]}(x)\int_0^t\1_{[0,\infty)}(Y_s)6Y_s^2e^{-Y_s^3}ds
\end{eqnarray*}
and the second order approximation satisfies for any $\varphi\in H^3$,\small
\begin{align*}
&\qquad dZ^B(t)=\frac{B_0(1-\Phi(Y_t))}{1+B_0}dt+\sqrt{\frac{2B_0(1-\Phi(Y_t))}{1+B_0}}dW^B_t\\
& d\langle Z^u(t),\varphi\rangle=\1_{[0,\infty)}(Y_t)6Y_t^2e^{-Y_t^3}\langle Z^u(t),h\rangle C\int_{-10}^{10}(x-10)^2(x+10)^2\varphi(x)dx dt+\left\langle \partial_x u(t),\varphi\right\rangle dZ^B(t)\\
&\quad+\1_{[0,\infty)}(Y_t)6Y_t^2e^{-Y_t^3}C\sqrt{\int_{-10}^{10}(x-10)^2(x+10)^2\varphi^2(x)dx-\left(\int_{-10}^{10}(x-10)^2(x+10)^2\varphi(x)dx\right)^2}dW_t^\varphi.
\end{align*}\normalsize
Especially, high standing volumes at the top of the book will lead to a smaller drift and smaller volatility of price fluctuations.
\end{ex}

\s{Renormalization on a slow time scale}\label{slow}

In this section we will rescale by a slower time scale corresponding to the frequency of price changes.

\begin{ass}\label{scaling2}
\[\Delta^{(n)}=\Delta x^{(n)},\qquad\alpha\in\left(0,\frac{1}{2}\right),\qquad \beta=1-\alpha.\]
\end{ass}

Since $\beta=1-\alpha$ under Assumption \ref{scaling2}, the limiting price process $B$ is not constant in this case and depends on the path of $Y$ as well. Hence, we cannot analyze the fluctuations $Z^{(n),B}$ independently of $Z^{(n),Y}$ as we did in the previous section. For the analysis we will again decompose $Z^{(n),B}$ as a semimartingale, i.e. we write for $k=1,\dots,T_n$,
\begin{eqnarray*}
Z^{(n),B}_k&=&Z_0^{(n),B}+N_k^{(n)}+\sum_{j=1}^k\E\left(\left.\delta Z_j^{(n),B}\right|\F_{j-1}^{(n)}\right)
\end{eqnarray*}
with
\[N_k^{(n)}:=\sum_{j=1}^k\delta W^{(n)}_j\qquad\text{and}\qquad\delta W^{(n)}_j:=\delta Z_j^{(n),B}-\E\left(\left.\delta Z_j^{(n),B}\right|\F_{j-1}^{(n)}\right).\]
On the other hand, writing the dynamics of $Z^{(n),u}$ in semimartingale form has the disadvantage that it involves the ``discrete derivative'' of $Z^{(n),u}$ itself with respect to $x$. To analyze it we would therefore need some prior knowledge on the speed of convergence of $u^{(n)}$ to $u$ - however, that is precisely what we would like to find out by establishing a central limit theorem! Therefore, we will choose another non-semimartingale decomposition for $Z^{(n),u}$. For this we first define the absolute volume density functions
\begin{equation*}
v^{(n)}_k(x):=u^{(n)}_k\left(x-B^{(n)}_k\right)=v_0^{(n)}(x)+\sum_{j=1}^k \Delta v^{(n)}M^{(n)}_j\left(x-B_{j-1}^{(n)}\right),\quad k\leq T_n,\ n\in\N,
\end{equation*}
and
\begin{equation}\label{uvu}
v(t,x):=u(t,x-B_t)=v(0,x)+\int_0^tf(B_u,Y_u;x-B_u)du,\quad (t,x)\in[0,T]\times\R.
\end{equation}
This allows us to write\small
\begin{equation*}\label{Znu}
\begin{split}
Z^{(n),u}_k(x)&=\frac{u^{(n)}_k(x)-u\left(t_k^{(n)},x\right)}{\left(\Delta x^{(n)}\right)^{1/2}}=\frac{v^{(n)}_k\left(x+B_k^{(n)}\right)-v\left(t_k^{(n)},x+B_{t_k^{(n)}}\right)}{\left(\Delta x^{(n)}\right)^{1/2}}\\
&=X^{(n),0}_k(x)+\left(\Delta x^{(n)}\right)^{-1/2}\left(\sum_{j=1}^k \Delta v^{(n)}M^{(n)}_j\left(x+B^{(n)}_k-B_{j-1}^{(n)}\right)-\int_0^{t_k^{(n)}}f\left(B_u,Y_u;x+B_{t_k^{(n)}}-B_u\right)du\right)\\
&=X^{(n),0}_k(x)+X^{(n),1}_k(x)+X^{(n),2}_k(x)
\end{split}
\end{equation*}\normalsize
with\small
\begin{eqnarray*}
X^{(n),0}_k(\cdot)&:=&\frac{v_0^{(n)}\left(\cdot+B_k^{(n)}\right)-v\left(0,\cdot+B_{t_k^{(n)}}\right)}{\left(\Delta x^{(n)}\right)^{1/2}}=\frac{u_0^{(n)}\left(\cdot-B_0^{(n)}+B_k^{(n)}\right)-u_0\left(\cdot-B_0+B_{t^{(n)}_k}\right)}{\left(\Delta x^{(n)}\right)^{1/2}},\\
X^{(n),1}_k(\cdot)&:=&\left(\Delta x^{(n)}\right)^{-1/2}\sum_{j=1}^k\Delta v^{(n)}M_j^{(n)}\left(\cdot-B^{(n)}_{j-1}+B_{t_k^{(n)}}\right)-\Delta t^{(n)}f^{(n)}\left(B_{j-1}^{(n)},Y_{j-1}^{(n)};\cdot-B_{j-1}^{(n)}+B_{t^{(n)}_k}\right),\\
X^{(n),2}_k(\cdot)&:=&\left(\Delta x^{(n)}\right)^{-1/2}\left(\Delta t^{(n)}\sum_{j=1}^k f^{(n)}\left(B^{(n)}_{j-1},Y^{(n)}_{j-1};\cdot+B^{(n)}_k-B_{j-1}^{(n)}\right)
-\int_0^{t_k^{(n)}}f\left(B_u,Y_u;\cdot-B_u+B_{t_k^{(n)}}\right)du\right).
\end{eqnarray*}\normalsize

In order to prove convergence of the pair $\left(Z^{(n),B},Z^{(n),u}\right)$ we will analyze each part in the above decompositions separately. The next two lemmata deal with the martingale part $N^{(n)}$ of the price fluctuations and the ``shifted martingale" part $X^{(n),1}$ of the volume process. 

\begin{lem}\label{N}
Under Assumptions \ref{initial}, \ref{M}, \ref{Markov}, \ref{scaling}, and \ref{scaling2},
\[N^{(n)}\RA \int_0^\cdot \sigma_B(B_t,Y_t)dW_t\quad\text{in}\quad D\left([0,T];\R\right),\]
where $W$ is a standard Brownian motion.
\end{lem}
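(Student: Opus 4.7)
The strategy mirrors the price-fluctuation argument in Theorem \ref{price}: decompose $N^{(n)}$ as a sum of martingale differences, identify the asymptotic quadratic variation via Assumption \ref{Markov} and Theorem \ref{LLN}, verify the conditional Lindeberg condition, and then invoke the functional CLT for martingale difference arrays (Theorem 3.33 in \cite{JS}). The only structural difference from the fast-regime argument is that $B$ is no longer constant, so the limiting coefficient is evaluated at $(B_t, Y_t)$ rather than at $(B_0, Y_t)$.

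First, since $B$ is deterministic, $\delta W_k^{(n)} = (\Delta x^{(n)})^{-1/2}(\delta B_k^{(n)} - \E(\delta B_k^{(n)}|\F_{k-1}^{(n)}))$, and using $\delta B_k^{(n)} = \Delta x^{(n)}(\1_B(\phi_k^{(n)}) - \1_A(\phi_k^{(n)}))$ together with Assumption \ref{scaling2} (which gives $\Delta x^{(n)}\Delta p^{(n)} = \Delta t^{(n)}$), I compute
\[
\E\!\left(\left.(\delta W_k^{(n)})^2\right|\F_{k-1}^{(n)}\right) = \Delta t^{(n)}\left(p^{(n),A} + p^{(n),B}\right)\!\left(B_{k-1}^{(n)},Y_{k-1}^{(n)}\right) + R_k^{(n)},
\]
where $R_k^{(n)} = -\E(\delta Z_k^{(n),B}|\F_{k-1}^{(n)})^2 = \mathcal{O}((\Delta t^{(n)})^{2-\alpha})$ is negligible after summing over $k \le \lfloor t/\Delta t^{(n)}\rfloor$.

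Second, combining the uniform convergence $p^{(n),I} \to p^I$ from Assumption \ref{Markov} with the law of large numbers in Theorem \ref{LLN} and the continuity of $p^A + p^B$, I obtain
\[
\sum_{k=1}^{\lfloor t/\Delta t^{(n)}\rfloor} \E\!\left(\left.(\delta W_k^{(n)})^2\right|\F_{k-1}^{(n)}\right) \stackrel{\p}{\longrightarrow} \int_0^t \sigma_B^2(B_s, Y_s)\, ds
\]
uniformly in $t \in [0,T]$. For the conditional Lindeberg condition, I bound
\[
\E\!\left(\left.(\delta W_k^{(n)})^4\right|\F_{k-1}^{(n)}\right) \le 16\,(\Delta x^{(n)})^{-2}\E\!\left(\left.(\delta B_k^{(n)})^4\right|\F_{k-1}^{(n)}\right) = \mathcal{O}\!\left((\Delta t^{(n)})^{1+\alpha}\right),
\]
so that $\sum_{k=1}^{T_n} \E((\delta W_k^{(n)})^4|\F_{k-1}^{(n)}) = \mathcal{O}((\Delta t^{(n)})^{\alpha}) = o(1)$, which is where the assumption $\alpha > 0$ enters (note that we do \emph{not} need $\alpha > 1/2$ here, because no extra $\Delta x^{(n)}$ factors appear as in the subcritical drift estimate of Theorem \ref{price}).

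Finally, Theorem 3.33 in \cite{JS} yields weak convergence of $N^{(n)}$ in $D([0,T];\R)$ to a continuous Gaussian martingale with quadratic variation $\int_0^\cdot \sigma_B^2(B_s,Y_s)\,ds$, and Dambis–Dubins–Schwarz (or a direct time-change argument) represents this limit as $\int_0^\cdot \sigma_B(B_s,Y_s)\,dW_s$ for a standard Brownian motion $W$. I do not anticipate any real obstacle: the computation is entirely parallel to the martingale-part argument in the proof of Theorem \ref{price}, with the $(B_t, Y_t)$-dependence of the limiting coefficient handled automatically by the uniform convergence granted by Theorem \ref{LLN} and the Lipschitz continuity built into Assumption \ref{Markov}.
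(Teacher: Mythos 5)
Your proposal is correct and follows essentially the same route as the paper: the same martingale-difference decomposition of $N^{(n)}$, the same computation showing $\E\bigl(\bigl(\delta W_k^{(n)}\bigr)^2\big|\F_{k-1}^{(n)}\bigr)=\Delta t^{(n)}\bigl(p^{(n),A}+p^{(n),B}\bigr)$ up to a negligible drift-squared correction, the same Lindeberg bound of order $\bigl(\Delta t^{(n)}\bigr)^{1+\alpha}$, and the same appeal to Theorem 3.33 of \cite{JS}. Your side remark that only $\alpha>0$ (not $\alpha>1/2$) is needed here is accurate and consistent with Assumption \ref{scaling2}.
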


\begin{proof}
We have for all $l\leq T_n$,
\begin{eqnarray*}
\E\left(\left.\left[\delta W_k^{(n)}\right]^2\right|\F_{k-1}^{(n)}\right)&=&\Delta x^{(n)}\Delta p^{(n)}\left(p^{(n),B}\left(B_{k-1}^{(n)},Y_{k-1}^{(n)}\right)+p^{(n),A}\left(B_{k-1}^{(n)},Y_{k-1}^{(n)}\right)\right)\\
&&-\frac{\left(\Delta t^{(n)}\right)^2}{\Delta x^{(n)}}\left(p^{(n),B}\left(B_{k-1}^{(n)},Y_{k-1}^{(n)}\right)-p^{(n),A}\left(B_{k-1}^{(n)},Y_{k-1}^{(n)}\right)\right)^2.
\end{eqnarray*}
Now by Assumption \ref{Markov} and Theorem \ref{LLN},
\[\sup_{t\leq T}\left|p^{(n),B}\left(B^{(n)}(t),Y^{(n)}(t)\right)+p^{(n),A}\left(B^{(n)}(t),Y^{(n)}(t)\right)-p^{A}\left(B_t,Y_t\right)-p^{B}\left(B_t,Y_t\right)\right|\stackrel{\p}{\longrightarrow}0.\]
Hence for all $t\in[0,T]$,
\begin{align*}
&\sum_{k=1}^{\lfloor t/\Delta t^{(n)}\rfloor}\E\left(\left.\left(\delta W^{(n)}_k\right)^2\right|\F_{k-1}^{(n)}\right)
\stackrel{\p}{\longrightarrow} \int_0^tp^{A}(B_u,Y_u)+p^{B}(B_u,Y_u)du=\int_0^t\sigma_B^2(B_u,Y_u)du.
\end{align*}
Moreover, the conditional Lindeberg condition is satisfied, since uniformly in $k\leq T_n$,
\begin{eqnarray*}
\E\left(\left.\left(\delta W_k^{(n)}\right)^4\right|\F_{k-1}^{(n)}\right)&\leq&16\left(\Delta x^{(n)}\right)^2\Delta p^{(n)}\left(p^{(n),B}\left(B_{k-1}^{(n)},Y_{k-1}^{(n)}\right)+p^{(n),A}\left(B_{k-1}^{(n)},Y_{k-1}^{(n)}\right)\right)=o\left(\Delta t^{(n)}\right).
\end{eqnarray*}
Therefore, Theorem 3.33 in \cite{JS} implies the weak convergence of $N^{(n)},\ n\in\N$, to a Gaussian martingale with covariance function $(s,t)\mapsto\int_0^{s\wedge t}\sigma_B^2(B_u,Y_u)du$. Thus, denoting by $W$ a Brownian motion,
\[N^{(n)}\RA \int_0^\cdot \sigma_B(B_t,Y_t)dW_t.\]
\end{proof}

\begin{lem}\label{x1}
Under Assumptions \ref{M}, \ref{Markov}, \ref{scaling}, and \ref{scaling2},  we have $\sup_{k\leq T_n}\left\Vert X^{(n),1}_k\right\Vert_{L^2}\stackrel{\p}{\longrightarrow}0$.
\end{lem}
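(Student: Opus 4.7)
The plan is to exploit the fact that, once we strip off the $k$-dependent spatial shift $B_{t_k^{(n)}}$ using translation invariance of Lebesgue measure, $X^{(n),1}_k$ becomes essentially a sum of martingale differences indexed by $k$ (for each fixed spatial point). This reduces the problem to a Doob-type $L^2$ estimate combined with the scaling from Assumption~\ref{scaling2}.

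Step 1: Undo the $B_{t_k^{(n)}}$-shift. Since the same shift $B_{t_k^{(n)}}$ appears in both the $M_j^{(n)}$ term and the $f^{(n)}$ term, the substitution $y=x+B_{t_k^{(n)}}$ yields
\[
\left\Vert X^{(n),1}_k\right\Vert_{L^2}^2=\frac{1}{\Delta x^{(n)}}\int_\R\left(\sum_{j=1}^k G_j^{(n)}(y)\right)^{\!2}dy,
\]
where
\[
G_j^{(n)}(y):=\Delta v^{(n)}M_j^{(n)}\!\left(y-B^{(n)}_{j-1}\right)-\Delta t^{(n)}f^{(n)}\!\left(B_{j-1}^{(n)},Y_{j-1}^{(n)};y-B^{(n)}_{j-1}\right).
\]
By Assumption~\ref{Markov}(2) combined with $\Delta t^{(n)}=\Delta v^{(n)}$, we have $\E\!\left(G_j^{(n)}(y)\mid\F_{j-1}^{(n)}\right)=0$ for each $y\in\R$, so $S_k^{(n)}(y):=\sum_{j=1}^kG_j^{(n)}(y)$ is a square-integrable martingale in $k$ for every fixed $y$.

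Step 2: Apply Doob's $L^2$-inequality pointwise and integrate. Combining Doob's inequality with Fubini gives
\[
\E\sup_{k\leq T_n}\!\left\Vert X^{(n),1}_k\right\Vert_{L^2}^2\leq\frac{1}{\Delta x^{(n)}}\E\int_\R\sup_{k\leq T_n}S_k^{(n)}(y)^2\,dy\leq\frac{4}{\Delta x^{(n)}}\sum_{j=1}^{T_n}\int_\R\E\!\left[G_j^{(n)}(y)^2\right]dy.
\]
Estimating conditional variance by conditional second moment, using $\left|\omega_j^{(n)}\right|\leq M$ from Assumption~\ref{M}, and noting that $\int_\R \1_{I^{(n)}(\pi_j^{(n)})}(y-B^{(n)}_{j-1})\,dy=\Delta x^{(n)}$ independently of $\pi_j^{(n)}$ and $B_{j-1}^{(n)}$, one obtains
\[
\int_\R\E\!\left[G_j^{(n)}(y)^2\mid\F_{j-1}^{(n)}\right]dy\leq\frac{(\Delta v^{(n)})^2}{(\Delta x^{(n)})^2}\,M^2\cdot\Delta x^{(n)}=\frac{M^2(\Delta t^{(n)})^2}{\Delta x^{(n)}}.
\]

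Step 3: Conclude via scaling. Summing over $j\leq T_n\leq T/\Delta t^{(n)}$ and using Assumption~\ref{scaling} ($\Delta x^{(n)}=(\Delta t^{(n)})^\alpha$) yields
\[
\E\sup_{k\leq T_n}\!\left\Vert X^{(n),1}_k\right\Vert_{L^2}^2\leq\frac{4TM^2\,\Delta t^{(n)}}{(\Delta x^{(n)})^2}=4TM^2(\Delta t^{(n)})^{1-2\alpha},
\]
which tends to $0$ precisely because $\alpha<\tfrac12$ by Assumption~\ref{scaling2}. Markov's inequality then upgrades this to the claimed convergence in probability.

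The only genuinely delicate point is the realisation that the common shift by $B_{t_k^{(n)}}$ (which is not $\F_{j-1}^{(n)}$-measurable for $j<k$ and thus would destroy martingality if it were inside only one of the two summands) can be removed by the change of variable on the spatial integral; after that, everything is a routine Doob/Burkholder-type estimate, and the scaling condition $\alpha<1/2$ is exactly what makes $(\Delta t^{(n)})^{1-2\alpha}$ vanish.
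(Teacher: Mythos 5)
Your proof is correct, and it rests on exactly the same key observation as the paper's: the $k$-dependent spatial shift by $B_{t_k^{(n)}}$ destroys the martingale structure only superficially, and removing it (the paper phrases this as ``the shifted process $X^{(n),1}_k(\cdot-B_{t_k^{(n)}})$ is a martingale'', you phrase it as a change of variables in the $L^2$-norm) reduces everything to a maximal inequality for a sum of $L^2$-valued martingale differences, with the scaling $\alpha<1/2$ delivering the rate $(\Delta t^{(n)})^{1/2-\alpha}$. Where you diverge is in the maximal inequality itself: the paper invokes the Pisier-type bound $\E\bigl(\sup_i\Vert X_i\Vert_{L^2}\bigr)\leq C\,\E\bigl(\sum_i\Vert X_i-X_{i-1}\Vert_{L^2}^2\bigr)^{1/2}$ for Hilbert-space-valued martingales (Lemma \ref{Pisier} in the appendix) and then Markov's inequality, whereas you apply the scalar Doob $L^2$ maximal inequality fiberwise in the spatial variable and integrate via Tonelli, using orthogonality of the increments. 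Your route is more elementary and self-contained (no appendix lemma needed), and because you work with second moments it even gives the slightly stronger bound $\E\sup_k\Vert X^{(n),1}_k\Vert_{L^2}^2=\mathcal{O}\bigl((\Delta t^{(n)})^{1-2\alpha}\bigr)$; the only (harmless) technicality you pass over is that the identity $\E\bigl(G_j^{(n)}(y)\mid\F_{j-1}^{(n)}\bigr)=0$ from Assumption \ref{Markov}(2) holds for Lebesgue-a.e.\ $y$ rather than every $y$, which is all the Fubini step requires.
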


\begin{proof}
Let $\eps>0$. First, we note that even though $\left(X^{(n),1}_k\right)_{k\leq T_n}$ is not a martingale, the shifted process $\left(X^{(n),1}_k\left(\cdot-B_{t_k^{(n)}}\right)\right)_{k\leq T_n}$ is actually a martingale. Hence, we can apply 
Lemma \ref{Pisier} to the shifted process. Thus, we have\small
\begin{eqnarray*}
\p\left(\sup_{k\leq T_n}\left\Vert X^{(n),1}_k\right\Vert_{L^2}>\eps\right)&=&\p\left(\sup_{k\leq T_n}\left\Vert X^{(n),1}_k\left(\cdot-B_{t^{(n)}_k}\right)\right\Vert_{L^2}>\eps\right)\leq \eps^{-1}\E\left(\sup_{k\leq T_n}\left\Vert X_k^{(n),1}\left(\cdot-B_{t^{(n)}_k}\right)\right\Vert_{L^2}\right)\\
&\leq& C_\eps\cdot\E\left(\sum_{k=1}^{T_n}\left\Vert X^{(n),1}_k\left(\cdot-B_{t^{(n)}_k}\right)-X^{(n),1}_{k-1}\left(\cdot-B_{t^{(n)}_{k-1}}\right)\right\Vert_{L^2}^2\right)^{1/2}\\
&\leq& \frac{C_\eps\Delta t^{(n)}}{\left(\Delta x^{(n)}\right)^{1/2}}\left(\sum_{k=1}^{T_n}\E\left\Vert M^{(n)}_k\left(\cdot-B_{k-1}^{(n)}\right)-f^{(n)}\left(B^{(n)}_{k-1},Y^{(n)}_{k-1};\cdot-B^{(n)}_{k-1}\right)\right\Vert_{L^2}^2\right)^{1/2}\\
&\leq& \frac{C_\eps\Delta t^{(n)}T_n^{1/2}}{\left(\Delta x^{(n)}\right)^{1/2}}\left(\sup_{k\leq T_n}\E\left\Vert M^{(n)}_k\right\Vert_{L^2}^2\right)^{1/2}\leq C_\eps M\frac{\sqrt{T\Delta t^{(n)}}}{\Delta x^{(n)}}=\mathcal{O}\left(\Delta t^{(n)}\right)^{1/2-\alpha}.
\end{eqnarray*}\normalsize
Since $\eps>0$ was arbitrary and since $\alpha<\frac{1}{2}$ by Assumption \ref{scaling2}, we conclude that \mbox{$\sup_{k\leq T_n}\left\Vert X^{(n),1}_k\right\Vert_{L^2}\stackrel{\p}{\longrightarrow}0$.}
\end{proof}

The next three lemmata give approximations for the drift part of $Z^{(n),B}$ and for the two processes $X^{(n),0}$ and $X^{(n),2}$. They allow us to guess the limiting dynamics of $Z^{(n),B}$ and $\left\langle Z^{(n),u},\varphi\right\rangle$ for  $\varphi\in H^3$.

\begin{lem}\label{ZnB}
Under Assumptions \ref{initial}, \ref{M}, \ref{Markov}, \ref{scaling}, \ref{pxx}, and \ref{scaling2}, there exist stochastic processes $C^{(n),1}=\left(C^{(n),1}_t\right),\ C^{(n),2}:=\left(C^{(n),2}_t\right)$, converging to zero in probability as $n\ra\infty$, uniformly in $t\in[0,T]$, such that for all $n\in\N$ and $k\leq T_n$,
\begin{equation*}
Z^{(n),B}_k=N^{(n)}_k+\int_0^{t^{(n)}_k}\left(p_b(B_t,Y_t)+C^{(n),1}_t\right)Z^{(n),B}_t+\left(p_y(B_t,Y_t)+C^{(n),2}_t\right)Z^{(n),Y}_tdt+C^{(n),3}_k,
\end{equation*}
where $\sup_{k\leq T_n}C^{(n),3}_k\ra0$ a.s. 
\end{lem}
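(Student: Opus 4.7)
My plan is to start from the telescoping identity
\[
(\Delta x^{(n)})^{1/2}\bigl(Z^{(n),B}_k - Z^{(n),B}_0\bigr) = \sum_{j=1}^k \delta B^{(n)}_j - \int_0^{t^{(n)}_k} p^{B-A}(B_s, Y_s)\,ds.
\]
Splitting each $\delta B^{(n)}_j$ into its $\F_{j-1}^{(n)}$-compensator and a martingale difference, and using $\Delta x^{(n)}\Delta p^{(n)} = \Delta t^{(n)}$ from Assumption \ref{scaling}, the martingale part divided by $(\Delta x^{(n)})^{1/2}$ is exactly $N^{(n)}_k$, while $\E(\delta B^{(n)}_j \mid \F_{j-1}^{(n)}) = \Delta t^{(n)}\, p^{(n),B-A}(B^{(n)}_{j-1}, Y^{(n)}_{j-1})$. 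The remaining task is therefore to rewrite
\[
\frac{1}{(\Delta x^{(n)})^{1/2}}\sum_{j=1}^k\int_{t^{(n)}_{j-1}}^{t^{(n)}_j}\bigl[p^{(n),B-A}(B^{(n)}_{j-1}, Y^{(n)}_{j-1}) - p^{B-A}(B_s, Y_s)\bigr]\,ds
\]
in the integral form required by the statement, up to a uniform vanishing error.

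I would treat the bracket by three successive swaps. Swapping $p^{(n),B-A}$ for $p^{B-A}$ at the discrete state costs $o((\Delta x^{(n)})^{1/2})$ per interval by Assumption \ref{Markov}, so its total contribution after summing $T_n$ terms of mass $\Delta t^{(n)}$ and dividing by $(\Delta x^{(n)})^{1/2}$ is deterministic and $o(1)$; similarly, replacing $(B_{t^{(n)}_{j-1}}, Y_{t^{(n)}_{j-1}})$ by $(B_s, Y_s)$ costs $O(\Delta t^{(n)})$ per interval by Lipschitz regularity of $p^{B-A}, B, Y$, giving a deterministic total of $O((\Delta t^{(n)})^{1-\alpha/2}) = o(1)$ since $\alpha < 1/2$. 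Together with $Z^{(n),B}_0 = o(1)$ from Assumption \ref{initial}, these errors constitute $C^{(n),3}_k$ and vanish almost surely uniformly in $k$. The only substantive swap is $p^{B-A}(B^{(n)}_{j-1}, Y^{(n)}_{j-1}) - p^{B-A}(B_{t^{(n)}_{j-1}}, Y_{t^{(n)}_{j-1}})$; Taylor's theorem with integral remainder, permitted by the $C^1$-regularity in Assumption \ref{pxx}, gives the exact identity
\[
\tilde p_b^{(n)}(t^{(n)}_{j-1})\bigl(B^{(n)}_{j-1} - B_{t^{(n)}_{j-1}}\bigr) + \tilde p_y^{(n)}(t^{(n)}_{j-1})\bigl(Y^{(n)}_{j-1} - Y_{t^{(n)}_{j-1}}\bigr),
\]
with $\tilde p_b^{(n)}(t) := \int_0^1 p_b\bigl(B_t + s(B^{(n)}(t) - B_t), Y_t + s(Y^{(n)}(t) - Y_t)\bigr)\,ds$ and $\tilde p_y^{(n)}$ defined analogously. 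Since each fluctuation carries a factor $(\Delta x^{(n)})^{1/2}$, this cancels the overall rescaling, and the piecewise-constant reassembly of the sum yields $\int_0^{t^{(n)}_k}[\tilde p_b^{(n)}(s) Z^{(n),B}_s + \tilde p_y^{(n)}(s) Z^{(n),Y}_s]\,ds$. Defining $C^{(n),1}_t := \tilde p_b^{(n)}(t) - p_b(B_t, Y_t)$ and $C^{(n),2}_t := \tilde p_y^{(n)}(t) - p_y(B_t, Y_t)$, the uniform convergence $\sup_{t \leq T} |C^{(n),i}_t| \to 0$ in probability follows from the uniform continuity of $p_b, p_y$ (guaranteed by the second-derivative bounds in Assumption \ref{pxx}) together with Theorem \ref{LLN}, which forces the integration base point to converge uniformly in $(s, t)$ to $(B_t, Y_t)$.

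The main technical delicacy is that we have no a priori control on the magnitudes of $Z^{(n),B}$ or $Z^{(n),Y}$ at this stage, so the coefficient error $\tilde p_b^{(n)} - p_b(B, Y)$ cannot simply be absorbed into the scalar error $C^{(n),3}_k$: it multiplies an unbounded fluctuation. Taking the Taylor base point to be the deterministic trajectory $(B_t, Y_t)$ rather than the stochastic $(B^{(n)}(t), Y^{(n)}(t))$ is what isolates the fluctuations inside the explicit multiplicative factors $Z^{(n),B}, Z^{(n),Y}$ while leaving only a vanishing coefficient perturbation, which is precisely the algebraic structure demanded by the statement.
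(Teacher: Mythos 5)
Your proposal is correct and follows essentially the same route as the paper: isolate the martingale part $N^{(n)}$, use $\Delta x^{(n)}\Delta p^{(n)}=\Delta t^{(n)}$ and the $o\left(\Delta x^{(n)}\right)^{1/2}$-closeness of $p^{(n),B-A}$ to $p^{B-A}$ from Assumption \ref{Markov} to reduce to the difference $p^{B-A}$ evaluated along the discrete versus the limiting trajectory, and then Taylor-expand with integral remainder around the deterministic path so that the vanishing coefficient perturbations multiply (rather than add to) the unbounded fluctuations, invoking Assumption \ref{pxx} and Theorem \ref{LLN}. The only cosmetic difference is that you anchor the expansion at the grid times and reassemble into a continuous-time integral (a deterministic $\mathcal{O}\left(\Delta t^{(n)}\left(\Delta x^{(n)}\right)^{-1/2}\right)$ correction), whereas the paper compares $p^{B-A}\left(B^{(n)}_t,Y^{(n)}_t\right)$ with $p^{B-A}(B_t,Y_t)$ directly at each continuous time $t$.
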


\begin{proof}
First we note that Assumptions \ref{initial}, \ref{scaling}, and \ref{scaling2} imply that $Z_0^{(n),B}\ra 0$. Moreover, let us write $p^{B-A}:=p^B-p^A$ and $p^{(n),B-A}:=p^{(n),B}-p^{(n),A}$. Then by definition\small
\[\delta Z_k^{(n),B}=\left(\Delta x^{(n)}\right)^{-1/2}\delta B_k^{(n)}=\left(\Delta x^{(n)}\right)^{1/2} \left(\1_B\left(\phi_k^{(n)}\right)-\1_A\left(\phi_k^{(n)}\right)\right)-\left(\Delta x^{(n)}\right)^{-1/2}\int_{t_{k-1}^{(n)}}^{t_k^{(n)}}p^{B-A}(B_t,Y_t)dt\]\normalsize
and by Assumptions \ref{Markov}, \ref{scaling}, and \ref{scaling2} we can write
\begin{eqnarray*}
\E\left(\left.\delta Z_k^{(n),B}\right|\F_{k-1}^{(n)}\right)&=&\left(\Delta x^{(n)}\right)^{-1/2}\int_{t_{k-1}^{(n)}}^{t_k^{(n)}}\left(p^{(n),B-A}\left(B^{(n)}_t,Y^{(n)}_t\right)-p^{B-A}(B_t,Y_t)\right)dt\\
&=&C^{(n),3}_k+\left(\Delta x^{(n)}\right)^{-1/2}\int_{t_{k-1}^{(n)}}^{t_k^{(n)}}\left(p^{B-A}\left(B^{(n)}_t,Y^{(n)}_t\right)-p^{B-A}\left(B_t,Y_t\right)\right)dt
\end{eqnarray*}
with $\sup_{k\leq T_n}C^{(n),3}_k\ra0$ a.s. Furthermore, due to Assumption \ref{pxx} and Theorem \ref{LLN} there exist random variables $C^{(n),i}_{u,t},\ i=1,2$, converging to zero in probability (uniformly in $u,t$) such that
\begin{align*}
&\left(\Delta x^{(n)}\right)^{-1/2}\left(p^{B-A}\left(B^{(n)}_t,Y^{(n)}_t\right)-p^{B-A}(B_t,Y_t)\right)\\
&\qquad\qquad=\int_0^1p_b^{B-A}\left(B_t+s\left(B_t^{(n)}-B_t\right),Y_t+s\left(Y_t^{(n)}-Y_t\right)\right)Z^{(n),B}_tds\\
&\qquad\qquad\qquad+\int_0^1p_y^{B-A}\left(B_t+s\left(B_t^{(n)}-B_t\right),Y_t+s\left(Y_t^{(n)}-Y_t\right)\right)Z^{(n),Y}_tds\\
&\qquad\qquad=p_b^{B-A}(B_t,Y_t)Z_t^{(n),B}+p_y^{B-A}(B_t,Y_t)Z_t^{(n),Y}+\int_0^1\int_0^s C^{(n),1}_{u,t}Z_t^{(n),B}+C^{(n),2}_{u,t}Z_t^{(n),Y}duds.
\end{align*}
Therefore, we may write 
\begin{equation*}
\sum_{j=1}^k\E\left(\left.\delta Z^{(n),B}_k\right|\F_{k-1}^{(n)}\right)= \int_0^{t^{(n)}_k}\left(p_b(B_t,Y_t)+C^{(n),1}_t\right)Z^{(n),B}_t+\left(p_y(B_t,Y_t)+C^{(n),2}_t\right)Z^{(n),Y}_tdt+C_k^{(n),3},
\end{equation*}
where $C^{(n),1}_t, C^{(n),2}_t$ are converging to zero in probability as $n\ra\infty$ uniformly in \mbox{$t\in[0,T]$.}
\end{proof}

\begin{lem}\label{x0}
Under Assumptions \ref{initial}, \ref{M}, \ref{Markov}, \ref{scaling}, and \ref{scaling2}, there exist for any $\varphi\in H^3$ random variables $C^{(n)}_k,\ k\leq T_n$, converging to zero in probability, uniformly in $k\leq T_n$, such that
\[\left\langle X_k^{(n),0},\varphi\right\rangle=
\left(\left\langle u_0'\left(\cdot-B_0+B_{t_k^{(n)}}\right),\varphi\right\rangle+C^{(n)}_k\right)
\left(Z^{(n),B}_k-Z^{(n),B}_0\right)\quad\text{with}\quad\sup_{k\leq T_n}\left|C^{(n)}_k\right|=o_\p\left(\left\Vert \varphi\right\Vert_{H^2}\right).\]
\end{lem}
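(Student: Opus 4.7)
The plan is to Taylor-expand $u_0$ about the reference shift $-B_0+B_{t_k^{(n)}}$, exploiting the $C^1$-regularity of $u_0$ from Assumption \ref{initial}, and to treat the discretization difference $u_0^{(n)}-u_0$ as a genuinely lower-order correction.

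First, I split $X^{(n),0}_k=A^{(n)}_k+B^{(n)}_k$ by inserting the intermediate function $u_0(\cdot-B_0^{(n)}+B_k^{(n)})$:
\begin{align*}
A^{(n)}_k(x)&:=\frac{u_0(x-B_0^{(n)}+B_k^{(n)})-u_0(x-B_0+B_{t_k^{(n)}})}{(\Delta x^{(n)})^{1/2}},\\
B^{(n)}_k(x)&:=\frac{(u_0^{(n)}-u_0)(x-B_0^{(n)}+B_k^{(n)})}{(\Delta x^{(n)})^{1/2}}.
\end{align*}
By Assumption \ref{initial}, $\|B^{(n)}_k\|_{L^2}=\mathcal{O}((\Delta x^{(n)})^{1/2})$ uniformly in $k$, so $|\langle B^{(n)}_k,\varphi\rangle|=\mathcal{O}((\Delta x^{(n)})^{1/2})\|\varphi\|_{L^2}$. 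For $A^{(n)}_k$, set $\delta_k:=(B_k^{(n)}-B_{t_k^{(n)}})-(B_0^{(n)}-B_0)=(\Delta x^{(n)})^{1/2}(Z^{(n),B}_k-Z^{(n),B}_0)$ and $y:=x-B_0+B_{t_k^{(n)}}$, so that the numerator equals $u_0(y+\delta_k)-u_0(y)$. Because Assumption \ref{initial} together with $L^2$-convergence forces $u_0\in C^1$ to have compact support in $[-M,M]$, the fundamental theorem of calculus yields
\[u_0(y+\delta_k)-u_0(y)=\delta_k\, u_0'(y)+\int_0^{\delta_k}(u_0'(y+s)-u_0'(y))\,ds.\]
Dividing by $(\Delta x^{(n)})^{1/2}$ and pairing with $\varphi$, the leading term reproduces exactly the main term $(Z^{(n),B}_k-Z^{(n),B}_0)\langle u_0'(\cdot-B_0+B_{t_k^{(n)}}),\varphi\rangle$, while the Taylor remainder has $L^2$-norm at most $|Z^{(n),B}_k-Z^{(n),B}_0|\,\omega_n$, where $\omega_n:=\sup_{|s|\leq\sup_{j\leq T_n}|\delta_j|}\|u_0'(\cdot+s)-u_0'\|_{L^2}$. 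Since $\sup_j|\delta_j|\stackrel{\p}{\longrightarrow}0$ by Theorem \ref{LLN} and Assumption \ref{initial}, strong continuity of $L^2$-translation gives $\omega_n\stackrel{\p}{\longrightarrow}0$, so this residual contributes $(Z^{(n),B}_k-Z^{(n),B}_0)\cdot o_\p(1)\|\varphi\|_{L^2}$ uniformly in $k$.

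Combining these estimates, one defines $C^{(n)}_k$ by demanding the identity of the lemma, using the convention $0/0:=0$ from the paper's notation whenever $Z^{(n),B}_k=Z^{(n),B}_0$. The Taylor contribution directly enters $C^{(n)}_k$ as $o_\p(1)\|\varphi\|_{L^2}\leq o_\p(\|\varphi\|_{H^2})$ uniformly in $k$. The main obstacle is the residual $\langle B^{(n)}_k,\varphi\rangle$, which does not naturally carry a factor of $Z^{(n),B}_k-Z^{(n),B}_0$; I handle this by partitioning the probability space according to whether $|Z^{(n),B}_k-Z^{(n),B}_0|$ exceeds $(\Delta x^{(n)})^{1/4}$, on which event the quotient is $\mathcal{O}((\Delta x^{(n)})^{1/4})\|\varphi\|_{L^2}=o_\p(\|\varphi\|_{H^2})$, whereas on the complement $C^{(n)}_k$ is set to zero and the $\mathcal{O}((\Delta x^{(n)})^{1/2})\|\varphi\|_{L^2}$ discrepancy is absorbed into the overall error budget of the downstream tightness and identification arguments.
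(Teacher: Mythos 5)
Your proposal is correct and follows the same decomposition as the paper: split $X^{(n),0}_k$ into the shift term (a difference of translates of $u_0$) and the discretization term $u_0^{(n)}-u_0$, Taylor-expand the former, and bound the latter by $(\Delta x^{(n)})^{-1/2}\Vert u_0^{(n)}-u_0\Vert_{L^2}=\mathcal{O}((\Delta x^{(n)})^{1/2})$. The two points where you diverge are both fine and worth noting. First, for the Taylor remainder you use a first-order expansion controlled by the modulus of $L^2$-continuity of translation of $u_0'$ (which lies in $L^2$ since $u_0$ is $C^1$ with support forced into $[-M,M]$), so your error is $o_\p(1)\Vert\varphi\Vert_{L^2}$; the paper instead expands to second order and integrates by parts twice, obtaining $C^{(n)}_k=(B_0-B_0^{(n)}+B_k^{(n)}-B_{t_k^{(n)}})\int_0^1\int_0^s\langle\varphi'',u_0(\cdot+\dots)\rangle\,dr\,ds$, which is why the statement is phrased as $o_\p(\Vert\varphi\Vert_{H^2})$; your bound is in fact slightly sharper in its dependence on $\varphi$. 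Second, you correctly observe that the discretization term carries no factor of $Z^{(n),B}_k-Z^{(n),B}_0$ and hence cannot be forced into the displayed product form on the event where that increment is small; the paper's own proof does not resolve this either — it merely shows this term is $o(1)$ uniformly in $k$ and tacitly treats it as an additive error, which is all that is used downstream (the representation of $\langle Z^{(n),u}_k,\varphi\rangle$ following Lemma \ref{x2} already contains an additive $C^{(n)}_k=o_\p(1)$ term). So your event-splitting device, or simply stating the lemma with an extra additive $o_\p(1)$ uniform in $k$, establishes exactly what the paper actually proves and uses; there is no genuine gap.
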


\begin{proof}
We have
\begin{eqnarray*}
\left\langle \varphi,X_k^{(n),0}\right\rangle&=&
\left(\Delta x^{(n)}\right)^{-1/2}\left\langle\varphi, u_0^{(n)}\left(\cdot-B_0^{(n)}+B_k^{(n)}\right)-u_0\left(\cdot-B_0^{(n)}+B_k^{(n)}\right)\right\rangle\\
&&\quad+\left(\Delta x^{(n)}\right)^{-1/2}\left\langle\varphi, u_0\left(\cdot-B_0^{(n)}+B_k^{(n)}\right)-u_0\left(\cdot-B_0+B_{t_k^{(n)}}\right)\right\rangle
\end{eqnarray*}
and 
\[\left(\Delta x^{(n)}\right)^{-1/2}\left|\left\langle\varphi, u_0^{(n)}\left(\cdot-B_0^{(n)}+B_k^{(n)}\right)-u_0\left(\cdot-B_0^{(n)}+B_k^{(n)}\right)\right\rangle\right|\leq \left\Vert\varphi\right\Vert_{L^2}\left(\Delta x^{(n)}\right)^{-1/2}\left\Vert u_0^{(n)}-u_0\right\Vert_{L^2},\]
which converges to zero by Assumption \ref{initial}. Moreover, 
\begin{align*}
&\left(\Delta x^{(n)}\right)^{-1/2}\left\langle\varphi, u_0\left(\cdot-B_0^{(n)}+B_k^{(n)}\right)-u_0\left(\cdot-B_0+B_{t_k^{(n)}}\right)\right\rangle\\
=&\left\langle\varphi,\int_0^1u_0'\left(\cdot-B_0+B_{t_k^{(n)}}-s\left(B_0^{(n)}-B_0-B^{(n)}_k+B_{t_k^{(n)}}\right)\right)ds\right\rangle
\left(Z^{(n),B}_k-Z^{(n),B}_0\right)\\
=&\left(\left\langle\varphi,u_0'\left(\cdot-B_0+B_{t_k^{(n)}}\right)\right\rangle+C^{(n)}_k\right)
\left(Z^{(n),B}_k-Z^{(n),B}_0\right)
\end{align*}
with\small
\[C^{(n)}_k:=\left(B_0-B_0^{(n)}+B_k^{(n)}-B_{t_k^{(n)}}\right)\int_0^1\int_0^s\left\langle \varphi'',u_0\left(\cdot-B_0+B_{t_k^{(n)}}-r\left(B_0^{(n)}-B_0-B^{(n)}_k+B_{t_k^{(n)}}\right)\right)\right\rangle drds\]\normalsize
converging to zero in probability, uniformly in $k\leq T_n$, by Theorem \ref{LLN}. 
\end{proof}

\begin{lem}\label{x2}
Under Assumptions \ref{initial}, \ref{M}, \ref{Markov}, \ref{scaling}, \ref{fxx}, and \ref{scaling2}, there exist for any $\varphi\in H^3$ random variables $C^{(n),3}_k,C^{(n),i}_k(u),\ i=1,2,\ u\leq t_k^{(n)},\ k\leq T_n,$ satisfying
\[\sup_{k\leq T_n}\left|C^{(n),3}_k\right|=o_\p\left(\left\Vert\varphi\right\Vert_{H^2}\right)\qquad\text{and}\qquad
\sup_{k\leq T_n}\sup_{u\leq t_k^{(n)}}\left|C^{(n),i}_k(u)\right|=o_\p\left(\left\Vert\varphi\right\Vert_{H^2}\right),\quad i=1,2,\] 
such that
\begin{align*}
&\left\langle X^{(n),2}_k,\varphi\right\rangle=
\int_0^{t_k^{(n)}}\left(\left\langle f_y\left(B_u,Y_u\right),\varphi\left(\cdot+B_u-B_{t_k^{(n)}}\right)\right\rangle+C^{(n),2}_k(u)\right) Z^{(n),Y}(u)du\\ 
&\ +\int_0^{t_k^{(n)}} \left(\left\langle f_b\left(B_u,Y_u\right),\varphi\left(\cdot+B_u-B_{t_k^{(n)}}\right)\right\rangle+\left\langle f\left(B_u,Y_u\right),\varphi'\left(\cdot+B_u-B_{t_k^{(n)}}\right)\right\rangle+C^{(n),1}_k(u)\right)Z^{(n),B}(u)du\\
&\ -Z^{(n),B}_k\left(\left\langle u_0,\varphi'\left(\cdot+B_0-B_{t_k^{(n)}}\right)\right\rangle+\int_0^{t_k^{(n)}}\left\langle f\left(B_u,Y_u\right),\varphi'\left(\cdot+B_u-B_{t_k^{(n)}}\right)\right\rangle du +C^{(n),3}_k\right)+o\left(\left\Vert\varphi\right\Vert_{L^2}\right),
\end{align*}
where the error term is uniform in $k\leq T_n$.
\end{lem}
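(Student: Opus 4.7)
The plan is a second-order Taylor expansion of the integrand defining $X^{(n),2}_k$, in the triple $(b,y,s)$ where $s$ plays the role of the shift of the spatial argument of $f$. First, since $B^{(n)}(u)=B^{(n)}_{j-1}$ and $Y^{(n)}(u)=Y^{(n)}_{j-1}$ on $[t_{j-1}^{(n)},t_j^{(n)})$, the discrete Riemann sum can be rewritten exactly as $\int_0^{t_k^{(n)}}f^{(n)}(B^{(n)}(u),Y^{(n)}(u);\cdot+B^{(n)}_k-B^{(n)}(u))\,du$, and $f^{(n)}$ may be replaced by $f$ using Assumption \ref{Markov}(2): the $\mathcal{O}(\Delta x^{(n)})$ bound on $\|f^{(n)}-f\|_{L^2}$ yields an error of order $\mathcal{O}(T\sqrt{\Delta x^{(n)}})\,\|\varphi\|_{L^2}=o(\|\varphi\|_{L^2})$ uniformly in $k$. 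A change of variables in each inner product converts the shift of $f$ into a shift of $\varphi$, giving
\[
\langle X^{(n),2}_k,\varphi\rangle=\frac{1}{\sqrt{\Delta x^{(n)}}}\int_0^{t_k^{(n)}}\!\!\bigl[K(B^{(n)}(u),Y^{(n)}(u),B^{(n)}(u)-B^{(n)}_k)-K(B_u,Y_u,B_u-B_{t_k^{(n)}})\bigr]\,du+o(\|\varphi\|_{L^2}),
\]
where $K(b,y,s):=\langle f(b,y),\varphi(\cdot+s)\rangle$.

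Under Assumption \ref{fxx}, $(b,y)\mapsto f(b,y)$ is twice continuously differentiable with uniformly $L^2$-bounded first and second derivatives; since $\varphi\in H^3$, $K$ is twice continuously differentiable also in $s$ with $K_b=\langle f_b,\varphi(\cdot+s)\rangle$, $K_y=\langle f_y,\varphi(\cdot+s)\rangle$, $K_s=\langle f,\varphi'(\cdot+s)\rangle$, and analogous second derivatives controlled by $\|\varphi''\|_{L^2}$ and the $L^2$-bounds on $f,f_b,f_y,\ldots$. Taylor-expand $K$ around the base point $(B_u,Y_u,B_u-B_{t_k^{(n)}})$; the three displacements equal $\sqrt{\Delta x^{(n)}}\,Z^{(n),B}(u)$, $\sqrt{\Delta x^{(n)}}\,Z^{(n),Y}(u)$, and $\sqrt{\Delta x^{(n)}}(Z^{(n),B}(u)-Z^{(n),B}_k)$ respectively. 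Division by $\sqrt{\Delta x^{(n)}}$ turns the first-order terms into $K_b Z^{(n),B}(u)+K_y Z^{(n),Y}(u)+K_s(Z^{(n),B}(u)-Z^{(n),B}_k)$, while the second-order remainder picks up an extra factor $\sqrt{\Delta x^{(n)}}$ and a quadratic expression in $Z^{(n),B}(u),Z^{(n),Y}(u)$. Since the latter are $\mathcal{O}_\p(1)$ on $[0,T]$ (a fact to be bootstrapped via a Gronwall-type estimate in the central limit argument), this remainder is $o_\p(\sqrt{\Delta x^{(n)}}\,\|\varphi\|_{H^2})$ uniformly in $k$ and gets absorbed into the error terms $C^{(n),1}_k(u)$, $C^{(n),2}_k(u)$, and $C^{(n),3}_k$.

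Integrating in $u$ gives exactly the first two summands of the claim (combining the $K_b$-integral with the $Z^{(n),B}(u)$ part of the $K_s$-integral) plus a boundary contribution $-Z^{(n),B}_k\int_0^{t_k^{(n)}}\langle f(B_u,Y_u),\varphi'(\cdot+B_u-B_{t_k^{(n)}})\rangle\,du$. To match the stated form, invoke the first-order representation $u(t,y)=u_0(y+B_t-B_0)+\int_0^tf(B_s,Y_s;y+B_t-B_s)\,ds$, obtained from \eqref{uvu} via $u(t,\cdot)=v(t,\cdot+B_t)$, and perform the change of variable $y=x+B_u-B_{t_k^{(n)}}$ in the time integral to derive
\[
\int_0^{t_k^{(n)}}\langle f(B_u,Y_u),\varphi'(\cdot+B_u-B_{t_k^{(n)}})\rangle\,du=\langle u(t_k^{(n)}),\varphi'\rangle-\langle u_0,\varphi'(\cdot+B_0-B_{t_k^{(n)}})\rangle.
\]
Rearranging produces the bracketed expression of the lemma, with all lower-order discrepancies folded into $C^{(n),3}_k$. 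The main technical obstacle is the uniformity in $k\le T_n$ of the Taylor remainder and of the $f^{(n)}\to f$ substitution, both requiring a careful pairing of the $H^2$/$H^3$-regularity of $\varphi$ with the $L^2$-boundedness of $f$ and its derivatives from Assumption \ref{fxx}, together with the in-probability uniform boundedness of $(Z^{(n),B},Z^{(n),Y})$ on $[0,T]$.
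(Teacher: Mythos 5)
Your argument up to and including the Taylor expansion is essentially the paper's own proof: you replace $f^{(n)}$ by $f$ with the same $\mathcal{O}\bigl(T(\Delta x^{(n)})^{1/2}\bigr)\|\varphi\|_{L^2}$ error, you linearly interpolate between the discrete and the limiting arguments (your $K(b,y,s)$ is the paper's $F^{(n)}_{u,k}$ after the change of variables), you identify the first-order terms $K_b Z^{(n),B}+K_yZ^{(n),Y}+K_s(Z^{(n),B}-Z^{(n),B}_k)$, and you push the second-order remainder into the $C$-coefficients. Two points, however, need attention.

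First, the final step is a genuine gap. The identity
$\int_0^{t_k^{(n)}}\langle f(B_u,Y_u),\varphi'(\cdot+B_u-B_{t_k^{(n)}})\rangle\,du=\langle u(t_k^{(n)}),\varphi'\rangle-\langle u_0,\varphi'(\cdot+B_0-B_{t_k^{(n)}})\rangle$
is correct, but it merely re-expresses the integral; it cannot generate an additional additive term. Substituting it into the boundary contribution gives $-Z^{(n),B}_k\int_0^{t_k^{(n)}}\langle f,\varphi'\rangle\,du=-Z^{(n),B}_k\langle u(t_k^{(n)}),\varphi'\rangle+Z^{(n),B}_k\langle u_0,\varphi'(\cdot+B_0-B_{t_k^{(n)}})\rangle$, whereas the bracket in the statement equals $-Z^{(n),B}_k\langle u(t_k^{(n)}),\varphi'\rangle$ by that same identity. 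The two differ by $Z^{(n),B}_k\langle u_0,\varphi'(\cdot+B_0-B_{t_k^{(n)}})\rangle$, which is $\mathcal{O}_\p(1)\|\varphi\|_{H^1}$ and not negligible. Indeed $X^{(n),2}_k$ contains no reference to $u_0$ at all, so no manipulation of it can produce that term: the $u_0$-summand in the printed statement is exactly the contribution of $X^{(n),0}_k$ established in Lemma \ref{x0} (note $\langle u_0'(\cdot-B_0+B_{t_k^{(n)}}),\varphi\rangle=-\langle u_0,\varphi'(\cdot+B_0-B_{t_k^{(n)}})\rangle$), and the paper's own proof of Lemma \ref{x2} stops after computing $\partial_sF^{(n)}_{u,k}(0)$ without ever producing it. You should either prove the representation without the $u_0$-term (which is what the Taylor expansion actually yields, and what is consistent with the combined representation (\ref{Znu}) obtained from Lemmata \ref{x0}--\ref{x2}), or explicitly note that the term is imported from Lemma \ref{x0}; claiming it follows by "rearranging" is false.

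Second, a circularity concern in your remainder control: you justify absorbing the quadratic remainder by the in-probability boundedness of $(Z^{(n),B},Z^{(n),Y})$, "to be bootstrapped via a Gronwall-type estimate". That Gronwall bound is Lemma \ref{bound} of the paper, and it is proved \emph{from} the representation of the present lemma, so you may not invoke it here. Fortunately you do not need it: since the $C^{(n),i}_k(u)$ are allowed to multiply $Z^{(n),B}(u)$, $Z^{(n),Y}(u)$ and $Z^{(n),B}_k$ in the conclusion, it suffices that one factor $\sqrt{\Delta x^{(n)}}\,Z^{(n),B}=B^{(n)}-B$ (resp.\ $\sqrt{\Delta x^{(n)}}\,Z^{(n),Y}=Y^{(n)}-Y$) tends to zero uniformly in probability, which is exactly Theorem \ref{LLN} combined with the uniform $L^2$-bounds on the second derivatives of $f$ from Assumption \ref{fxx} and on $\varphi''$; this is how the paper argues.
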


\begin{proof}
First, we note that we may replace $f^{(n)}$ with $f$ in the definition of $X^{(n),2}$. Indeed,
\begin{align*}
&\sup_{k\leq T_n}\frac{\Delta t^{(n)}}{\left(\Delta x^{(n)}\right)^{1/2} }\left|\sum_{j=1}^k\left\langle \varphi, f^{(n)}\left(B^{(n)}_{j-1},Y^{(n)}_{j-1};\cdot+B^{(n)}_k-B_{j-1}^{(n)}\right)-f\left(B^{(n)}_{j-1},Y^{(n)}_{j-1};\cdot+B^{(n)}_k-B_{j-1}^{(n)}\right)\right\rangle\right|\\
&\leq\frac{\Delta t^{(n)}\left\Vert\varphi\right\Vert_{L^2}}{\left(\Delta x^{(n)}\right)^{1/2} }\sum_{j=1}^{T_n}\left\Vert f^{(n)}\left(B^{(n)}_{j-1},Y^{(n)}_{j-1}\right)-f\left(B^{(n)}_{j-1},Y^{(n)}_{j-1}\right)\right\Vert_{L^2}
\leq 
\frac{T\left\Vert\varphi\right\Vert_{L^2}}{\left(\Delta x^{(n)}\right)^{1/2} }\sup_{b,y}\left\Vert f^{(n)}\left(b,y\right)-f\left(b,y\right)\right\Vert_{L^2},
\end{align*}
which converges to zero by Assumption \ref{Markov}. Next, we define for $\varphi\in H^3$, $n\in\N,\ k\leq T_n$, $u\leq t_k^{(n)}$, and $s\in[0,1]$,
\begin{align*}
F_{u,k}^{(n)}(s):=\left\langle f\left(B_u+s\left(B^{(n)}(u)-B_u\right),Y_u+s\left(Y^{(n)}(u)-Y_u\right);\cdot+B_{t_k^{(n)}}+s\left(B_k^{(n)}-B_{t_k^{(n)}}\right)\right),\quad\qquad\right.\\
\qquad\qquad\left.\varphi\left(\cdot+B_u+s\left(B^{(n)}(u)-B_u\right)\right)\right\rangle.
\end{align*}
Then,
\begin{eqnarray*}
\int_0^t\left\langle f\left(B^{(n)}(u),Y^{(n)}(u);\cdot+B^{(n)}_k-B^{(n)}(u)\right)-f\left(B_u,Y_u;\cdot+B_{t^{(n)}_k}-B_u\right),\varphi\right\rangle du=\int_0^tF^{(n)}_{u,k}(1)-F^{(n)}_{u,k}(0)du
\end{eqnarray*}
and by the fundamental theorem of calculus 
\begin{eqnarray*}
F_{u,k}^{(n)}(1)-F^{(n)}_{u,k}(0)=\int_0^1\frac{\partial}{\partial s}F_{u,k}^{(n)}(s)ds=\frac{\partial}{\partial s}F_{u,k}^{(n)}(0)+\int_0^1(1-s)\frac{\partial^2}{\partial s^2}F_{u,k}^{(n)}(s)ds.
\end{eqnarray*}
Clearly, $F^{(n)}_{u,k}(s)$ is almost surely twice differentiable in $s$ with first derivative $\frac{\partial}{\partial s}F_{u,k}^{(n)}(s)$ equal to \scriptsize
\begin{align*}
&\left\langle f_b\left(B_u+s\left(B^{(n)}(u)-B_u\right),Y_u+s\left(Y^{(n)}(u)-Y_u\right)\right),\varphi\left(\cdot+B_u+s\left(B^{(n)}(u)-B_u\right)-B_{t^{(n)}_k}-s\left(B^{(n)}_k-B_{t^{(n)}_k}\right)\right)\right\rangle\left(B^{(n)}(u)-B_u\right)\\
&+\left\langle f\left(B_u+s\left(B^{(n)}(u)-B_u\right),Y_u+s\left(Y^{(n)}(u)-Y_u\right)\right),\varphi'\left(\cdot+B_u+s\left(B^{(n)}(u)-B_u\right)-B_{t^{(n)}_k}-s\left(B^{(n)}_k-B_{t^{(n)}_k}\right)\right)\right\rangle\left(B^{(n)}(u)-B_u\right)\\
&-\left\langle f\left(B_u+s\left(B^{(n)}(u)-B_u\right),Y_u+s\left(Y^{(n)}(u)-Y_u\right)\right),\varphi'\left(\cdot+B_u+s\left(B^{(n)}(u)-B_u\right)-B_{t_k^{(n)}}-s\left(B^{(n)}_k-B_{t_k^{(n)}}\right)\right)\right\rangle\left(B^{(n)}_k-B_{t^{(n)}_k}\right)\\
&+\left\langle f_y\left(B_u+s\left(B^{(n)}(u)-B_u\right),Y_u+s\left(Y^{(n)}(u)-Y_u\right)\right),\varphi\left(\cdot+B_u+s\left(B^{(n)}(u)-B_u\right)-B_{t_k^{(n)}}-s\left(B^{(n)}_k-B_{t^{(n)}_k}\right)\right)\right\rangle\left(Y^{(n)}(u)-Y_u\right).
\end{align*}\normalsize
Moreover, by Assumption \ref{fxx}, Theorem \ref{LLN}, there exist some random variables $C^{(n),i}_k(s,u),\ i=1,2,3,$ satisfying $\sup_{s\in[0,1]}\sup_{k\leq T_n}\sup_{u\leq t_k^{(n)}}\left|C^{(n),i}_k(s,u)\right|=o_\p\left(\left\Vert\varphi\right\Vert_{H^2}\right)$ such that
\[
\frac{\partial^2}{\partial s^2}F_{u,k}^{(n)}(s)=\sqrt{\Delta x^{(n)}}\left(C^{(n),1}_k(s,u)Z^{(n),B}(u) + C^{(n),2}_k(s,u)Z^{(n),Y}(u)+ C^{(n),3}_k(s,u)Z^{(n),B}_k\right).
\]
Therefore, we may write
\begin{eqnarray*}
&&\frac{1}{\sqrt{\Delta x^{(n)}}}\int_0^tF^{(n)}_{u,k}(1)-F^{(n)}_{u,k}(0)du=\\
&&\qquad\qquad\int_0^t\left(\frac{1}{\sqrt{\Delta x^{(n)}}}\frac{\partial}{\partial s}F_{u,k}^{(n)}(0)+C^{(n),1}_k(u)Z^{(n),B}(u)+C^{(n),2}_k(u)Z^{(n),Y}(u) \right)du+C^{(n),3}_kZ^{(n),B}_k,
\end{eqnarray*}
where $C^{(n),1}_k(u),\ C^{(n),2}_k(u),\ C^{(n),3}_k$ converge to zero in probability (uniformly in $u,k$) and
\begin{eqnarray*}
&&\frac{1}{\sqrt{\Delta x^{(n)}}}\frac{\partial}{\partial s}F_{u,k}^{(n)}(0)=
\left\langle f\left(B_u,Y_u\right),\varphi'\left(\cdot+B_u-B_{t^{(n)}_k}\right)\right\rangle\left(Z^{(n),B}(u)-Z^{(n),B}_k\right)\\
&&\qquad+\left\langle f_b\left(B_u,Y_u\right),\varphi\left(\cdot+B_u-B_{t^{(n)}_k}\right)\right\rangle Z^{(n),B}(u)+\left\langle f_y\left(B_u,Y_u\right),\varphi\left(\cdot+B_u-B_{t^{(n)}_k}\right)\right\rangle Z^{(n),Y}(u).
\end{eqnarray*}
\end{proof}

Note that Lemma \ref{x1} implies that $\sup_{k\leq T_n}\left\langle \varphi,X^{(n),1}_k\right\rangle\stackrel{\p}{\longrightarrow}0$ for any $\varphi\in H^3\subset L^2$. Therefore, we have from Lemmata \ref{x0}, \ref{x1}, and \ref{x2} for any $\varphi\in H^3$ the following representation: for all $n\in\N$ and $k\leq T_n$,\small
\begin{align*}
\left\langle Z^{(n),u}_k,\varphi\right\rangle&=C^{(n)}_k+\int_0^{t_k^{(n)}}\left(\left\langle f_y\left(B_u,Y_u\right),\varphi\left(\cdot+B_u-B_{t_k^{(n)}}\right)\right\rangle+C^{(n),2}_k(u)\right) Z^{(n),Y}_udu\\ 
&+\int_0^{t_k^{(n)}} \left(\left\langle f_b\left(B_u,Y_u\right),\varphi\left(\cdot+B_u-B_{t_k^{(n)}}\right)\right\rangle+\left\langle f\left(B_u,Y_u\right),\varphi'\left(\cdot+B_u-B_{t_k^{(n)}}\right)\right\rangle+C^{(n),1}_k(u)\right)Z^{(n),B}_udu\\
&-Z^{(n),B}_k\left(\left\langle u_0,\varphi'\left(\cdot+B_0-B_{t_k^{(n)}}\right)\right\rangle+\int_0^{t_k^{(n)}}\left\langle f\left(B_u,Y_u\right),\varphi'\left(\cdot+B_u-B_{t_k^{(n)}}\right)\right\rangle du +C^{(n),3}_k\right).
\end{align*}\normalsize
Using (\ref{uvu}) this expression can be simplified to \small
\begin{equation}\label{Znu}
\begin{split}
\left\langle Z^{(n),u}_k,\varphi\right\rangle=\int_0^{t_k^{(n)}}\left(\left\langle f_y\left(B_u,Y_u\right),\varphi\left(\cdot+B_u-B_{t_k^{(n)}}\right)\right\rangle+C^{(n),2}_k(u)\right) Z^{(n),Y}_udu-Z^{(n),B}_k\left(\left\langle u(t), \varphi'\right\rangle+C^{(n),3}_k\right)\quad\\
\qquad+\int_0^{t_k^{(n)}} \left(\left\langle f_b\left(B_u,Y_u\right),\varphi\left(\cdot+B_u-B_t\right)\right\rangle+\left\langle f\left(B_u,Y_u\right),\varphi'\left(\cdot+B_u-B_{t_k^{(n)}}\right)\right\rangle+C^{(n),1}_k(u)\right)Z^{(n),B}_udu+C_k^{(n)}.
\end{split}
\end{equation}\normalsize
 
The next lemma establishes a-priori estimates on $Z^{(n),B}$ and $Z^{(n),Y}$. They will be used afterwards to establish tightness of $\left(Z^{(n),B},Z^{(n),u}\right), n\in\N$.

\begin{lem}\label{bound}
Under Assumptions \ref{initial}, \ref{M}, \ref{Markov}, \ref{scaling}, \ref{h}, \ref{pxx}, \ref{fxx}, and \ref{scaling2}, there exist random variables $C_n,K_n$ with $C_n\ra0$ and $K_n\ra K\in\R_+$ in probability such that for all $n\in\N$ and $k\leq T_n$,
\begin{eqnarray*}
\left|Z^{(n),B}_k\right|+\left|Z^{(n),Y}_k\right|\leq C_n+K_n\left(1+Te^{TK_n}\right)\sup_{k\leq T_n}\left|N_k^{(n)}\right|.
\end{eqnarray*}
\end{lem}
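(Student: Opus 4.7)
The plan is to set up coupled integral inequalities for $|Z^{(n),B}_k|$ and $|Z^{(n),Y}_k|$, substitute one into the other so that the supremum $\sup_{l\leq T_n}|N^{(n)}_l|$ becomes the only stochastic driver, and then finish by a discrete Gronwall argument.

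First I would use Lemma \ref{ZnB} directly: taking absolute values and exploiting that $p_b, p_y$ are uniformly bounded (Assumption \ref{pxx}) while $C^{(n),1}_t,C^{(n),2}_t,C^{(n),3}_k$ all tend to zero uniformly in probability, I obtain
\[
|Z^{(n),B}_k| \;\leq\; |N^{(n)}_k| + \widetilde{C}_n^{(1)} + \widetilde{K}_n^{(1)} \int_0^{t_k^{(n)}}\bigl(|Z^{(n),B}(u)| + |Z^{(n),Y}(u)|\bigr)du,
\]
for random constants $\widetilde{C}_n^{(1)}\to 0$ and $\widetilde{K}_n^{(1)}$ bounded in probability.

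Next I would apply the representation (\ref{Znu}) with the test function $\varphi=h$, which is legitimate since Assumption \ref{h} gives $h\in H^3$. The key point is that the pairings $\langle f_y(B_u,Y_u),h(\cdot+B_u-B_{t_k^{(n)}})\rangle$, $\langle f_b(B_u,Y_u),h(\cdot+B_u-B_{t_k^{(n)}})\rangle$, and $\langle f(B_u,Y_u),h'(\cdot+B_u-B_{t_k^{(n)}})\rangle$ are all uniformly bounded (by Cauchy--Schwarz combined with Assumption \ref{fxx} and $h\in H^3$); moreover $|\langle u(t), h'\rangle| \leq \|u(t)\|_{L^2}\|h'\|_{L^2}$ is uniformly bounded by Theorem \ref{LLN}. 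The error terms $C^{(n)}_k$, $C^{(n),i}_k(u)$ and $C^{(n),3}_k$ from Lemma \ref{x2} are all $o_\p(1)$ uniformly, so absorbing them into the constants yields
\[
|Z^{(n),Y}_k| \;\leq\; \widetilde{C}_n^{(2)} + \widetilde{K}_n^{(2)}|Z^{(n),B}_k| + \widetilde{K}_n^{(2)}\int_0^{t_k^{(n)}}\bigl(|Z^{(n),B}(u)| + |Z^{(n),Y}(u)|\bigr)du.
\]

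Now I substitute the first inequality into the second to eliminate the direct $|Z^{(n),B}_k|$ on the right, and add the two. Setting $W^{(n)}(t):=|Z^{(n),B}(t)|+|Z^{(n),Y}(t)|$ and $\mathcal{N}_n:=\sup_{l\leq T_n}|N^{(n)}_l|$, this produces an inequality of the form
\[
W^{(n)}(t_k^{(n)}) \;\leq\; C_n + K_n\, \mathcal{N}_n + K_n \int_0^{t_k^{(n)}} W^{(n)}(u)\,du,
\]
with $C_n\to 0$ and $K_n$ bounded in probability. Since $W^{(n)}$ is piecewise constant on the time grid, the discrete (equivalently, the continuous) Gronwall lemma immediately gives
\[
W^{(n)}(t_k^{(n)}) \;\leq\; (C_n + K_n\mathcal{N}_n)\bigl(1 + T e^{TK_n}\bigr),
\]
which is the claimed bound after absorbing $K_n$ suitably. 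The main obstacle is bookkeeping: one must verify that the various $C^{(n),i}$ error factors appearing in Lemmata \ref{ZnB} and \ref{x2}—which are either $o_\p(1)$ uniformly in $t\leq T$ or, as in Lemma \ref{x2}, $o_\p(\|h\|_{H^2})$ uniformly in $k,u$—remain controllable after the substitution step, so that $K_n$ really stays bounded in probability rather than blowing up through the coupling term $\widetilde{K}_n^{(2)}\widetilde{K}_n^{(1)}$.
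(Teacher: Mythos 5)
Your proposal is correct and follows essentially the same route as the paper: bound $|Z^{(n),B}_k|$ via Lemma \ref{ZnB}, bound $|Z^{(n),Y}_k|$ by testing the representation (\ref{Znu}) against $\varphi=h$ using the uniform bounds from Assumptions \ref{pxx}, \ref{fxx} and Theorem \ref{LLN}, add the two (absorbing the $|Z^{(n),B}_k|$ coupling term), and close with Gronwall. The only cosmetic difference is that the paper explicitly invokes Lemma \ref{discretization} to replace the continuous-time integrals by discrete sums before applying the discrete Gronwall inequality, a bookkeeping step your "piecewise constant" remark glosses over but which only contributes another $o_\p(1)$ term.
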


\begin{proof} 
Throughout the proof we denote by $C_n$ a generic positive random variable, not depending on $\varphi$ and converging to zero in probability, which may vary from line to line. 
From the above representation we have for any $\varphi\in H^3$ and $k\leq T_n$ the estimate
\begin{eqnarray*}
\left|\left\langle Z^{(n),u}_k,\varphi\right\rangle\right|
&\leq& \left\Vert\varphi\right\Vert_{H^2}\left[C_n+\left(\sup_{b,y}\left\Vert f_b(b,y)\right\Vert_{L^2}+\sup_{b,y}\left\Vert f(b,y)\right\Vert_{L^2}+C_n\right)\int_0^{t_k^{(n)}} \left|Z^{(n),B}(u)\right|du\right.\\
&&\left.+\left(\sup_{b,y}\left\Vert f_y(b,y)\right\Vert_{L^2}+C_n\right)\int_0^{t_k^{(n)}}\left| Z^{(n),Y}(u)\right|du+\left(\sup_{t\in[0,T]}\left\Vert u(t)\right\Vert_{L^2}+C_n\right)\left|Z^{(n),B}_k\right|\right]\\
&\leq&\left\Vert\varphi\right\Vert_{H^2}\left(C_n+k_n\int_0^{t_k^{(n)}} \left|Z^{(n),B}(u)\right|du+k_n\int_0^{t_k^{(n)}}\left| Z^{(n),Y}(u)\right|du+k_n\left|Z^{(n),B}_k\right|\right),
\end{eqnarray*}
where $k_n$ is converging in probability to some constant $k\in\R_+$. Now choosing $\varphi=h$ we have with $\tilde{K}_n:=\left\Vert\varphi\right\Vert_{H^2}k_n$ for any $k\leq T_n$ the estimate
\[\left|Z^{(n),Y}_k\right|
\leq C_n+\tilde{K}_n\int_0^{t_k^{(n)}} \left|Z^{(n),B}(u)\right|du+\tilde{K}_n\int_0^{t_k^{(n)}}\left| Z^{(n),Y}(u)\right|du+\tilde{K}_n\left|Z^{(n),B}_k\right|.\]
Similarly, Lemma \ref{ZnB} implies that\small
\begin{eqnarray*}
\left|Z^{(n),B}_k\right|\leq C_n+\left|N^{(n)}_k\right|+\left(\sup_{b,y}\left|p_b(b,y)\right|+C_n\right)\int_0^{t^{(n)}_k}\left|Z^{(n),B}(u)\right|du+\left(\sup_{b,y}\left|p_y(b,y)\right|+C_n\right)\int_0^{t^{(n)}_k}\left|Z^{(n),Y}(u)\right|du.
\end{eqnarray*}\normalsize
Applying Lemma \ref{discretization} to the right hand side of the last two inequalities shows that there exists a constant $K\in\R_+$ such that for all $n\in\N$ and $k\leq T_n$,
\begin{eqnarray*}
\left|Z^{(n),B}_k\right|+\left|Z^{(n),Y}_k\right|\leq C_n+\left(C_n+K\right)\left|N_k^{(n)}\right|+\left(C_n+K\right)\Delta t^{(n)}\sum_{j=0}^{k-1}\left(\left|Z^{(n),B}_j\right|+\left|Z^{(n),Y}_j\right|\right).
\end{eqnarray*}
Now Gronwall's lemma implies that there exist $K_n,\ n\in\N,$ converging to $K$ in probability such that for all $n\in\N$ and all $k\leq T_n$,
\begin{eqnarray*}
\left|Z^{(n),B}_k\right|+\left|Z^{(n),Y}_k\right|\leq C_n+K_n\left|N_k^{(n)}\right|+K_ne^{TK_n}\Delta t^{(n)}\sum_{j=0}^{k-1}\left|N_j^{(n)}\right|\leq C_n+K_n\left(1+Te^{TK_n}\right)\sup_{k\leq T_n}\left|N_k^{(n)}\right|.
\end{eqnarray*}
\end{proof}

Below we will use the bound from Lemma \ref{bound} to prove tightness. For this we set
\[D_n:= C_n+K_n\left(1+Te^{TK_n}\right)\sup_{k\leq T_n}\left|N_k^{(n)}\right|\]
and note that, since each $N^{(n)}$ is a martingale, Doob's maximal inequality implies that $(D_n)_{n\in\N}$ is bounded in probability: for any constant $C>0$,
\begin{eqnarray*}
\p\left(\sup_{k\leq T_n}\left|N_k^{(n)}\right|>C\right)\leq C^{-2}\E\left(N^{(n)}_{T_n}\right)^2\stackrel{\p}{\longrightarrow}\frac{1}{C^2}\int_0^T\sigma^2_B(B_t,Y_t)dt,
\end{eqnarray*}
where the convergence follows from the proof of Lemma \ref{N}.

\begin{thm}\label{tightness}
Under Assumptions \ref{initial}, \ref{M}, \ref{Markov}, \ref{scaling}, \ref{h}, \ref{pxx}, \ref{fxx} and \ref{scaling2}, 
the sequence  $\left(Z^{(n),B},Z^{(n),u}\right)_{n\in\N}$ is tight in $D\left([0,T];\R\times H^{-3}\right)$.
\end{thm}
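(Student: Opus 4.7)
My plan is to prove tightness of each coordinate separately in its respective Skorokhod space, since tightness in $D([0,T];\R\times H^{-3})$ reduces to tightness of the marginals.

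First, I would dispose of $Z^{(n),B}$ in $D([0,T];\R)$ by exploiting the semimartingale-like decomposition of Lemma \ref{ZnB}. That lemma writes $Z^{(n),B}$ as the sum of the martingale $N^{(n)}$, a drift of bounded variation whose Lipschitz-in-time constant is bounded in probability (via Lemma \ref{bound} and the boundedness of $p_b, p_y$ in Assumption \ref{pxx}), and a remainder vanishing uniformly in $k$. Tightness of $N^{(n)}$ is immediate from Lemma \ref{N}, and the drift is Lipschitz-equicontinuous with modulus bounded in probability, so $Z^{(n),B}$ is C-tight.

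For $Z^{(n),u}$ in $D([0,T];H^{-3})$ I would invoke a Mitoma--Jakubowski-type criterion, reducing tightness to verifying both (i) tightness in $D([0,T];\R)$ of the projections $\langle Z^{(n),u},\varphi\rangle$ for $\varphi$ in a countable dense subset of $H^3$, and (ii) a compact containment condition in $H^{-3}$. For (i), representation (\ref{Znu}), Lemma \ref{bound}, Assumption \ref{fxx}, and the boundedness of $u(t)$ from Theorem \ref{LLN} display $\langle Z^{(n),u}_k,\varphi\rangle$ as the sum of a time-Lipschitz process (with Lipschitz constant bounded by $C\|\varphi\|_{H^3}(1+D_n)$), the term $-Z^{(n),B}_k\langle u(t),\varphi'\rangle$ whose tightness is inherited from the first coordinate, and uniformly vanishing errors. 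For (ii), taking the supremum over $\varphi$ in the unit ball of $H^2$ in (\ref{Znu}) yields a stochastic uniform bound $\sup_{t\leq T}\|Z^{(n),u}(t)\|_{H^{-2}}\leq C(1+D_n)$; combined with the almost sure containment of the supports of $u^{(n)}(t,\cdot)$ and $u(t,\cdot)$ in a common compact interval $[-M',M']$ --- which follows from Assumption \ref{initial}, the placement bound in Assumption \ref{M}, and the uniform control on $B^{(n)}$ from Theorem \ref{LLN} --- the Rellich--Kondrachov theorem on the bounded domain $[-M',M']$ gives relative compactness in $H^{-3}$.

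The main obstacle will be the compact containment in (ii): since $H^{-2}(\R)\hookrightarrow H^{-3}(\R)$ is not compact on the full real line, the argument must exploit the almost sure compact support of the underlying densities, which requires bookkeeping on how many price shifts $\pm\Delta x^{(n)}$ can accumulate over $[0,T]$ (the expected total shift magnitude being of order $T_n\Delta p^{(n)}\cdot\Delta x^{(n)}=\mathcal{O}(1)$ under Assumption \ref{scaling2}, hence a bounded-in-probability total shift). Once (i) and (ii) are verified, combining with the tightness of $Z^{(n),B}$ completes the proof of tightness in the product space.
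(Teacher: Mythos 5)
Your proposal is correct in substance but organizes the argument differently from the paper, most visibly in how the infinite-dimensional component is handled. The paper does not prove tightness of the two coordinates separately: it establishes tightness of the \emph{pair} $\left(Z^{(n),B},\langle Z^{(n),u},\varphi\rangle\right)$ in $D([0,T];\R\times\R)$ via Aldous' criterion (using exactly the ingredients you list: the representation (\ref{Znu}), the a priori bound of Lemma \ref{bound} through $D_n$, the Lipschitz continuity of $B$, and the second-moment control of increments of $N^{(n)}$), then invokes Mitoma's theorem to get tightness in $D([0,T];\R\times\mathcal{E}')$, and finally upgrades to $D([0,T];\R\times H^{-3})$ by Corollary 6.16 in \cite{Walsh} using only the uniform bound $\left|\langle Z^{(n),u}_k,\varphi\rangle\right|\leq\left(C_n+(T+1)k_nD_n\right)\Vert\varphi\Vert_{H^2}$. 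Your route replaces the last step by an explicit compact containment condition in $H^{-3}$, built from the same $H^{-2}$-type bound together with the (bounded-in-probability) compact supports of $u^{(n)}(t,\cdot)$ and $u(t,\cdot)$ and Rellich--Kondrachov. This is workable --- the total price displacement is indeed $\mathcal{O}_\p(1)$ since $\E\sup_{k\le T_n}\left|B^{(n)}_k-B^{(n)}_0\right|\le 2T_n\Delta p^{(n)}\Delta x^{(n)}=\mathcal{O}(1)$ under Assumption \ref{scaling2}, and distributions with support in a fixed compact set and bounded $H^{-2}$-norm are relatively compact in $H^{-3}(\R)$ --- but it is strictly more bookkeeping than the paper needs: Walsh's corollary lets one bypass the support analysis entirely, which is precisely why the paper works with the $\Vert\varphi\Vert_{H^2}$-bound and never discusses supports. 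Conversely, your argument is more self-contained and makes the mechanism of compactness transparent.

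Two points you should make explicit. First, tightness in $D([0,T];\R\times H^{-3})$ does \emph{not} reduce to tightness of the marginals in general, because the Skorokhod topology on product-valued paths is not the product of the Skorokhod topologies; the reduction is valid here only because $Z^{(n),B}$ is C-tight (its martingale part converges to a continuous process and its drift is Lipschitz-equicontinuous), and C-tightness of one component plus tightness of the other does yield joint tightness. The paper sidesteps this by treating the pair jointly in the Aldous step. Second, when you call the integral terms in (\ref{Znu}) ``time-Lipschitz,'' note that the integrands depend on the upper index $k$ through $\varphi(\cdot+B_u-B_{t_k^{(n)}})$, so an increment in $k$ changes the integrand over all of $[0,t_k^{(n)}]$; this contribution is controlled by $\left|B_{t_{k_n(t+\delta)}^{(n)}}-B_{t_{k_n(t)}^{(n)}}\right|\cdot\Vert\varphi\Vert_{H^2}\cdot D_n$ and the Lipschitz continuity of $B$, which is exactly the delicate part of the paper's increment estimate and should not be elided.
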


\begin{proof}
By Lemma \ref{discretization} we have
\[\sup_{t\in[0,T]}\left\Vert Z^{(n),u}_{\lfloor t/\Delta t^{(n)}\rfloor}-Z^{(n),u}(t)\right\Vert_{L^2}\ra0\qquad\text{and}\qquad \sup_{t\in[0,T]}\left|Z^{(n),B}_{\lfloor t/\Delta t^{(n)}\rfloor}-Z^{(n),B}(t)\right|\ra0.\]
Therefore, it suffices to prove the tightness of the discrete processes \mbox{$\left(Z^{(n),B}_k,Z^{(n),u}_k\right)_{k\leq T_n},\ n\in\N$,} considered as piecewise constant processes in $D\left([0,T];\R\times H^{-3}\right)$. We will first show tightness of $\left(Z^{(n),B},\langle Z^{(n),u},\varphi\rangle\right)_{n\in\N}$ in $D([0,T];\R\times\R)$ for any $\varphi\in H^3$. This will give tightness of $\left(Z^{(n),B},Z^{(n),u}\right)_{n\in\N}$ in $D([0,T];\R\times\mathcal{E}')$. Afterwards we will show that tightness even holds in $D([0,T];\R\times H^{-3})$.

For $t\geq0$ we define $k_n(t):=\max\{k\in\N:\ k\Delta t^{(n)}\leq t\wedge T\}$. As before we denote by $(C_n)$ a sequence of positive random variables converging to zero in probability, which may vary from line to line. Similarly, $C$ denotes a generic positive constant varying from line to line. To prove tightness we proceed similarly to the proof of Lemma \ref{bound} to get a bound for the fluctuations between time $t+\delta$ and $t$. First we note that Lemmata \ref{ZnB}, \ref{discretization}, and \ref{bound} imply the existence of a constant $K\in\R_+$ such that for all $t\in[0,T)$ and $\delta\in(0,1)$, 
\begin{eqnarray*}
\left|Z_{k_n(t+\delta)}^{(n),B}-Z^{(n),B}_{k_n(t)}\right|&\leq& C_n+
\left|N^{(n)}_{k_n(t+\delta)}-N^{(n)}_{k(t)}\right|+(C+C_n)\Delta t^{(n)}\sum_{j=k_n(t)+1}^{k_n(t+\delta)}\left|Z^{(n),B}_j\right|+\left|Z^{(n),Y}_j\right|\\
&\leq&C_n+
\left|N^{(n)}_{k_n(t+\delta)}-N^{(n)}_{k_n(t)}\right|+(C+C_n)D_n\left(\delta+\Delta t^{(n)}\right).
\end{eqnarray*}
Let $\varphi\in H^3$. To get a similar estimate for the increments of $\langle Z^{(n),u},\varphi\rangle$ we have to be a little bit more careful, because $\langle Z^{(n),u},\varphi\rangle$ is not given in standard semimartingale form in (\ref{Znu}). That is why we need our a priori estimate from Lemma \ref{bound}. We have for all $t\in[0,T)$ and $\delta\in(0,1)$, \small
\begin{align*}
&\left|\left\langle Z^{(n),u}_{k_n(t+\delta)}-Z^{(n),u}_{k_n(t)},\varphi\right\rangle\right|\leq
\left|Z_{k_n(t+\delta)}^{(n),B}\left(\left\langle \varphi',u\left(t_{k_n(t+\delta)}^{(n)}\right)\right\rangle+C^{(n),3}_{k_n(t+\delta)}\right)-Z^{(n),B}_{k_n(t)}\left(\left\langle \varphi',u\left(t_{k_n(t)}^{(n)}\right)\right\rangle+C^{(n),3}_{k_n(t)}\right)\right|\\
&\qquad+\left\Vert\varphi\right\Vert_{H^2}\left(
C_n+k_n\int_{t^{(n)}_{k_n(t)}}^{t_{k_n(t+\delta)}^{(n)}}\left|Z^{(n),B}(u)\right| +\left|Z^{(n),Y}(u)\right|du\right)\\ 
&\qquad+\int_0^{t_{k_n(t)}^{(n)}}\left(\left|\left\langle f_y\left(B_u,Y_u\right),\varphi\left(\cdot+B_u-B_{t_{k_n(t+\delta)}^{(n)}}\right)-\varphi\left(\cdot+B_u-B_{t_{k_n(t)}^{(n)}}\right)\right\rangle\right|+C_n\left\Vert\varphi\right\Vert_{H^2}\right) \left|Z^{(n),Y}(u)\right|du\\ 
&\qquad+\int_0^{t_{k_n(t)}^{(n)}} \left|\left\langle f_b\left(B_u,Y_u\right),\varphi\left(\cdot+B_u-B_{t^{(n)}_{k_n(t+\delta)}}\right)-\varphi\left(\cdot+B_u-B_{t^{(n)}_{k_n(t)}}\right)\right\rangle\right|\left|Z^{(n),B}(u)\right|du\\
&\qquad+\int_0^{t_{k_n(t)}^{(n)}} \left(\left|\left\langle f\left(B_u,Y_u\right),\varphi'\left(\cdot+B_u-B_{t_{k_n(t+\delta)}^{(n)}}\right)-\varphi'\left(\cdot+B_u-B_{t_{k_n(t)}^{(n)}}\right)\right\rangle\right|+C_n\left\Vert\varphi\right\Vert_{H^2}\right)\left|Z^{(n),B}(u)\right|du\\
&\quad\leq D_n\left|\left\langle \varphi',u\left(t_{k_n(t+\delta)}^{(n)}\right)-u\left(t_{k_n(t)}^{(n)}\right)\right\rangle\right|+
D_nC\int_0^{t_{k_n(t)}^{(n)}}\left\Vert \varphi\left(\cdot+B_u-B_{t_{k_n(t+\delta)}^{(n)}}\right)-\varphi\left(\cdot+B_u-B_{t_{k_n(t)}^{(n)}}\right)\right\Vert_{H^1} du\\
&\qquad+\left\Vert\varphi\right\Vert_{H^2}\left[(C+C_n)\left|Z_{k_n(t+\delta)}^{(n),B}-Z^{(n),B}_{k_n(t)}\right|+C_nD_n\left(t^{(n)}_{k_n(t)}+1\right)+C_n +k_nD_n\left(t_{k_n(t+\delta)}^{(n)}-t^{(n)}_{k(t)}\right)\right]\\
&\quad\leq
\left\Vert\varphi\right\Vert_{H^2}\left[
C_n+(C+C_n)D_n\left(\delta +\Delta t^{(n)}\right)+(C+C_n)
\left|N^{(n)}_{k_n(t+\delta)}-N^{(n)}_{k_n(t)}\right|+C_nD_n\left(T+1\right)\right]\\ 
&\qquad+ D_n\left\Vert u\right\Vert_{L^2}\left\Vert \varphi''\right\Vert_{L^2}\left(t_{k_n(t+\delta)}^{(n)}-t^{(n)}_{k_n(t)}\right)+D_nC\left\Vert \int_{B_{t_{k_n(t)}^{(n)}}}^{B_{t_{k_n(t+\delta)}^{(n)}}}\varphi'\left(\cdot-y\right)dy\right\Vert_{H^1} \\
&\quad\leq\left\Vert\varphi\right\Vert_{H^2}\left( o_\p(1)+\mathcal{O}_\p(1)\left(\delta +\Delta t^{(n)}\right)+\mathcal{O}_\p(1)\left|N^{(n)}_{k_n(t+\delta)}-N^{(n)}_{k_n(t)}\right|+\mathcal{O}_\p(1) \left|B_{t_{k_n(t+\delta)}^{(n)}}-B_{t_{k_n(t)}^{(n)}}\right|\right),
\end{align*}\normalsize
where the $o_\p$- and $\mathcal{O}_\p$-terms are uniform in $t$ and $\delta$. Since $B$ is Lipschitz continuous, we get
\begin{eqnarray*}
\left|Z^{(n),B}_{k(t+\delta)}-Z^{(n),B}_{k(t)}\right|+\left|\left\langle Z^{(n),u}_{k(t+\delta)}-Z^{(n),u}_{k(t)},\varphi\right\rangle\right|\qquad\qquad\qquad\qquad\qquad\qquad\qquad\qquad\qquad\\
\qquad\qquad\leq\left(1+\left\Vert\varphi\right\Vert_{H^2}\right)\left(
o_\p(1)+\mathcal{O}_\p(1)\left(\delta +\Delta t^{(n)}\right)+\mathcal{O}_\p(1)\left|N^{(n)}_{k(t+\delta)}-N^{(n)}_{k(t)}\right|\right).
\end{eqnarray*}
Moreover from the proof of Theorem \ref{N} we have for any $\eps>0$,\small
\begin{eqnarray*}
\p\left(\left|N^{(n)}_{k(t+\delta)}-N^{(n)}_{k(t)}\right|>\eps\right)&\leq& \eps^{-2}\cdot\E\left(N^{(n)}_{k(t+\delta)}-N^{(n)}_{k(t)}\right)^2\\
&\leq&\frac{\Delta t^{(n)}}{\eps^2}\sum_{k=k(t)+1}^{k(t+\delta)}p^{(n),A}\left(B^{(n)}_{k-1},Y^{(n)}_{k-1}\right)+p^{(n),B}\left(B^{(n)}_{k-1},Y^{(n)}_{k-1}\right)
\leq \frac{C\left(\delta+\Delta t^{(n)}\right)}{\eps^2}.
\end{eqnarray*}\normalsize
The above computations remain true if we replace $t$ with a bounded stopping time, possibly depending on $n\in\N$, because all error terms are uniform in time. 
Therefore, for any sequence of bounded $\left(\F^{(n)}_k\right)_k$-stopping times $(\tau_n)$ and constants $\delta_n>0$ with $\delta_n\ra0$ we have
\[
\left|Z^{(n),B}_{k_n(\tau_n+\delta_n)}-Z^{(n),B}_{k(\tau_n)}\right|+\left|\left\langle Z^{(n),u}_{k_n(\tau_n+\delta_n)}-Z^{(n),u}_{k(\tau_n)},\varphi\right\rangle\right|\stackrel{\p}{\longrightarrow}0.
\]
Hence, by Aldous' criterion (cf.~Theorem 6.8 in \cite{Walsh}) the sequence $\left(Z^{(n),B},\langle Z^{(n),u},\varphi\rangle\right)_{n\in\N}$ is tight as a sequence in $D\left([0,T];\R\times\R\right)$. Since $\varphi\in H^3\subset \mathcal{E}$ was arbitrary, this already proves tightness of $\left(Z^{(n),B},Z^{(n),u}\right)_{n\in\N}$ in $D\left([0,T];\R\times\mathcal{E'}\right)$ by Mitoma's theorem (cf.~\cite{Walsh}, Theorem 6.13).

Finally, tightness in $D\left([0,T];\R\times H^{-3}\right)$ follows from Corollary 6.16 in \cite{Walsh} and the estimate
\begin{eqnarray*}
\left|\left\langle Z^{(n),u}_k,\varphi\right\rangle\right|
&\leq&\left\Vert\varphi\right\Vert_{H^2}\left(C_n+k_n\int_0^{t_k^{(n)}} \left|Z^{(n),B}(u)\right|du+k_n\int_0^{t_k^{(n)}}\left| Z^{(n),Y}(u)\right|du+k_n\left|Z^{(n),B}_k\right|\right)\\
&\leq&\left(C_n+(T+1)k_nD_n\right)\left\Vert\varphi\right\Vert_{H^2}.
\end{eqnarray*}
\end{proof}

By choosing $\varphi=h\in H^3$, Theorem \ref{tightness} implies that $\left(Z^{(n),B},Z^{(n),Y}\right), n\in\N,$ is tight in $D([0,T];\R\times\R)$. From Lemma \ref{ZnB} and equation (\ref{Znu}) with $\varphi=h$ we deduce that the limit $\left(Z^B,Z^Y\right)$ of any convergent subsequence satisfies
\begin{equation}\label{ZBY}
\begin{split}
Z^B_t=&\int_0^tp_b(B_s,Y_s)Z_s^Bds+\int_0^tp_y(B_s,Y_s)Z_s^Yds+\int_0^t\sigma_B(B_s,Y_s)dW^B_s,\\
Z^Y_t=&\int_0^t\left\langle f_b(B_s,Y_s),h(\cdot+B_s-B_t)\right\rangle Z_s^Bds+\int_0^t\left\langle f_y(B_s,Y_s),h(\cdot+B_s-B_t)\right\rangle Z_s^Yds\\
&+\int_0^t\left\langle f(B_s,Y_s),h'(\cdot+B_s-B_t)\right\rangle Z_s^Bds-Z_t^B\langle u(t),h'\rangle,\quad t\in[0,T].
\end{split}
\end{equation}
Clearly, the system (\ref{ZBY}) has a unique strong solution by the usual Gronwall argument. This gives weak convergence of $\left(Z^{(n),B},Z^{(n),Y}\right)$ to $\left(Z^B,Z^Y\right)$. Especially, the sequence $\left(Z^{(n),B},Z^{(n),Y}\right)_{n\in\N}$ is C-tight, which implies joint tightness of $\left(Z^{(n),B},Z^{(n),Y},Z^{(n),u}\right)$ in $D\left([0,T];\R\times\R\times H^{-3}\right)$. Therefore, (\ref{Znu}) allows us to identify for any $\varphi\in H^3$ the limit of $\langle Z^{(n),u},\varphi\rangle$ uniquely as
\begin{equation}\label{Zu}
\begin{split}
\left\langle Z^u_t,\varphi\right\rangle=&\int_0^t\left\langle f_b(B_s,Y_s),\varphi(\cdot+B_s-B_t)\right\rangle Z_s^Bds+\int_0^t\left\langle f_y(B_s,Y_s),\varphi(\cdot+B_s-B_t)\right\rangle Z_s^Yds\\
&+\int_0^t\left\langle f(B_s,Y_s),\varphi'(\cdot+B_s-B_t)\right\rangle Z_s^Bds-Z_t^B\langle u(t),\varphi'\rangle,\quad t\in[0,T].
\end{split}
\end{equation}

We are now ready to state the second main result of this paper, which describes the limiting dynamics of the joint fluctuations of the price and volume process of the limit order book model around its first order approximation under slow rescaling. The limiting dynamics will be a weak solution to the SPDE (\ref{SPDE2}) as defined in the following theorem. 

\begin{thm}\label{main2}
Under Assumptions \ref{initial}, \ref{M}, \ref{Markov}, \ref{scaling}, \ref{h}, \ref{pxx}, \ref{fxx}, and \ref{scaling2}, we have weak convergence of $Z^{(n)}=\left(Z^{(n),B},Z^{(n),u}\right)$ in $D\left([0,T];\R\times H^{-3}\right)$ to $\left(Z^B,Z^u\right)$, which is the unique solution to the following system: for any $\varphi\in H^3$ we have for all $t\in[0,T|$,\small
\begin{equation}\label{zbu}
\begin{split}
Z^B_t&=\int_0^tp_b(B_s,Y_s)Z^B_sds+\int_0^tp_y(B_s,Y_s)\langle Z^u_s,h\rangle ds+\int_0^t\sigma_B(B_s,Y_s)dW_s,\\
\left\langle Z^u_t,\varphi\right\rangle&=\int_0^t\left\langle\partial_x u(s),\varphi\right\rangle dZ_s^B-\int_0^t\left\langle Z_s^u,\varphi'\right\rangle dB_s+\int_0^t\left\langle f_b(B_s,Y_s),\varphi\right\rangle Z^B_sds+\int_0^t\left\langle f_y(B_s,Y_s),\varphi\right\rangle \langle Z^u_s,h\rangle ds,
\end{split}
\end{equation}\normalsize
where $W$ is a standard Brownian motion and the dynamics of $(B,Y)$ are given in Theorem \ref{LLN}. $(Z^B,Z^u)$ is called a weak solution to (\ref{SPDE2}).
\end{thm}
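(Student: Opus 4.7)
The plan is to combine the tightness from Theorem \ref{tightness} with the shifted representation (\ref{Zu}) already identified for any weak subsequential limit, show that this representation is equivalent to the weak SPDE form (\ref{zbu}), and obtain uniqueness by collapsing (\ref{Zu}) with $\varphi = h$ into the closed two-dimensional system (\ref{ZBY}).

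First, I would invoke Theorem \ref{tightness} and extract from an arbitrary subsequence a further weakly convergent subsequence with limit $(Z^B, Z^u)$ in $D([0,T]; \R \times H^{-3})$. Lemmas \ref{N} and \ref{ZnB}, together with the continuous mapping theorem, yield the SDE for $Z^B$ in the first line of (\ref{zbu}). Passing to the limit in the decomposition (\ref{Znu}), using the error bounds from Lemmas \ref{x0}, \ref{x1}, \ref{x2} and the a-priori control of Lemma \ref{bound}, identifies $\langle Z^u_t, \varphi\rangle$ via the shifted representation (\ref{Zu}) for every $\varphi \in H^3$.

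Next, I would establish the equivalence of (\ref{Zu}) and (\ref{zbu}) by an It\^o-Wentzell-type chain rule. Since $B$ is absolutely continuous with $dB_t = p^{B-A}(B_t,Y_t)\,dt$, the map $s \mapsto \varphi(\cdot + B_s - B_t)$ is pathwise $C^1$ in $s$. Applying the chain rule to $s \mapsto \langle Z^u_s, \varphi(\cdot+B_s-B_t)\rangle$ starting from the dynamics (\ref{zbu}), the transport term $-\int_0^s \langle Z^u_r, \varphi'(\cdot+B_r-B_t)\rangle\,dB_r$ cancels exactly with the drift produced by the moving test function. Integrating by parts in $s$ the residual stochastic integral $\int_0^t \langle \partial_x u(s), \varphi(\cdot+B_s-B_t)\rangle\,dZ^B_s$ against the semimartingale $Z^B$, and evaluating $d_s \langle u(s), \varphi'(\cdot+B_s-B_t)\rangle$ via the ODE for $u$ in Theorem \ref{LLN} (the explicit $\partial_x u$ contribution is annihilated by the drift of $B$), reproduces (\ref{Zu}) verbatim. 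Reversing these manipulations gives the converse implication, so (\ref{Zu}) and (\ref{zbu}) are equivalent formulations; in particular the subsequential limit is a weak solution of (\ref{zbu}).

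For uniqueness, let $(Z^B, Z^u)$ be any weak solution of (\ref{zbu}). The equivalence yields (\ref{Zu}), and the choice $\varphi = h \in H^3$ (Assumption \ref{h}) produces the closed two-dimensional system (\ref{ZBY}) for $(Z^B, Z^Y)$; the coefficients are uniformly Lipschitz in $(B,Y)$ by Assumptions \ref{pxx} and \ref{fxx} and $h \in H^3$, so Gronwall gives a unique strong solution. Feeding $(Z^B, Z^Y)$ back into (\ref{Zu}) determines $\langle Z^u_t, \varphi\rangle$ uniquely for every $\varphi \in H^3$, which pins down $Z^u$ in $D([0,T]; H^{-3})$. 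Uniqueness of the subsequential limit upgrades subsequential convergence to weak convergence of the full sequence. The main technical obstacle is the It\^o-Wentzell step: $Z^B$ is a genuine semimartingale, $Z^u$ lives only in $H^{-3}$, and the test function $\varphi(\cdot+B_s-B_t)$ couples the integration variable $s$ with the outer parameter $t$ through the random shift, so the cancellation of transport terms and the subsequent integration by parts against $dZ^B_s$ must be justified at the level of $H^{-3}$-valued distributions, relying on the regularity of $u$ from Theorem \ref{LLN}, Assumption \ref{fxx} on $f$, and $\varphi \in H^3$.
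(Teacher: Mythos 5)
Your proposal is correct and follows essentially the same route as the paper: tightness from Theorem \ref{tightness}, identification of subsequential limits via the shifted representation (\ref{Zu}) (closing the system with $\varphi=h$ to get (\ref{ZBY}) and Gronwall uniqueness), and an It\^{o}--Wentzell computation exploiting $dB_s=p^{B-A}(B_s,Y_s)\,ds$ and $\partial_t u=f+p^{B-A}\partial_x u$ to show (\ref{Zu}) and (\ref{zbu}) are equivalent. The only cosmetic difference is that you run the equivalence computation from (\ref{zbu}) to (\ref{Zu}) while the paper goes the other way, but both note the manipulations are reversible.
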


\begin{proof}
The result follows directly from Theorem \ref{tightness} and equations (\ref{ZBY}) and (\ref{Zu}), once 
we show that the two representations for $Z^u$ in (\ref{Zu}) and (\ref{zbu}) are equivalent. To this end we have 
for any $\varphi\in H^3$, starting from (\ref{Zu}) and applying It\^{o}-Wentzell formula,\small
\begin{eqnarray*}
\left\langle Z^u_t,\varphi\right\rangle&=&\int_0^t\left\langle f_b(B_s,Y_s),\varphi(\cdot+B_s-B_t)\right\rangle Z_s^Bds+\int_0^t\left\langle f_y(B_s,Y_s),\varphi(\cdot+B_s-B_t)\right\rangle Z_s^Yds\\
&&+\int_0^t\left\langle f(B_s,Y_s),\varphi'(\cdot+B_s-B_t)\right\rangle Z_s^Bds-Z_t^B\langle u(t), \varphi'\rangle\\
&=&\int_0^t\left\langle f_b(B_s,Y_s),\varphi\right\rangle Z_s^Bds-\int_0^t\int_0^s\left\langle f_b(B_r,Y_r),\varphi'(\cdot+B_r-B_s)\right\rangle Z_r^BdrdB_s\\
&&+\int_0^t\left\langle f_y(B_s,Y_s),\varphi\right\rangle Z_s^Yds-\int_0^t\int_0^s\left\langle f_y(B_r,Y_r),\varphi'(\cdot+B_r-B_s)\right\rangle Z_r^YdrdB_s\\
&&+\int_0^t\left\langle f(B_s,Y_s),\varphi'\right\rangle Z_s^Bds-\int_0^t\int_0^s\left\langle f(B_r,Y_r),\varphi''(\cdot+B_r-B_s)\right\rangle Z_r^BdrdB_s\\
&&-\int_0^tZ_s^B\langle \partial_tu(s), \varphi'\rangle ds-\int_0^t\langle u(s), \varphi'\rangle dZ_s^B\\
&=&\int_0^t\left\langle f_b(B_s,Y_s),\varphi\right\rangle Z^B_sds+\int_0^t\left\langle f_y(B_s,Y_s),\varphi\right\rangle Z^Y_sds+\int_0^t\left\langle\partial_x u(s),\varphi\right\rangle dZ_s^B\\
&&+\int_0^t\left\langle f(B_s,Y_s),\varphi'\right\rangle Z_s^Bds-\int_0^t\left[\left\langle Z_s^u,\varphi'\right\rangle+Z_s^B\left\langle u(s),\varphi''\right\rangle\right] dB_s-\int_0^tZ_s^B\langle \partial_tu(s), \varphi'\rangle ds\\
&=&\int_0^t\left\langle f_b(B_s,Y_s),\varphi\right\rangle Z^B_sds+\int_0^t\left\langle f_y(B_s,Y_s),\varphi\right\rangle Z^Y_sds+\int_0^t\left\langle\partial_x u(s),\varphi\right\rangle dZ_s^B\\
&&-\int_0^t\left[\left\langle Z_s^u,\varphi'\right\rangle-Z_s^B\left\langle u_x(s),\varphi'\right\rangle\right] dB_s-\int_0^tZ_s^B\left[p^B(B_s,Y_s)-p^A(B_s,Y_s)\right]\langle \partial_xu(s), \varphi'\rangle ds\\
&=&\int_0^t\left\langle f_b(B_s,Y_s),\varphi\right\rangle Z^B_sds+\int_0^t\left\langle f_y(B_s,Y_s),\varphi\right\rangle Z^Y_sds+\int_0^t\left\langle\partial_x u(s),\varphi\right\rangle dZ_s^B-\int_0^t\left\langle Z_s^u,\varphi'\right\rangle dB_s,
\end{eqnarray*}\normalsize
which equals the dynamics in (\ref{zbu}). Since we could go backwards as well in the above equation chain, this shows that (\ref{Zu}) and (\ref{zbu}) are indeed equivalent. Especially, this implies that there exists a unique solution to (\ref{zbu}), because as argued above there exists a unique solution to (\ref{ZBY}) and hence also to (\ref{Zu}). 
\end{proof}

\begin{rem}
The second order approximation derived in this section allows for more frequent price movements, so that we get a non-degenerate first order approximation for both, volumes and prices. The price to pay is that we cannot use the fast rescaling rate anymore, but have to rescale by the slow $\Delta x^{(n)}$-rate corresponding to the time scale of price changes. Since volume changes appear on a faster time scale, this rules out any fluctuations coming from volume placements respectively cancelations (C-events) in the second order approximation. 
\end{rem}

\s{Conclusion}\label{end}

In this paper we have derived two different second order approximations for the multiscale discrete limit order book dynamics studied in \cite{HK1}, corresponding to a fast and a slow rescaling regime. Both rescaling regimes yield two degenerate SDE-SPDE systems.
The fast rescaling can be applied if price movements are supposed to be sufficiently rare and therefore disappear in the first order approximation. In this case, the second order approximation is driven by the fluctuations of the price {\sl and} the fluctuations of order placements and cancelations. However, the SPDE describing the volume fluctuations in the limit degenerates to an infinite dimensional SDE. By contrast, fluctuations from placements and cancelations disappear under the slow rescaling regime. In this case the second order approximation is driven by the fluctuations of the price process only and hence the volume fluctuations are described by a PDE with random coefficients in the limit, which can be seen as a degenerate SPDE. 

Finally, we notice that the critical case that would at least formally correspond to the case $\alpha=1$ and would potentially lead to a non-degenerate SPDE is ruled out due to Assumption \ref{scaling}. Since $\alpha<1$, we are faced with a discrete {\it multiscale} Markov system. In fact, it was already noted in \cite{multiscale} that central limit theorems for multiscale Markov chains can only be derived under very special conditions, even in the finite dimensional case.

\appendix

\s{A technical estimate for $L^2$-martingales}

\begin{lem}[\cite{Pisier}]\label{Pisier}
There exists a constant $C > 0$ such that for all martingale differences $(X_i)$ with values in $L^2(\R)$, we have
\[\E\left(\sup_{i\geq 1}\left\Vert X_i\right\Vert_{L^2}\right)\leq C\E\left(\sum_{i=1}^\infty\left\Vert X_i-X_{i-1}\right\Vert_{L^2}^2\right)^{1/2}.\]
\end{lem}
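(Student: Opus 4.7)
This is the $L^1$ Burkholder--Davis--Gundy inequality for martingales with values in the Hilbert space $H:=L^2(\R)$: writing $d_i:=X_i-X_{i-1}$ for the differences and $S_\infty:=\bigl(\sum_i\|d_i\|_H^2\bigr)^{1/2}$ for the square function, the claim reads $\E\sup_i\|X_i\|_H\leq C\,\E S_\infty$. The paper attributes the result to Pisier \cite{Pisier}, so strictly speaking the proof amounts to a reference; the sketch below describes the short argument that one would carry out in this Hilbertian setting.

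The starting observation is the Parseval identity $\E\|X_N\|_H^2=\sum_{i\leq N}\E\|d_i\|_H^2=\E S_N^2$, coming from the pairwise orthogonality of the centered differences $d_i$ in $L^2(\p;H)$. Combined with Doob's $L^2$ maximal inequality this gives the $L^2$ form of BDG,
\[\E\sup_i\|X_i\|_H^2\;\leq\;4\,\E S_\infty^2.\]
This bound is not directly useful, however: Cauchy--Schwarz converts it into $\E\sup_i\|X_i\|_H\leq 2\bigl(\E S_\infty^2\bigr)^{1/2}$, and by Jensen's inequality $\bigl(\E S_\infty^2\bigr)^{1/2}\geq\E S_\infty$, so the resulting estimate controls only the $L^2(\p)$-norm of $S_\infty$, which is strictly stronger than the $\E S_\infty$ appearing on the right-hand side of the claim.

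The bridge from the $L^2$-level to the $L^1$-level claim is the Davis decomposition. Split $d_i=d_i'+d_i''$ where $d_i'':=d_i\,\1_{\{\|d_i\|_H>2\max_{j<i}\|d_j\|_H\}}$ isolates the jumps that dominate all previous ones, and $d_i':=d_i-d_i''$ is then predictably bounded in norm by $2\max_{j<i}\|d_j\|_H$. After centering each piece by its conditional expectation given $\F_{i-1}$ to obtain two martingales $X'$, $X''$ with $X=X'+X''$, I would apply the $L^2$-BDG bound above to the ``small jump'' martingale $X'$ to get $\E\sup_i\|X_i'\|_H\leq C\,\E\bigl(\max_i\|d_i\|_H\cdot S_\infty\bigr)^{1/2}$, which via Cauchy--Schwarz plus the pointwise inequality $\max_i\|d_i\|_H\leq S_\infty$ is controlled by $C\,\E S_\infty$. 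For the ``big jump'' martingale $X''$, one uses the crude estimate $\sup_i\|X_i''\|_H\leq\sum_i\|d_i''\|_H$, and the defining telescoping inequality for the large-jump indices yields $\sum_i\|d_i''\|_H\leq C\max_i\|d_i\|_H\leq C\,S_\infty$. Assembling the two pieces gives the claimed inequality.

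The main obstacle, and the reason the lemma deserves its own statement, is precisely this Davis truncation: the Hilbertian orthogonality trick cleanly produces the $L^2$-level bound but cannot by itself reach an $L^1$ bound on the square function. Splitting the differences by magnitude, centering to preserve the martingale property, and bounding each part separately is the standard route from $L^2$-BDG to $L^1$-BDG, and no Banach-space-specific ingredient is needed here because $L^2(\R)$ is itself a Hilbert space; the generality of Pisier's result in \cite{Pisier} extends the same statement to any $2$-smooth Banach space.
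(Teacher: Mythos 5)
The paper offers no proof of this lemma --- it is quoted directly from Pisier --- so the only comparison available is with the standard literature. Your overall route, the Davis decomposition, is indeed the canonical way to pass from the $L^2$-level inequality (orthogonality of increments plus Doob) to the stated $L^1$-level bound, and your treatment of the big-jump part is correct: on the indices where $\|d_i\|_H>2\max_{j<i}\|d_j\|_H$ the running maximum more than doubles, the telescoping bound $\sum_i\|d_i''\|_H\le 2\max_i\|d_i\|_H\le 2S_\infty$ follows, and centering only doubles the constant, so $\E\sup_i\|X_i''\|_H\le C\,\E S_\infty$.

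However, your step for the small-jump martingale contains a genuine error. The intermediate inequality you claim, $\E\sup_i\|X_i'\|_H\le C\,\E\bigl(\max_i\|d_i\|_H\cdot S_\infty\bigr)^{1/2}$, does not follow from the $L^2$ bound and is in fact false: for a real simple random walk with $N$ steps one has $\max_i|d_i|=1$ and $S_\infty=\sqrt N$, so the right-hand side is of order $N^{1/4}$, whereas $\E\sup_{i\le N}|X_i|$ is of order $\sqrt N$ (and the Davis truncation removes only the first increment here, so the same is true of $X'$). A single global application of orthogonality and Doob to $X'$ returns only $2\bigl(\E\sum_i\|e_i'\|_H^2\bigr)^{1/2}\le C\bigl(\E S_\infty^2\bigr)^{1/2}$, i.e.\ exactly the $L^2(\p)$-norm of the square function that you correctly identified as insufficient at the outset. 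The whole point of the truncation is that the increments of $X'$ are \emph{predictably} dominated, $\|e_i'\|_H\le 4\max_{j<i}\|d_j\|_H$, and this gain is harvested level by level through a stopping-time (good-$\lambda$) argument: for a martingale $Y$ with $\|Y_i-Y_{i-1}\|_H\le c_{i-1}$, $c$ predictable and nondecreasing, one stops when the square function or $c$ first exceeds $\lambda$, applies the $L^2$ estimate to the stopped martingale, and integrates over $\lambda$ to obtain $\E\sup_i\|Y_i\|_H\le C\,\E\left[S(Y)+c_\infty\right]$. With that lemma in place your assembly of the two pieces closes the proof; without it the argument has a gap precisely where the passage from $L^2$ to $L^1$ has to happen.
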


\iffalse

\s{Mean value theorem for random arguments}

\begin{lem}\label{meanvalue}
Let $f\in C^1$ and $X,Y$ be random variables on $(\Omega,\F,\p)$. Then there exists an $\F$-measurable random variable $Z$ such that
\[f(X)-f(Y)=Z(X-Y).\]
Moreover, if $X_n\ra X$ in probability, then there exists a sequence of random variables $(Z_n)$ such that
\[f(X_n)-f(X)=Z_n(X_n-X)\quad\forall\ n\qquad\text{and}\qquad Z_n\stackrel{\p}{\longrightarrow} f'(X).\]
\end{lem}
 
 \begin{proof}
 The claim follows easily by choosing
 \[Z=\1_{\{X\neq Y\}}\left[\frac{1}{X-Y}\int_Y^Xf'(u)du\right].\]
 \end{proof}
 
 \fi
 
{\small

\bibliographystyle{abbrv}

\bibliography{LOBLit}

\begin{thebibliography}{10}

\bibitem{AbergelJedidi2011}
F.~Abergel and A.~Jedidi.
\newblock A mathematical approach to order book modeling.
\newblock {\em Int. J. Theor. Appl. Finance}, 16(5):1350025, 2013.

\bibitem{Abergel2015}
F.~Abergel and A.~Jedidi.
\newblock Long-time behavior of a {Hawkes} process--based limit order book.
\newblock {\em SIAM J. Financial Math.}, 6(1):1026--1043, 2015.

\bibitem{Alfonsi}
A.~Alfonsi, A.~Fruth, and A.~Schied.
\newblock Optimal execution strategies in limit order books with general shape
  functions.
\newblock {\em Quantitative Finance}, 10(2):143--157, 2010.

\bibitem{Bacry2014}
E.~Bacry and J.~F. Muzy.
\newblock Hawkes model for price and trades high-frequency dynamics.
\newblock {\em Quant. Finance}, 14(7):1147--1166, 2014.

\bibitem{BHQ}
C.~Bayer, U.~Horst, and J.~Qiu.
\newblock A functional limit theorem for limit order books with state dependent
  price dynamics.
\newblock {\em Ann. Appl. Probab.}, 27(5):2753--2806, 2017.

\bibitem{Cont3}
R.~{Cont} and A.~{de Larrard}.
\newblock {Order book dynamics in liquid markets: limit theorems and diffusion
  approximations}.
\newblock ArXiv e-print 1202.6412v1, 2012.

\bibitem{Cont1}
R.~{Cont} and A.~{de Larrard}.
\newblock {Price dynamics in a Markovian limit order market.}
\newblock {\em {SIAM J. Financ. Math.}}, 4(1):1--25, 2013.

\bibitem{Farmer}
J.~Farmer, L.~Gillemot, F.~Lillo, S.~Mike, and A.~Sen.
\newblock What really causes large price changes?
\newblock {\em Quantitative Finance}, 4(4):383--397, 2004.

\bibitem{Gao}
X.~{Gao} and S.~J. {Deng}.
\newblock {Hydrodynamic limit of order book dynamics}.
\newblock {\em Probability in the Engineering and Informational Sciences},
  pages 1–--30, 2016.

\bibitem{Guo}
X.~{Guo}, Z.~{Ruan}, and L.~{Zhu}.
\newblock {Dynamics of order positions and related queues in a limit order
  book}.
\newblock ArXiv e-print 1505.04810v2, 2015.

\bibitem{HK2}
U.~Horst and D.~Kreher.
\newblock {A diffusion approximation for limit order book models}.
\newblock ArXiv e-print 1608.01795v3, 2017.

\bibitem{HK1}
U.~Horst and D.~Kreher.
\newblock A weak law of large numbers for a limit order book model with state
  dependent order dynamics.
\newblock {\em SIAM J. Fin. Math.}, 8:314—--343, 2017.

\bibitem{HP}
U.~Horst and M.~Paulsen.
\newblock A law of large numbers for limit order books.
\newblock {\em Math.~Oper.~Res.}, 42(4):1280--1312, 2017.

\bibitem{HW}
U.~Horst and X.~Wei.
\newblock A scaling limit for limit order books driven by {Hawkes} random
  measures.
\newblock ArXiv e-print 1709.01292, 2017.

\bibitem{Rosenbaum}
W.~{Huang} and M.~Rosenbaum.
\newblock {Ergodicity and diffusivity of Markovian order book models: a general
  framework}.
\newblock {\em SIAM J. Fin. Math.}, 8(1):874--900, 2017.

\bibitem{JS}
J.~Jacod and A.~Shiryaev.
\newblock {\em Limit Theorems for Stochastic Processes}.
\newblock Grundlehren der mathematischen Wissenschaften. Springer Berlin
  Heidelberg, 2nd edition, 2002.

\bibitem{Rosenbaum_Hawkes}
T.~Jaisson and M.~Rosenbaum.
\newblock Limit theorems for nearly unstable {Hawkes} processes.
\newblock {\em Ann. Appl. Probab.}, 25(2):600--631, 2015.

\bibitem{multiscale}
H.-W. Kang, T.~G. Kurtz, and L.~Popovic.
\newblock Central limit theorems and diffusion approximations for multiscale
  markov chain models.
\newblock {\em Ann. Appl. Probab.}, 24(2):721--759, 04 2014.

\bibitem{Marvin}
M.~Keller-Ressel and M.~M\"uller.
\newblock A {Stefan}-type stochastic moving boundary problem.
\newblock {\em Stoch. PDE: Anal. Comp.}, 4(4):746–--790, 2016.

\bibitem{Lakner2}
P.~Lakner, J.~Reed, and F.~Simatos.
\newblock {Scaling limit of a limit order book model via the regenerative
  characterization of L\'{e}vy trees.}
\newblock {\em Stochastic Systems}, 7(2):342--373, 2017.

\bibitem{Pisier}
G.~{Pisier}.
\newblock {Probabilistic methods in the geometry of Banach spaces.}
\newblock {Probability and analysis, Lect. Sess. C.I.M.E., Varenna/Italy 1985,
  Lect. Notes Math. 1206, 167-241}, 1986.

\bibitem{Walsh}
J.~B. Walsh.
\newblock An introduction to stochastic partial differential equations.
\newblock In: {\it {\`E}cole d'{\'E}t{\'e} de Probabilit{\'e}s de Saint Flour
  XIV - 1984}, pages 265--439, Springer Berlin Heidelberg, 1986.

\bibitem{Zheng2014}
B.~Zheng, F.~Roueff, and F.~Abergel.
\newblock Modelling bid and ask prices using constrained {Hawkes} processes:
  Ergodicity and scaling limit.
\newblock {\em SIAM J. Financial Math.}, 5(1):99--136, 2014.

\bibitem{ZhengZ}
Z.~Zheng.
\newblock Stochastic stefan problems: Existence, uniqueness and modeling of
  market limit orders.
\newblock PhD thesis, Graduate College of the University of Illinois at
  Urbana-Champaign, 2012.

\end{thebibliography}
}

\end{document}